\let\mathcur\mathscr
\let\qTree=\Tree
\let\Tree=\undefined
\theoremstyle{plain}
\newtheorem{proposition}[theorem]{Proposition}
\newcommand{\comment}[1]{}
\newcommand{\capp}{\textsf{CAP}}
\newcommand{\ppc}{\textsf{PPC}}
\newcommand{\ie}{{\em i.e.~}}
\newcommand{\eg}{{\em e.g.~}}
\newcommand{\Eg}{{\em E.g.~}}
\newcommand{\cf}{{\em cf.~}}
\newcommand{\coloneq}{\ensuremath{:=}}
\newcommand{\Coloneq}{\ensuremath{:\coloneq}}
\newcommand{\llbrace}{\{\!\!\{}
\newcommand{\rrbrace}{\}\!\!\}}
\newcommand{\varrow}{\ensuremath{\mathrel{\resizebox{1.6ex}{0.8ex}{\tikz[baseline=-.6ex]{\path[|->](0,0) edge (0.2,0);}}}}}
\newcommand{\pair}[2]{\ensuremath{\langle{#1},{#2}\rangle}}
\newcommand{\qqquad}{\qquad\quad}
\newcommand{\qqqquad}{\qquad\qquad}
\newcommand{\qqqqquad}{\qqquad\qquad}
\newcommand{\qqqqqquad}{\qqquad\qqquad}
\newcommand{\mathLarger}[1]{\ensuremath{\mathlarger{\mathlarger{#1}}}}
\newcommand{\sequP}[2]{\ensuremath{{#1}\vdash_{\mathsf{p}}{#2}}}
\newcommand{\sequT}[2]{\ensuremath{{#1}\vdash{#2}}}
\newcommand{\sequPDeriv}[2]{\ensuremath{\mathrel{\rhd}\sequP{#1}{#2}}}
\newcommand{\sequTDeriv}[2]{\ensuremath{\mathrel{\rhd}\sequP{#1}{#2}}}
\newcommand{\sequC}[2]{\ensuremath{#2}}
\newcommand{\sequTE}[2]{\ensuremath{{#1}\vdash{#2}}}
\newcommand{\sequSD}[2]{\ensuremath{{#1}\varrow{#2}}}
\newcommand{\emphdef}[1]{\textbf{{#1}}}
\newcommand{\eqdef}{\ensuremath{\triangleq}}
\newcommand{\Rule}[3]{
    \prooftree
         {#1}
    \justifies  
         {#2}
    \thickness=0.05em
    \using
         {#3}
    \endprooftree}
\newcommand{\RuleCo}[3]{
    \prooftree
         {#1}
    \Justifies  
         {#2}
    \thickness=0.1em
    \using
         {#3}
    \endprooftree}
\newcommand{\ruleName}[1]{\ensuremath{\textsc{({#1})}}}
\newcommand{\ruleBeta}{\ruleName{$\beta$}}
\newcommand{\rulePComp}{\ruleName{p-comp}}
\newcommand{\rulePConst}{\ruleName{p-const}}
\newcommand{\rulePMatch}{\ruleName{p-match}}
\newcommand{\ruleTAbs}{\ruleName{t-abs}}
\newcommand{\ruleTApp}{\ruleName{t-app}}
\newcommand{\ruleTComp}{\ruleName{t-comp}}
\newcommand{\ruleTConst}{\ruleName{t-const}}
\newcommand{\ruleTSubs}{\ruleName{t-subs}}
\newcommand{\ruleTVar}{\ruleName{t-var}}
\newcommand{\ruleTalAbs}{\ruleName{t-abs-al}}
\newcommand{\ruleTalApp}{\ruleName{t-app-al}}
\newcommand{\ruleTalComp}{\ruleName{t-comp-al}}
\newcommand{\ruleTalConst}{\ruleName{t-const-al}}
\newcommand{\ruleTalVar}{\ruleName{t-var-al}}
\newcommand{\ruleEqmuComp}{\ruleName{e-comp}}
\newcommand{\ruleEqmuContr}{\ruleName{e-contr}}
\newcommand{\ruleEqmuFold}{\ruleName{e-fold}}
\newcommand{\ruleEqmuFunc}{\ruleName{e-func}}
\newcommand{\ruleEqmuRec}{\ruleName{e-rec}}
\newcommand{\ruleEqmuRefl}{\ruleName{e-refl}}
\newcommand{\ruleEqmuSymm}{\ruleName{e-symm}}
\newcommand{\ruleEqmuTrans}{\ruleName{e-trans}}
\newcommand{\ruleEqmuUnion}{\ruleName{e-union}}
\newcommand{\ruleEqmuUnionAssoc}{\ruleName{e-union-assoc}}
\newcommand{\ruleEqmuUnionComm}{\ruleName{e-union-comm}}
\newcommand{\ruleEqmuUnionIdem}{\ruleName{e-union-idem}}
\newcommand{\ruleEqcoComp}{\ruleName{e-comp-t}}
\newcommand{\ruleEqcoFunc}{\ruleName{e-func-t}}
\newcommand{\ruleEqcoRefl}{\ruleName{e-refl-t}}
\newcommand{\ruleEqcoUnion}{\ruleName{e-union-t}}
\newcommand{\ruleEqalComp}{\ruleName{e-comp-al}}
\newcommand{\ruleEqalFunc}{\ruleName{e-func-al}}
\newcommand{\ruleEqalRecL}{\ruleName{e-rec-l-al}}
\newcommand{\ruleEqalRecR}{\ruleName{e-rec-r-al}}
\newcommand{\ruleEqalRefl}{\ruleName{e-refl-al}}
\newcommand{\ruleEqalUnion}{\ruleName{e-union-al}}
\newcommand{\ruleSubmuComp}{\ruleName{s-comp}}
\newcommand{\ruleSubmuEq}{\ruleName{s-eq}}
\newcommand{\ruleSubmuFunc}{\ruleName{s-func}}
\newcommand{\ruleSubmuHyp}{\ruleName{s-hyp}}
\newcommand{\ruleSubmuRec}{\ruleName{s-rec}}
\newcommand{\ruleSubmuRefl}{\ruleName{s-refl}}
\newcommand{\ruleSubmuTrans}{\ruleName{s-trans}}
\newcommand{\ruleSubmuUnionL}{\ruleName{s-union-l}}
\newcommand{\ruleSubmuUnionRL}{\ruleName{s-union-r1}}
\newcommand{\ruleSubmuUnionRR}{\ruleName{s-union-r2}}
\newcommand{\ruleSubcoComp}{\ruleName{s-comp-t}}
\newcommand{\ruleSubcoFunc}{\ruleName{s-func-t}}
\newcommand{\ruleSubcoRefl}{\ruleName{s-refl-t}}
\newcommand{\ruleSubcoUnion}{\ruleName{s-union-t}}
\newcommand{\ruleSubalComp}{\ruleName{s-comp-al}}
\newcommand{\ruleSubalFunc}{\ruleName{s-func-al}}
\newcommand{\ruleSubalRecL}{\ruleName{s-rec-l-al}}
\newcommand{\ruleSubalRecR}{\ruleName{s-rec-r-al}}
\newcommand{\ruleSubalRefl}{\ruleName{s-refl-al}}
\newcommand{\ruleSubalUnionL}{\ruleName{s-union-l-al}}
\newcommand{\ruleSubalUnionR}{\ruleName{s-union-r-al}}
\newcommand{\set}[1]{\ensuremath{\left\{{#1}\right\}}}
\newcommand{\lista}[1]{\ensuremath{\left[{#1}\right]}}
\newcommand{\Variable}{\ensuremath{{\mathbb{V}}}}
\newcommand{\Constant}{\ensuremath{{\mathbb{C}}}}
\newcommand{\Pattern}{\ensuremath{{\mathbb{P}}}}
\newcommand{\DataStructure}{\ensuremath{{\mathbb{D}}}}
\newcommand{\Term}{\ensuremath{{\mathbb{T}}}}
\newcommand{\MatchableForms}{\ensuremath{{\mathbb{M}}}}
\newcommand{\TypeVariable}{\ensuremath{\mathcal{V}}}
\newcommand{\TypeConstant}{\ensuremath{\mathcal{C}}}
\newcommand{\Type}{\ensuremath{\mathcal{T}}}
\newcommand{\DataTypeVariable}{\ensuremath{\TypeVariable_{D}}}
\newcommand{\DataType}{\ensuremath{\Type_{D}}}
\newcommand{\Tree}{\ensuremath{\mathfrak{T}}}
\newcommand{\powerset}[1]{\ensuremath{\wp\left({#1}\right)}}
\newcommand{\Natural}{\ensuremath{\mathbb{N}}}
\newcommand{\Phieqtypeco}{\ensuremath{\Phi_{\eqtypeco}}}
\newcommand{\Phieqtypeal}{\ensuremath{\Phi_{\eqtypeal}}}
\newcommand{\Phisubtypeco}{\ensuremath{\Phi_{\subtypeco}}}
\newcommand{\Phisubtypeal}{\ensuremath{\Phi_{\subtypeal}}}
\newcommand{\M}{\ensuremath{{\cal M}}}
\renewcommand{\S}{\ensuremath{{\cal S}}}
\newcommand{\R}{\ensuremath{{\cal R}}}
\newcommand{\X}{\ensuremath{{\cal X}}}
\newcommand{\treeFont}[1]{\mathcur{#1}}
\newcommand{\tA}{\ensuremath{\treeFont{A}}}
\newcommand{\tB}{\ensuremath{\treeFont{B}}}
\newcommand{\tD}{\ensuremath{\treeFont{D}}}
\newcommand{\contextFont}[1]{\mathcal{#1}}
\newcommand{\cA}{\ensuremath{\contextFont{A}}}
\newcommand{\cB}{\ensuremath{\contextFont{B}}}
\newcommand{\ctxtlra}{\ensuremath{\mathscr{C}}}
\newcommand{\ctxtlrb}{\ensuremath{\mathscr{D}}}
\newcommand{\project}[2]{\ensuremath{{#2}^{#1}}}
\newcommand{\eraserec}[1]{\ensuremath{{#1}^{\setminus\irectype}}}
\newcommand{\idatatype}{\ensuremath{\mathrel@}}
\newcommand{\ifunctype}{\ensuremath{\supset}}
\newcommand{\iuniontype}{\ensuremath{\oplus}}
\newcommand{\irectype}{\ensuremath{\mu}}
\newcommand{\consttype}[1]{\ensuremath{\mathbb{#1}}}
\newcommand{\datatype}[2]{\ensuremath{{#1}\idatatype{#2}}}
\newcommand{\functype}[2]{\ensuremath{{#1}\ifunctype{#2}}}
\newcommand{\uniontype}[2]{\ensuremath{{#1}\iuniontype{#2}}}
\newcommand{\maxuniontype}[2]{\ensuremath{\mathLarger{\iuniontype}_{\substack{#1}}{#2}}}
\newcommand{\rectype}[2]{\ensuremath{\irectype{#1}.{#2}}}
\newcommand{\subtype}{\ensuremath{\preceq}}
\newcommand{\eqtype}{\ensuremath{\simeq}}
\newcommand{\subtypemu}{\ensuremath{\subtype_{\mu}}}
\newcommand{\eqtypemu}{\ensuremath{\eqtype_{\mu}}}
\newcommand{\subtypeal}{\ensuremath{\subtype_{\vec{\mu}}}}
\newcommand{\eqtypeal}{\ensuremath{\eqtype_{\vec{\mu}}}}
\newcommand{\subtypeco}{\ensuremath{\subtype_{\Tree}}}
\newcommand{\eqtypeco}{\ensuremath{\eqtype_{\Tree}}}
\newcommand{\nil}{\consttype{nil}}
\newcommand{\cons}{\consttype{cons}}
\newcommand{\listType}{\ensuremath{\mathsf{List}_A}}
\newcommand{\bool}{\ensuremath{\mathsf{Bool}}}
\newcommand{\nat}{\ensuremath{\mathsf{Nat}}}
\newcommand{\iappterm}{\ensuremath{\,}}
\newcommand{\idataterm}{\ensuremath{\,}}
\newcommand{\icaseterm}{\ensuremath{\mathrel|}}
\newcommand{\ifuncterm}{\ensuremath{\shortrightarrow}}
\newcommand{\matchable}[1]{\ensuremath{{#1}}}
\newcommand{\constterm}[1]{\ensuremath{\mathtt{#1}}}
\newcommand{\appterm}[2]{\ensuremath{{#1}\iappterm{#2}}}
\newcommand{\dataterm}[2]{\ensuremath{{#1}\idataterm{#2}}}
\newcommand{\absterm}[3]{\ensuremath{{#1}\ifuncterm_{#3}{#2}}}
\newcommand{\ifThenElse}[3]{\texttt{if }{#1}\texttt{ then }{#2}\texttt{ else }{#3}}
\newcommand{\fail}{\ensuremath{\mathtt{fail}}}
\newcommand{\wait}{\ensuremath{\mathtt{wait}}}
\newcommand{\reduce}{\ensuremath{\rightarrow}}
\newcommand{\fv}[1]{\ensuremath{\mathsf{fv}\!\left({#1}\right)}}
\newcommand{\fm}[1]{\ensuremath{\mathsf{fm}\!\left({#1}\right)}}
\newcommand{\dom}[1]{\ensuremath{\mathsf{dom}\left({#1}\right)}}
\newcommand{\rename}[2]{\ensuremath{\left\{{#2}/{#1}\right\}}}
\newcommand{\substitute}[3]{\rename{#1}{#2}{#3}}
\newcommand{\basicmatch}[2]{\ensuremath{\llbrace{#2}/{#1}\rrbrace}}
\newcommand{\match}[2]{\basicmatch{#1}{#2}}
\newcommand{\card}[2]{\ensuremath{\#_{#1}\!\left({#2}\right)}}
\newcommand{\cut}[2]{\ensuremath{{#1}|_{{#2}}}}
\newcommand{\toBTree}[1]{\ensuremath{\llbracket{#1}\rrbracket^{\Tree}}}
\newcommand{\iPsicomp}{\ensuremath{{\cal P}_{\mathsf{comp}}}}
\newcommand{\Psicomp}[2]{\ensuremath{\iPsicomp\!\left({#1},{#2}\right)}}
\newcommand{\compatible}[2]{\ensuremath{{#1}\lll{#2}}}
\newcommand{\matches}[2]{\ensuremath{{#1}\mathbin{\vartriangleleft}{#2}}}
\newcommand{\nmatches}[2]{\ensuremath{{#1}\mathbin{\centernot\vartriangleleft}{#2}}}
\newcommand{\at}[2]{{#1}\ensuremath{|_{#2}}}
\newcommand{\lookup}[2]{{#1}\ensuremath{\|_{#2}}}
\newcommand{\pos}[1]{\ensuremath{\mathsf{pos}\!\left({#1}\right)}}
\newcommand{\cpos}[2]{\ensuremath{\mathsf{mmpos}\!\left({#1},{#2}\right)}}
\newcommand{\mpos}[1]{\ensuremath{\mathsf{maxpos}\!\left({#1}\right)}}
\newcommand{\fTypeCheckName}{\ensuremath{\mathsf{tc}}}
\newcommand{\fTypeCheck}[2]{\ensuremath{\fTypeCheckName({#1},{#2})}}
\newcommand{\fTypeCheckPName}{\ensuremath{\mathsf{tcp}}}
\newcommand{\fTypeCheckP}[2]{\ensuremath{\fTypeCheckPName({#1},{#2})}}
\newcommand{\fEqtypeName}{\ensuremath{\mathsf{eqtype}}}
\newcommand{\fEqtype}[3]{\ensuremath{\fEqtypeName({#1},{#2},{#3})}}
\newcommand{\fSubtypeName}{\ensuremath{\mathsf{subtype}}}
\newcommand{\fSubtype}[3]{\ensuremath{\fSubtypeName({#1},{#2},{#3})}}
\newcommand{\fCompName}{\ensuremath{\mathsf{compatible}}}
\newcommand{\fComp}[2]{\ensuremath{\fCompName({#1},{#2})}}
\newcommand{\fPCompName}{\ensuremath{\mathsf{pcomp}}}
\newcommand{\fPComp}[2]{\ensuremath{\fPCompName({#1},{#2})}}
\newcommand{\fUnfoldName}{\ensuremath{\mathsf{unfold}}}
\newcommand{\fUnfold}[1]{\ensuremath{\fUnfoldName({#1})}}
\newcommand{\fBuildUniverseName}{\ensuremath{\mathsf{buildUniverse}}}
\newcommand{\fBuildUniverse}[1]{\ensuremath{\fBuildUniverseName({#1})}}
\newcommand{\fChildrenName}{\ensuremath{\mathsf{children}}}
\newcommand{\fChildren}[1]{\ensuremath{\fChildrenName({#1})}}
\newcommand{\fGfpName}{\ensuremath{\mathsf{gfp}}}
\newcommand{\fGfp}[2]{\ensuremath{\fGfpName({#1},{#2})}}
\newcommand{\fTakeOneName}{\ensuremath{\mathsf{takeOne}}}
\newcommand{\fTakeOne}[1]{\ensuremath{\fTakeOneName({#1})}}
\newcommand{\fCheckName}{\ensuremath{\mathsf{check}}}
\newcommand{\fCheck}[2]{\ensuremath{\fCheckName({#1},{#2})}}
\newcommand{\fInvalidateName}{\ensuremath{\mathsf{invalidate}}}
\newcommand{\fInvalidate}[4]{\ensuremath{\fInvalidateName({#1},{#2},{#3},{#4})}}
\newcommand{\ttIf}{\ensuremath{\mathrel{\mathtt{if}}}}
\newcommand{\ttThen}{\ensuremath{\mathrel{\mathtt{then}}}}
\newcommand{\ttElse}{\ensuremath{\mathrel{\mathtt{else}}}}
\newcommand{\ttForeach}{\ensuremath{\mathrel{\mathtt{foreach}}}}
\newcommand{\ttLet}{\ensuremath{\mathrel{\mathtt{let}}}}
\newcommand{\ttIn}{\ensuremath{\mathrel{\mathtt{in}}}}
\newcommand{\ttAnd}{\ensuremath{\mathbin{\mathtt{and}}}}
\newcommand{\ttOr}{\ensuremath{\mathbin{\mathtt{or}}}}
\newcommand{\ttNot}{\ensuremath{\mathtt{not}\ }}
\newcommand{\ttFail}{\ensuremath{\mathtt{fail}}}
\newcommand{\ttSeq}{\ensuremath{\mathrel{\mathtt{seq}}}}
\renewcommand{\l}{\lambda}
\newcommand{\TreeN}{\ensuremath{\Tree^\mathfrak{n}}}
\newcommand{\nuniontype}[3]{\ensuremath{\mathLarger{\iuniontype}^{#2}_{#1}{#3}}}
\newcommand{\toNTree}[1]{\ensuremath{\llbracket{#1}\rrbracket^\mathfrak{n}}}
\newcommand{\subtypeup}[1]{\ensuremath{\subtype_{\TreeN}^{#1}}}
\newcommand{\subtypeaci}{\ensuremath{\subtypeup{\varnothing}}}
\newcommand{\Phisubtypeup}[1]{\ensuremath{\Phi_{\subtypeup{#1}}}}
\newcommand{\Phisubtypeaci}{\ensuremath{\Phisubtypeup{\varnothing}}}
\newcommand{\ruleSubupRefl}{\ruleName{s-refl-up}}
\newcommand{\ruleSubupFunc}{\ruleName{s-func-up}}
\newcommand{\ruleSubupComp}{\ruleName{s-comp-up}}
\newcommand{\ruleSubupUnion}{\ruleName{s-union-up}}
\newcommand{\ruleSubupUnionL}{\ruleName{s-union-l-up}}
\newcommand{\ruleSubupUnionR}{\ruleName{s-union-r-up}}
\newcommand{\tT}{\ensuremath{\treeFont{T}}}
\newcommand{\tS}{\ensuremath{\treeFont{S}}}
\newcommand{\ttCase}{\ensuremath{\mathrel{\mathtt{case}}}}
\newcommand{\ttOf}{\ensuremath{\mathrel{\mathtt{of}}}}
\newcommand{\ttReturn}{\ensuremath{\mathrel{\mathtt{return}}}}
\newcommand{\ttWhile}{\ensuremath{\mathrel{\mathtt{while}}}}
\newcommand{\setinsert}[2]{\mathsf{insert}({#1},\ {#2})}
\newcommand{\parents}[1]{\mathsf{parents}({#1})}
\newcommand{\setmove}[3]{\mathsf{move}({#1},\ {#2},\ {#3})}
\newcommand{\ttTrue}{\ensuremath{\mathbin{\mathtt{true}}}}
\newcommand{\triple}[3]{\ensuremath{\langle{#1},{#2},{#3}\rangle}}
\newcommand{\eqtypeup}[1]{\ensuremath{\eqtype_{\TreeN}^{#1}}}
\newcommand{\eqtypeaci}{\ensuremath{\eqtypeup{\varnothing}}}
\newcommand{\Phieqtypeup}[1]{\ensuremath{\Phi_{\eqtypeup{#1}}}}
\newcommand{\Phieqtypeaci}{\ensuremath{\Phieqtypeup{\varnothing}}}
\newcommand{\ruleEqupRefl}{\ruleName{e-refl-up}}
\newcommand{\ruleEqupFunc}{\ruleName{e-func-up}}
\newcommand{\ruleEqupComp}{\ruleName{e-comp-up}}
\newcommand{\ruleEqupUnion}{\ruleName{e-union-up}}
\newcommand{\ruleEqupUnionL}{\ruleName{e-union-l-up}}
\newcommand{\ruleEqupUnionR}{\ruleName{e-union-r-up}}
\renewcommand{\O}{\ensuremath{{\mathcal{O}}}}
\title{Efficient Type Checking for Path Polymorphism}
\author[2]{Juan Edi}
\author[1,2]{Andr\'es Viso}
\author[1,3]{Eduardo Bonelli}
\affil[1]{Consejo Nacional de Investigaciones Cient\'{i}ficas y T\'{e}cnicas -- CONICET}
\affil[2]{Departamento de Computaci\'{o}n \\
  Facultad de Ciencias Exactas y Naturales \\
  Universidad de Buenos Aires -- UBA \\
  Buenos Aires, Argentina}
\affil[3]{Departamento de Ciencia y Tecnolog\'{i}a \\
  Universidad Nacional de Quilmes -- UNQ \\
  Bernal, Argentina}
\authorrunning{J. Edi, A. Viso, and E. Bonelli}
\subjclass{
F4.1 Lambda calculus and related systems,
F.3.2 Semantics of Programming Languages,
D.3.3 Language Constructs and Features.
}
\keywords{$\lambda$-Calculus, Pattern Matching, Path Polymorphism, Type Checking}
\begin{document}

\maketitle

\begin{abstract}
  A type system combining type application, constants as types, union types (associative,
  commutative and idempotent) and recursive types has recently been proposed for statically typing
  \emph{path polymorphism}, the ability to define functions that can operate uniformly over
  recursively specified applicative data structures. A typical pattern such functions resort to is
  $\dataterm{x}{y}$ which decomposes a compound, in other words any applicative tree structure, into
  its parts. We study type-checking for this type system in two stages. First we propose algorithms
  for checking type equivalence and subtyping based on coinductive characterizations of those
  relations. We then formulate a syntax-directed presentation and prove its equivalence with the
  original one. This yields a type-checking algorithm which unfortunately has exponential time
  complexity in the worst case. A second algorithm is then proposed, based on automata techniques,
  which yields a polynomial-time type-checking algorithm.
\end{abstract}

\section{Introduction}
\label{sec:intro}

The \emph{lambda-calculus} plays an important role in the study of
programming languages (PLs). Programs are represented as syntactic terms and
execution by repeated simplification of these terms using a reduction
rule called \emph{$\beta$-reduction}. The study of the lambda-calculus
has produced deep results in both the theory and the implementation of
PLs. Many variants of the lambda-calculus have been introduced with
the purpose of studying specific PL features. One such feature of
interest is \emph{pattern-matching}. Pattern-matching is used extensively in
PLs as a means for writing more succinct and, at the same time,
elegant programs. This is 
particularly so in the functional programming community, but by no means
restricted to that community. 

In the standard lambda-calculus,
functions are represented as expressions of the form $\l x.t$, $x$
being the formal parameter and $t$ the body. Such a function may be
applied to any term, regardless of its form. This is expressed
by the above mentioned $\beta$-reduction rule: $(\l x.t)\,s
\reduce_{\beta} \substitute{x}{s}{t}$, where $\substitute{x}{s}{t}$ stands for the result of
replacing all free occurrences of $x$ in $t$ with $s$. Note that, in
this rule, no requirement on the form of $s$ is placed.
\emph{Pattern calculi} are generalizations of the $\beta$-reduction
rule in which
abstractions $\l x.t$ are replaced by $\l p.t$ where $p$ is called a
\emph{pattern}. An example is $\l\pair{x}{y}.x$ for projecting the first
component of a pair, the pattern $p$ being
  $\pair{x}{y}$. An expression such as $\appterm{(\l\pair{x}{y}.x)}{s}$ will only
  be able to reduce if $s$ indeed is of the form $\pair{s_1}{s_2}$; it
  will otherwise be blocked.

Patterns may be catalogued in at least two
  dimensions. One is their \emph{structure} and another their \emph{time of
  creation}. The structure of patterns may be very general.  
Such is the case of variables: any term can match a variable, as
in the standard lambda-calculus. The structure of a pattern
may also be very specific.  Such is the case when 
arbitrary
terms are allowed to be
patterns~\cite{vanOostrom90,DBLP:journals/tcs/KlopOV08}. Regarding the
time of creation, patterns may either be \emph{static} or
\emph{dynamic}. Static patterns are those that are created at compile time,
such as the pattern $\pair{x}{y}$ mentioned above. Dynamic patterns are those that may be generated at
run-time~\cite{DBLP:journals/jfp/JayK09,DBLP:books/daglib/0023687}. For example,
consider the term $\l x.(\l(\appterm{x}{y}).y)$; note that it has an
occurrence of a pattern $x\, y$ with a free variable, namely the $x$
in $\dataterm{x}{y}$, that is bound to the outermost lambda. If this
term is applied to a constant $\constterm{c}$, then one obtains
$\l\appterm{\constterm{c}}{y}.y$. However, if we apply it to the constant
$\constterm{d}$, then we obtain $\l\appterm{\constterm{d}}{y}.y$. Both
patterns $\appterm{\constterm{c}}{y}$ and $\appterm{\constterm{d}}{y}$
are created during execution. Note that one could also replace the $x$
in the pattern $\dataterm{x}{y}$ with an abstraction. This leads to computations
that evaluate to patterns. 

Expressive pattern features may
easily break desired properties, such as confluence, and are not easy
to endow with type systems.
This work is an attempt at devising type systems for such expressive pattern
calculi. We originally set out to type-check the \emph{Pure Pattern Calculus}
($\ppc$)~\cite{DBLP:journals/jfp/JayK09,DBLP:books/daglib/0023687}. \ppc\ is a
lambda-calculus that embodies the essence of dynamic patterns by
stripping away everything inessential to the reduction and matching
process of dynamic patterns. It admits terms such as $\l
x.(\l(\appterm{x}{y}).y)$. We soon realized that
typing \ppc\ was too challenging and noticed that the static fragment of
$\ppc$, which we dub \emph{Calculus of Applicative Patterns} ($\capp$), was
already challenging in itself.  $\capp$ also admits patterns such as
$\dataterm{\matchable{x}}{\matchable{y}}$ however all variables in this
pattern are considered \emph{bound}. Thus, in a term such as
$\l(\appterm{x}{y}).s$ both occurrences of $x$ and $y$
are bound in  $s$, disallowing reduction inside patterns. Such
patterns, however, allow arguments that are applications to be
decomposed, as long as these applications encode \emph{data
  structures}. They are therefore useful
for writing functions that operate on
\emph{semi-structured data}. 

The main obstacle for typing \capp\ is dealing
in the type system with a form of polymorphism called
\emph{path polymorphism}~\cite{DBLP:journals/jfp/JayK09,DBLP:books/daglib/0023687},
that arises from these kinds of patterns. We next briefly describe
path polymorphism and the requirements it
places on typing considerations.

\textbf{Path Polymorphism.} In $\capp$ data structures are trees.
These trees are built using application
and variable arity constants or constructors. Examples of two such
trees follow, where the first one represents a list and the second
a binary tree: $$
\arraycolsep=1.5pt\def\arraystretch{1}
\begin{array}{c}
\dataterm{\dataterm{\constterm{cons}}{(\dataterm{\constterm{vl}}{1})}}{(\dataterm{\dataterm{\constterm{cons}}{(\dataterm{\constterm{vl}}{2})}}{\constterm{nil}})} \\
\dataterm{\dataterm{\dataterm{\constterm{node}}{(\dataterm{\constterm{vl}}{3})}}{(\dataterm{\dataterm{\dataterm{\constterm{node}}{(\dataterm{\constterm{vl}}{4})}}{\constterm{nil}}}{\constterm{nil}})}}{(\dataterm{\dataterm{\dataterm{\constterm{node}}{(\dataterm{\constterm{vl}}{5})}}{\constterm{nil}}}{\constterm{nil}})}
\end{array} $$
  The constructor \constterm{vl} is used to tag values ($1$ and $2$ in the first
  case, and $3$, $4$ and $5$ in the second). A ``map'' function for
  updating the values of any of these two structures by applying some
  user-supplied function $f$ follows, where type annotations are omitted for
  clarity: 
\begin{equation}
\mathsf{upd} = \absterm{\matchable{f}}{
\arraycolsep=1.5pt\def\arraystretch{1}
\begin{array}[t]{clll}
(          & \dataterm{\constterm{vl}}{\matchable{z}} & \ifuncterm_{} & \dataterm{\constterm{vl}}{(\appterm{f}{z})} \\
\icaseterm & \dataterm{\matchable{x}}{\matchable{y}}  & \ifuncterm_{} & \dataterm{(\appterm{\appterm{\mathsf{upd}}{f}}{x})}{(\appterm{\appterm{\mathsf{upd}}{f}}{y})} \\
\icaseterm & \matchable{w}                            & \ifuncterm_{} & w)
\end{array}
}{}
\label{eq:intro:upd}
\end{equation}
The expression $\appterm{\mathsf{upd}}{(+1)}$ may thus be applied to
any of the two data structures illustrated above. For example, we can evaluate
$\appterm{\appterm{\mathsf{upd}}{(+1)}}{\dataterm{\dataterm{\constterm{cons}}{(\dataterm{\constterm{vl}}{1})}}{(\dataterm{\dataterm{\constterm{cons}}{(\dataterm{\constterm{vl}}{2})}}{\constterm{nil}})}}$
and also 
$\appterm{\appterm{\mathsf{upd}}{(+1)}}{\dataterm{\dataterm{\dataterm{\constterm{node}}{(\dataterm{\constterm{vl}}{3})}}{(\dataterm{\dataterm{\dataterm{\constterm{node}}{(\dataterm{\constterm{vl}}{4})}}{\constterm{nil}}}{\constterm{nil}})}}{(\dataterm{\dataterm{\dataterm{\constterm{node}}{(\dataterm{\constterm{vl}}{5})}}{\constterm{nil}}}{\constterm{nil}})}}$. The expression to
the right of ``='' is called an \emph{abstraction} (or \emph{case}) and
consists of a unique \emph{branch}; this branch in turn is formed from a
pattern (\matchable{f}), and a body (in this case the body is itself another
abstraction that consists of three branches). An  argument to an abstraction is
matched against the patterns, in the order in which they are written, and the
appropriate body is selected. 

Notice the pattern $\dataterm{\matchable{x}}{\matchable{y}}$. During evaluation
of $\appterm{\appterm{\mathsf{upd}}{(+1)}}{\dataterm{\dataterm{\constterm{cons}}{(\dataterm{\constterm{vl}}{1})}}{(\dataterm{\dataterm{\constterm{cons}}{(\dataterm{\constterm{vl}}{2})}}{\constterm{nil}})}}$ the variables $\matchable{x}$ and
$\matchable{y}$ may be instantiated with different applicative terms in each
recursive call to $\mathsf{upd}$. For example:
\begin{center}
\begin{tabular}{l|c|c}
 & $\matchable{x}$ & $\matchable{y}$ \\ 
\hline
$\appterm{\appterm{\mathsf{upd}}{(+1)}}{s}$
  & $\dataterm{\constterm{cons}}{(\dataterm{\constterm{vl}}{1})}$
  & $\dataterm{\dataterm{\constterm{cons}}{(\dataterm{\constterm{vl}}{2})}}{\constterm{nil}}$ \\
$\appterm{\appterm{\mathsf{upd}}{(+1)}}{(\dataterm{\constterm{cons}}{(\dataterm{\constterm{vl}}{1})})}$
  & $\constterm{cons}$
  & $\dataterm{\constterm{vl}}{1}$ \\
$\appterm{\appterm{\mathsf{upd}}{(+1)}}{(\dataterm{\dataterm{\constterm{cons}}{(\dataterm{\constterm{vl}}{2})}}{\constterm{nil}})}$
  & $\dataterm{\constterm{cons}}{(\dataterm{\constterm{vl}}{2})}$
  & $\constterm{nil}$
\end{tabular}
\end{center}
The type assigned to $\matchable{x}$ and $\matchable{y}$ should encompass all
terms in its respective column.

\textbf{Singleton Types and Type Application.} A further
consideration in typing $\capp$ is that terms such as the ones
depicted below should clearly not be typable.
\begin{equation}
\appterm{(\absterm{\constterm{nil}}{0}{})}{\constterm{cons}}
\qqqquad
\appterm{(\absterm{\dataterm{\constterm{vl}}{x}}{x+1}{\set{x:\nat}})}
        {(\dataterm{\constterm{vl}}{\constterm{true}})}
\label{eq:intro:cx}
\end{equation}
In the first case, \constterm{cons}
will never match \constterm{nil}. The type system will resort to
singleton types in order to determine this: \constterm{cons} will be
assigned a type of the form \consttype{cons} which will fail to match
\consttype{nil}. The second expression in (\ref{eq:intro:cx})
breaks \emph{Subject Reduction} (SR): reduction will produce
$\constterm{true}+1$. Applicative types of the form
$\datatype{\consttype{vl}}{\consttype{true}}$ will allow us to check
for these situations, $\datatype{}{}$ being a new type constructor
that applies datatypes to arbitrary types. Also, note the use of
typing environments (the expression $\set{x:\nat}$) to declare the types of the variables of patterns
in branches. These are supplied by the programmer.

\textbf{Union and Recursive Types.} On the assumption that the programmer has provided an exhaustive
coverage, the type assigned by $\capp$ to the variable $\matchable{x}$
in the pattern $x\,y$ in  $\mathsf{upd}$ is:
$$ \rectype{\alpha}{
\uniontype{
 \uniontype{(\datatype{\consttype{vl}}{A})}{(\datatype{\alpha}{\alpha})}
}{(
 \uniontype{\consttype{cons}}{\uniontype{\consttype{node}}{\consttype{nil}}}
)}} $$ Here $\irectype$ is the recursive type constructor and $\iuniontype$ the
union type constructor. $\consttype{vl}$ is the singleton type used for typing the constant
$\constterm{vl}$ and $\idatatype$ denotes type application, as
mentioned above. The union type constructor is used to collect the
types of all the branches. The variable $y$ in the pattern $\dataterm{x}{y}$ will
also be assigned the same type as $x$. Thus variables in applicative patterns are
assigned union types. $\mathsf{upd}$ itself is assigned type
$\functype{(\functype{A}{B})}{(\functype{F_A}{F_B})}$, where $F_X$ is
$\rectype{\alpha}{
  \uniontype{
    \uniontype{(\datatype{\consttype{vl}}{X})}{(\datatype{\alpha}{\alpha})}
  }{(
    \uniontype{\consttype{cons}}{\uniontype{\consttype{node}}{\consttype{nil}}}
  )}
}$.

\textbf{Type-Checking for $\capp$.} Based on these, and other
similar considerations, we proposed \emph{typed $\capp$}~\cite{DBLP:journals/entcs/VisoBA16}, referred
to simply as $\capp$ in the sequel. The system consists of typing rules that
combine singleton types, type application, union types, recursive types
and subtyping. Also it enjoys several properties, the salient one being
safety (subject reduction and progress). Safety relies on a notion of \emph{typed
pattern compatibility} based on subtyping that guarantees that examples such as
(\ref{eq:intro:cx}--right) and the following one do not break safety: 
\begin{equation}
\appterm{
  ((\absterm{\dataterm{\constterm{vl}}{\matchable{x}}}
            {\ifThenElse{x}{1}{0}}
            {\set{x:\bool}})
  \icaseterm
  (\absterm{\dataterm{\constterm{vl}}{\matchable{y}}}
           {y+1}
           {\set{y:\nat}}))
}{(\appterm{\constterm{vl}}{4})}
\label{eq:example:compatib:i}
\end{equation}
Assumptions on associativity and commutativity of typing
operators in~\cite{DBLP:journals/entcs/VisoBA16}, make it non-trivial
to deduce a type-checking algorithm from the typing rules. 
The proposed type system is, moreover, not syntax-directed. Also, it relies
on coinductive notions of type equivalence and subtyping which in the presence
of recursive subtypes are not obviously decidable. A practical implementation
of $\capp$ is instrumental since a robust theoretical analysis without such an
implementation is of little use.

\textbf{Goal and Summary of Contributions.}
  This paper addresses this implementation. It
does so in two stages:
\begin{itemize}
  \item The first stage presents a na\"ive but correct, high-level description
  of a type-checking algorithm, the principal aim being clarity. We propose an
  invertible presentation of the coinductive notions of type-equivalence and
  subtyping of~\cite{DBLP:journals/entcs/VisoBA16} and also a syntax-directed variant of the
  presentation in~\cite{DBLP:journals/entcs/VisoBA16}. This leads to algorithms for checking
  subtyping membership and equivalence modulo associative, commutative and
  idempotent (ACI) unions, both based on an invertible presentation of the
  functional generating the associated coinductive notions.
  
  \item The second stage builds on ideas from the first algorithm with the aim
  of improving efficiency. $\mu$-types are interpreted as infinite $n$-ary
  trees and represented using automata, avoiding having to explicitly handle
  unfoldings of recursive types, and leading to a significant improvement in
  the complexity of the key steps of the type-checking process, namely equality
  and subtype checking.  
\end{itemize}

  
  

\textbf{Related work. }
For literature on (typed) pattern calculi the reader is referred
to~\cite{DBLP:journals/entcs/VisoBA16}. The algorithms for checking equality of recursive
types or subtyping of recursive types have been studied in the 1990s by
Amadio and Cardelli~\cite{DBLP:journals/toplas/AmadioC93}; Kozen, Palsberg, and
Schwartzbach~\cite{DBLP:journals/mscs/KozenPS95}; Brandt and Henglein~\cite{DBLP:conf/tlca/BrandtH97}; Jim and
Palsberg~\cite{Jim97typeinference} among others.
Additionally, Zhao and Palsberg~\cite{DBLP:journals/iandc/PalsbergZ01} studied
the possibilities of incorporating associative and commutative (AC) products to
the equality check, on an automata-based approach that the authors themselves
claimed was not extensible to subtyping~\cite{Zhao:thesis}. Later on Di Cosmo,
Pottier, and R{\'e}my~\cite{DBLP:conf/tlca/CosmoPR05} presented another
automata-based algorithm for subtyping that properly handles AC products with a
complexity cost of $\O(n^2n'^2d^{5/2})$, where $n$ and $n'$ are the sizes of
the analyzed types, and $d$ is a bound on the arity of the involved products.

\textbf{Structure of the paper.} Sec.~\ref{sec:reviewOfCAP} reviews the syntax
and operational semantics of $\capp$, its type system and the main properties.
Further details may be consulted in~\cite{DBLP:journals/entcs/VisoBA16}.
Sec.~\ref{sec:invertibility} proposes invertible generating functions for
coinductive notions of type-equivalence and subtyping that lead to inefficient
but elegant algorithms for checking these relations.
Sec.~\ref{sec:typeChecking} proposes a syntax-directed type system for $\capp$.
Sec.~\ref{sec:efficientTypeChecking} studies a more efficient type-checking
algorithm based on automaton. Finally, we conclude in
Sec.~\ref{sec:conclusions}. An implementation of the algorithms
described here is available online~\cite{EV:2015:Prototipo}.


\section{Review of \capp}
\label{sec:reviewOfCAP}

\subsection{Syntax and Operational Semantics}

We assume given an infinite set of term variables $\Variable$ and constants
$\Constant$. \capp\ has four syntactic categories, namely \emphdef{patterns}
($p, q, \ldots$), \emphdef{terms} ($s, t, \ldots$), \emphdef{data structures}
($d, e, \ldots$) and \emphdef{matchable forms} ($m, n, \ldots$): $$
\begin{array}{l@{\qquad}l}
\begin{array}[t]{rlll}
p & \Coloneq & \matchable{x}   & \text{(matchable)} \\
  & |        & \constterm{c}   & \text{(constant)} \\
  & |        & \dataterm{p}{p} & \text{(compound)} \\
\end{array}
&
\begin{array}[t]{rlll}
t & \Coloneq  & x                                                                         & \text{(variable)} \\
\hphantom{m}
  & |         & \constterm{c}                                                             & \text{(constant)} \\
  & |         & \appterm{t}{t}                                                            & \text{(application)} \\
  & |         & \absterm{p}{t}{\theta} \icaseterm \dots \icaseterm \absterm{p}{t}{\theta} & \text{(abstraction)}
\end{array}
\\
\\
\begin{array}[t]{rlll}
d & \Coloneq & \constterm{c}   & \text{(constant)} \\
  & |        & \dataterm{d}{t} & \text{(compound)} 
\end{array}
&
\begin{array}[t]{rlll}
m & \Coloneq & d                                                                         & \text{(data structure)} \\
  & |        & \absterm{p}{t}{\theta} \icaseterm \dots \icaseterm \absterm{p}{t}{\theta} & \text{(abstraction)}
\end{array}
\end{array} $$

The set of patterns, terms, data structures and matchable forms are denoted
$\Pattern$, $\Term$, $\DataStructure$ and $\MatchableForms$, resp. Variables
occurring in patterns are called \emphdef{matchables}. We often abbreviate
$\absterm{p_1}{s_1}{\theta_1}\icaseterm\ldots\icaseterm\absterm{p_n}{s_n}{\theta_n}$
with $(\absterm{p_i}{s_i}{\theta_i})_{i \in 1..n}$. The $\theta_i$ are typing
contexts annotating the type assignments for the variables in $p_i$ (\cf
Sec.~\ref{sec:typinsSchemes}). The \emphdef{free variables} of a term $t$
(notation $\fv{t}$) are defined as expected; in a pattern $p$ we call them
\emphdef{free matchables} ($\fm{p}$). All free matchables in each $p_i$ are
assumed to be bound in their respective bodies $s_i$. Positions in patterns and
terms are defined as expected and denoted $\pi,\pi',\ldots$ ($\epsilon$ denotes
the root position). We write \pos{s} for the set of positions of $s$ and
$\at{s}{\pi}$ for the subterm of $s$ occurring at position $\pi$.

A \emphdef{substitution} ($\sigma, \sigma_i, \ldots$) is a partial function
from term variables to terms. If it assigns $u_i$ to $x_i$, $i\in 1..n$, then
we write $\{u_1/x_1,\ldots,u_n/x_n\}$. Its domain (\dom{\sigma}) is
$\set{x_1,\ldots,x_n}$. Also, $\set{}$ is the identity substitution. We write
$\sigma s$ for the result of applying $\sigma$ to term $s$. We say a pattern
$p$ \emphdef{subsumes} a pattern $q$, written $\matches{p}{q}$ if there exists
$\sigma$ such that $\sigma p = q$.  \emphdef{Matchable forms} are required for
defining the \emphdef{matching operation}, described next.

Given a pattern $p$ and a term $s$, the matching operation $\basicmatch{p}{s}$
determines whether $s$ matches $p$. It may have one of three outcomes: success,
fail (in which case it returns the special symbol $\fail$) or undetermined (in
which case it returns the special symbol $\wait$). We say $\basicmatch{p}{s}$
is \emphdef{decided} if it is either successful or it fails. In the former it
yields a substitution $\sigma$; in this case we write $\basicmatch{p}{s} =
\sigma$. The disjoint union of matching outcomes is given as follows
(``\eqdef'' is used for definitional equality): $$
\begin{array}{c}
\begin{array}{rll}
\fail\uplus o            & \eqdef & \fail \\
o \uplus \fail           & \eqdef & \fail \\
\sigma_1 \uplus \sigma_2 & \eqdef & \sigma
\end{array}
\hspace{.5cm}
\begin{array}{rll}
\wait \uplus \sigma & \eqdef & \wait \\
\sigma \uplus \wait & \eqdef & \wait \\
\wait \uplus \wait  & \eqdef & \wait
\end{array}
\end{array} $$
where $o$ denotes any possible output and $\sigma_1\uplus\sigma_2 \eqdef
\sigma$ if the domains of $\sigma_1$ and $\sigma_2$ are disjoint. This always
holds given that patterns are assumed to be linear (at most one occurrence of
any matchable). The matching operation is defined as follows, where the
defining clauses below are evaluated from top to bottom\footnote{This is
simplification to the static patterns case of the matching operation introduced
in~\cite{DBLP:journals/jfp/JayK09}.}: $$
\begin{array}{llll}
\basicmatch{\matchable{x}}{u}                 & \eqdef & \rename{x}{u} \\
\basicmatch{\constterm{c}}{\constterm{c}}     & \eqdef & \set{} \\
\basicmatch{\dataterm{p}{q}}{\dataterm{u}{v}} & \eqdef & \basicmatch{p}{u} \uplus \basicmatch{q}{v} &
  \quad \text{if $\dataterm{u}{v}$ is a \emph{matchable form}} \\
\basicmatch{p}{u}                             & \eqdef & \fail &
  \quad \text{if $u$ is a \emph{matchable form}} \\
\basicmatch{p}{u}                             & \eqdef & \wait
\end{array} $$
For example: $\basicmatch{\constterm{c}}{\absterm{\matchable{x}}{s}{}} =
\fail$; $\basicmatch{\constterm{c}}{\constterm{d}} = \fail$;
$\basicmatch{\constterm{c}}{x} = \wait$ and
$\basicmatch{\dataterm{\constterm{c}}{\constterm{c}}}{\dataterm{x}{\constterm{d}}}
= \fail$. We now turn to the only reduction axiom of $\capp$: $$
\begin{array}{c}
\Rule{\match{p_i}{u} = \fail \text{ for all } i < j
      \quad
      \match{p_j}{u} = \sigma_j
      \quad
      j \in 1..n}
     {\appterm{(\absterm{p_i}{s_i}{\theta_i})_{i \in 1..n}}{u} \reduce \sigma_j s_j}
     {\ruleBeta}
\end{array} $$
It may be applied under any context and states that if the argument $u$ to an
abstraction $(\absterm{p_i}{s_i}{\theta_i})_{i\in 1..n}$ fails to match all
patterns $p_i$ with $i<j$ and successfully matches pattern $p_j$ (producing a
substitution $\sigma_j$), then the term
$\appterm{(\absterm{p_i}{s_i}{\theta_i})_{i \in 1..n}}{u}$ reduces to
$\sigma_j s_j$.

For instance, consider the function $$\mathsf{head} = 
((\absterm{\constterm{nil}}
          {\constterm{nothing}}
          {\set{}})
\icaseterm
(\absterm{\dataterm{\dataterm{\constterm{cons}}{x}}{\mathit{xs}}}
         {\dataterm{\constterm{just}}{x}}
         {\set{x:\nat, \mathit{xs}:\rectype{\alpha}{\uniontype{\consttype{nil}}{\datatype{\datatype{\consttype{cons}}{\nat}}{\alpha}}}}}))
$$ Then, $\appterm{\mathsf{head}}{\constterm{nil}} \reduce \constterm{nothing}$
with $\match{\constterm{nil}}{\constterm{nil}} = \set{}$, while
$\appterm{\mathsf{head}}{(\dataterm{\dataterm{\constterm{cons}}{4}}{\constterm{nil}})} \reduce \dataterm{\constterm{just}}{4}$
since $\match{\constterm{nil}}{\dataterm{\dataterm{\constterm{cons}}{x}}{\constterm{nil}}} = \fail$ and $\match{\dataterm{\dataterm{\constterm{cons}}{x}}{\mathit{xs}}}{\dataterm{\dataterm{\constterm{cons}}{4}}{\constterm{nil}}} = \{4/x,\constterm{nil}/\mathit{xs}\}$.

\begin{proposition}
Reduction in \capp\ is confluent~\cite{DBLP:journals/entcs/VisoBA16}.
\end{proposition}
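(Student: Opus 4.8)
The plan is to establish confluence via the Tait--Martin-L\"of technique of \emph{parallel reduction}. I would introduce a relation $\Rrightarrow$ on terms that contracts a set of $\beta$-redexes simultaneously: it is defined by structural rules that are reflexive on variables and constants, congruent for application and for each branch of an abstraction, together with a key rule firing \ruleBeta\ in parallel, namely if $u \Rrightarrow u'$, each body $s_i \Rrightarrow s_i'$, $\match{p_i}{u'} = \fail$ for $i<j$ and $\match{p_j}{u'} = \sigma_j$, then $\appterm{(\absterm{p_i}{s_i}{\theta_i})_{i\in 1..n}}{u} \Rrightarrow \sigma_j s_j'$. One extends $\Rrightarrow$ pointwise to substitutions. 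The first, routine, step is to sandwich $\reduce \subseteq {\Rrightarrow} \subseteq \reduce^{*}$, so that $\Rrightarrow$ and $\reduce$ share the same reflexive--transitive closure; confluence of $\reduce$ then reduces to the \emph{diamond property} of $\Rrightarrow$.

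Before proving the diamond property I would isolate three lemmas. First, a closure lemma: $\DataStructure$ and $\MatchableForms$ are stable under $\Rrightarrow$ --- a data structure carries a constant at the head of its spine and reduction can neither erase that constant nor fire a redex at the spine head, so reducts of data structures are data structures, and reducts of abstractions are abstractions. Second, a substitution lemma: if $s \Rrightarrow s'$ and $\sigma \Rrightarrow \sigma'$ pointwise, then $\sigma s \Rrightarrow \sigma' s'$; linearity of patterns guarantees the disjoint-domain side condition for $\uplus$ throughout. Third, and most important, a \emph{matching-stability} lemma: if $\basicmatch{p}{u}$ is decided and $u \Rrightarrow u'$, then $\basicmatch{p}{u'}$ is decided with the same outcome (success stays success, failure stays failure), and on success $\basicmatch{p}{u} \Rrightarrow \basicmatch{p}{u'}$ pointwise. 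This is proved by induction on $p$ using the closure lemma: a failure is caused by a constant clash or a compound-versus-matchable-form clash, and none of these structural features can be removed by reducing $u$; a success propagates its substitution componentwise.

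With these in hand I would prove the diamond property by induction on the derivation of $t \Rrightarrow t_1$, analysing the shape of $t$ and the last rule of $t \Rrightarrow t_2$. The only delicate case is $t = \appterm{(\absterm{p_i}{s_i}{\theta_i})_{i\in 1..n}}{u}$ where at least one of the two parallel steps fires \ruleBeta. Applying the induction hypothesis to $u$ and to the bodies yields common reducts $u'''$ and $s_i'''$; matching-stability applied to $u'''$ guarantees that the branch selected is the \emph{same} index $j$ on both sides (the ordered ``$\fail$ for $i<j$, success at $j$'' condition is preserved), and the substitution lemma joins the two contracta $\sigma_j' s_j'$ and $\sigma_j'' s_j''$ at $\match{p_j}{u'''}\, s_j'''$. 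The remaining cases --- both steps structural, or a \ruleBeta-step meeting a structural step that does not touch the redex --- are routine congruence arguments.

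The main obstacle is the matching-stability lemma together with its use in the critical diamond case: one must verify that parallel reduction of the argument can never change a decided match into the \wait\ outcome nor flip success and failure, and in particular can never shift the branch chosen by the top-to-bottom search. Once stability is secured, the diamond property --- and hence confluence by the sandwich argument --- follows.
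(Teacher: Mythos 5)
The paper does not prove this proposition itself---confluence is imported from~\cite{DBLP:journals/entcs/VisoBA16}---but your argument is sound and is essentially the standard route taken for \capp\ and its \ppc-style relatives: Tait--Martin-L\"of parallel reduction, whose diamond property hinges on exactly the matching-stability lemma you isolate, which in turn rests on closure of data structures and abstractions under reduction (the head of a data structure's spine is a constant, so no redex can appear or vanish there, and patterns contain no redexes) and on the fact that $\fail$ absorbs $\wait$ in $\uplus$, so a decided failure caused by one component cannot be undone by another component changing status. I see no gap; the only point worth making explicit when writing it up is that your parallel $\beta$-rule tests matching against the reduct $u'$, so the inclusion $\Rrightarrow\ \subseteq\ \reduce^{*}$ uses that \ruleBeta\ may fire after first reducing the argument and the bodies under the application context.
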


\subsection{Types}
\label{sec:typingMu}

In order to ensure that patterns such as
$\dataterm{\matchable{x}}{\matchable{y}}$ decompose only data structures rather
than arbitrary terms, we shall introduce two sorts of typing expressions:
\emph{types} and \emph{datatypes}, the latter being strictly included in the
former. We assume given countably infinite sets $\DataTypeVariable$ of
\emphdef{datatype variables} ($\alpha, \beta, \ldots$), $\TypeVariable_{A}$ of
\emphdef{type variables} $(X, Y, \ldots$) and $\TypeConstant$ of \emphdef{type
constants} $(\consttype{c}, \consttype{d}, \ldots)$. We define $\TypeVariable
\eqdef \TypeVariable_{A} \cup \DataTypeVariable$ and use meta-variables $V, W,
\ldots$ to denote an arbitrary element in it. Likewise, we write $a, b, \ldots$
for elements in $\TypeVariable \cup \TypeConstant$. The sets $\DataType$ of
\emphdef{$\mu$-datatypes} and $\Type$ of \emphdef{$\mu$-types}, resp., are
inductively defined as follows: $$
\begin{array}{l@{\qquad}l}
\begin{array}{rlll}
D & \Coloneq & \alpha              & \quad \text{(datatype variable)} \\
  & |        & \consttype{c}       & \quad \text{(atom)} \\
  & |        & \datatype{D}{A}     & \quad \text{(compound)} \\
  & |        & \uniontype{D}{D}    & \quad \text{(union)} \\
  & |        & \rectype{\alpha}{D} & \quad \text{(recursion)} 
\end{array}
&
\begin{array}{rlll}
A & \Coloneq & X                & \quad \text{(type variable)} \\
  & |        & D                & \quad \text{(datatype)} \\
  & |        & \functype{A}{A}  & \quad \text{(type abstraction)} \\
  & |        & \uniontype{A}{A} & \quad \text{(union)} \\
  & |        & \rectype{X}{A}   & \quad \text{(recursion)}
\end{array}
\end{array} $$

\begin{remark}
A type of the form $\rectype{\alpha}{A}$ is not valid in general since it may
produce invalid unfoldings. For example,
$\rectype{\alpha}{\functype{\alpha}{\alpha}} =
\functype{(\rectype{\alpha}{\functype{\alpha}{\alpha}})}{(\rectype{\alpha}{\functype{\alpha}{\alpha}})}$,
which fails to preserve sorting.
On the other hand, types of the form $\rectype{X}{D}$ are not necessary since
they denote the solution to the equation $X = D$, hence $X$ is a
variable representing a datatype, a role already fulfilled by $\alpha$.
\end{remark}

We consider $\iuniontype$ to bind tighter than $\ifunctype$, while $\idatatype$
binds tighter than $\iuniontype$. \Eg 
$\functype{\uniontype{\datatype{D}{A}}{A'}}{B}$ means
$\functype{(\uniontype{(\datatype{D}{A})}{A'})}{B}$. We write $A \neq \iuniontype$
to mean that the root symbol of $A$ is different from $\iuniontype$; and similarly
with the other type constructors.
Expressions such as $\uniontype{\uniontype{A_1}{\ldots}}{A_n}$ will be
abbreviated $\maxuniontype{i \in 1..n}{A_i}$; this is sound since $\mu$-types
will be considered modulo associativity of $\iuniontype$. A type of the form
$\maxuniontype{i \in 1..n}{A_i}$  where each $A_i \neq \iuniontype$, $i \in
1..n$, is called a \emphdef{maximal union}. We often write $\rectype{V}{A}$ to
mean either $\rectype{\alpha}{D}$ or $\rectype{X}{A}$. A \emphdef{non-union
$\mu$-type} $A$ is a $\mu$-type of one of the following forms: $\alpha$,
$\consttype{c}$, $\datatype{D}{A'}$, $X$, $\functype{A'}{A''}$ or $\rectype{V}{B}$
with $B$ a non-union $\mu$-type. We assume $\mu$-types are
\emphdef{contractive}: $\rectype{V}{A}$ is contractive if $V$ occurs in $A$
only under a type constructor $\ifunctype$ or $\idatatype$, if at all. 
For
instance,  $\rectype{X}{\functype{X}{\consttype{c}}}$, $\rectype{X}{\functype{X}{X}}$ and
$\rectype{X}{\datatype{\consttype{c}}{\uniontype{X}{X}}}$ are contractive
while $\rectype{X}{X}$ and $\rectype{X}{\uniontype{X}{X}}$ are not. We
henceforth redefine $\Type$ to be the set of \emphdef{contractive $\mu$-types}.

\begin{figure} $$
\begin{array}{c}
\Rule{}
     {\sequTE{}{A \eqtypemu A}}
     {\ruleEqmuRefl}
\quad
\Rule{\sequTE{}{A \eqtypemu B}
      \quad
      \sequTE{}{B \eqtypemu C}}
     {\sequTE{}{A \eqtypemu C}}{\ruleEqmuTrans}
\quad
\Rule{\sequTE{}{A \eqtypemu B}}
     {\sequTE{}{B \eqtypemu A}}
     {\ruleEqmuSymm}
\\
\\
\Rule{\sequTE{}{D \eqtypemu D'}
      \quad
      \sequTE{}{A \eqtypemu A'}}
     {\sequTE{}{\datatype{D}{A} \eqtypemu \datatype{D'}{A'}}}
     {\ruleEqmuComp}
\qquad
\Rule{\sequTE{}{A \eqtypemu A'}
      \quad
      \sequTE{}{B \eqtypemu B'}}
     {\sequTE{}{\functype{A}{B} \eqtypemu \functype{A'}{B'}}}
     {\ruleEqmuFunc}
\\
\\
\Rule{}
     {\sequTE{}{\uniontype{A}{A} \eqtypemu A}}
     {\ruleEqmuUnionIdem}
\qquad
\Rule{}
     {\sequTE{}{\uniontype{A}{B} \eqtypemu \uniontype{B}{A}}}
     {\ruleEqmuUnionComm}
\\
\\
\Rule{}
     {\sequTE{}{\uniontype{A}{(\uniontype{B}{C})} \eqtypemu \uniontype{(\uniontype{A}{B})}{C}}}
     {\ruleEqmuUnionAssoc}
\\
\\
\Rule{\sequTE{}{A \eqtypemu A'}
      \quad
      \sequTE{}{B \eqtypemu B'}}
     {\sequTE{}{\uniontype{A}{B} \eqtypemu \uniontype{A'}{B'}}}
     {\ruleEqmuUnion}
\qquad
\Rule{\sequTE{}{A \eqtypemu B}}
     {\sequTE{}{\rectype{V}{A} \eqtypemu \rectype{V}{B}}}
     {\ruleEqmuRec}
\\
\\
\Rule{}
     {\sequTE{}{\rectype{V}{A} \eqtypemu \substitute{V}{\rectype{V}{A}}{A}}}
     {\ruleEqmuFold}
\quad
\Rule{\sequTE{}{A \eqtypemu \substitute{V}{A}{B}}
      \quad
      \rectype{V}{B} \text{ contractive}}
     {\sequTE{}{A \eqtypemu \rectype{V}{B}}}
     {\ruleEqmuContr}
\end{array} $$
\caption{Type equivalence for $\mu$-types.}
\label{fig:equivalenceSchemesMu}
\end{figure}

\begin{figure} $$
\begin{array}{c}
\Rule{}
     {\sequTE{\Sigma}{A \subtypemu A}}
     {\ruleSubmuRefl}
\qquad
\Rule{}
     {\sequTE{\Sigma, V \subtypemu W}{V\subtypemu W}}
     {\ruleSubmuHyp}
\qquad
\Rule{\sequTE{}{A \eqtypemu B}}
     {\sequTE{\Sigma}{A \subtypemu B}}
     {\ruleSubmuEq}
\\
\\
\Rule{\sequTE{\Sigma}{A \subtypemu B} 
      \quad
      \sequTE{\Sigma}{B \subtypemu C}}
     {\sequTE{\Sigma}{A \subtypemu C}}
     {\ruleSubmuTrans} 
\qquad
\Rule{\sequTE{\Sigma}{D \subtypemu D'} 
      \quad 
      \sequTE{\Sigma}{A \subtypemu A'}}
     {\sequTE{\Sigma}{\datatype{D}{A} \subtypemu \datatype{D'}{A'}}}
     {\ruleSubmuComp}
\\
\\
\Rule{\sequTE{\Sigma}{A \subtypemu A'}
      \quad 
      \sequTE{\Sigma}{B \subtypemu B'}}
     {\sequTE{\Sigma}{\functype{A'}{B} \subtypemu \functype{A}{B'}}}
     {\ruleSubmuFunc} 
\qquad
\Rule{\sequTE{\Sigma}{A \subtypemu C}
      \quad 
      \sequTE{\Sigma}{B \subtypemu C}}
     {\sequTE{\Sigma}{\uniontype{A}{B} \subtypemu C}}
     {\ruleSubmuUnionL}
\\
\\
\Rule{\sequTE{\Sigma}{A \subtypemu B}}
     {\sequTE{\Sigma}{A \subtypemu \uniontype{B}{C}}}
     {\ruleSubmuUnionRL}
\qquad
\Rule{\sequTE{\Sigma}{A \subtypemu C}}
     {\sequTE{\Sigma}{A \subtypemu \uniontype{B}{C}}}
     {\ruleSubmuUnionRR} 
\\
\\
\Rule{\sequTE{\Sigma, V \subtypemu W}{A \subtypemu B}
      \quad
      W \notin \fv{A} \quad V \notin \fv{B}
     }
     {\sequTE{\Sigma}{\rectype{V}{A} \subtypemu \rectype{W}{B}}}
     {\ruleSubmuRec}
\end{array} $$
\caption{Strong subtyping for $\mu$-types.}
\label{fig:subtypingSchemesMu}
\end{figure}

$\mu$-types come equipped with a notion of \emphdef{type equivalence}
$\eqtypemu$ (Fig.~\ref{fig:equivalenceSchemesMu}) and \emphdef{subtyping}
$\subtypemu$ (Fig.~\ref{fig:subtypingSchemesMu}). In
Fig.~\ref{fig:subtypingSchemesMu} a subtyping context $\Sigma$ is a set of
assumptions over type variables of the form $V \subtypemu W$ with $V, W  \in
\TypeVariable$. $\ruleEqmuRec$ actually encodes two rules, one for datatypes
($\rectype{\alpha}{D}$) and one for arbitrary types ($\rectype{X}{A}$).
Likewise for $\ruleEqmuFold$ and $\ruleEqmuContr$.
Regarding the subtyping rules, we adopt those for union
of~\cite{DBLP:conf/csl/Vouillon04}. It should be noted that the na\"ive variant
of $\ruleSubmuRec$ in which $\sequTE{\Sigma}{\rectype{V}{A} \subtypemu
\rectype{V}{B}}$ is deduced from \sequTE{\Sigma}{A\subtypemu B}, is known to be
unsound~\cite{DBLP:journals/toplas/AmadioC93}. We often abbreviate $\sequTE{}{A
\subtypemu B}$ as $A \subtypemu B$.


\subsection{Typing and Safety}
\label{sec:typinsSchemes}

\begin{figure} $$
\begin{array}{c}
\multicolumn{1}{l}{\textbf{Patterns}}
\\
\\
\Rule{\theta(x) = A}
     {\sequP{\theta}{\matchable{x} : A}}
     {\rulePMatch}
\quad
\Rule{}
     {\sequP{\theta}{\constterm{c} : \consttype{c}}}
     {\rulePConst} 
\quad
\Rule{\sequP{\theta}{p : D} \quad \sequP{\theta}{q : A}}
     {\sequP{\theta}{\dataterm{p}{q} : \datatype{D}{A}}}
     {\rulePComp}
\\
\\
\multicolumn{1}{l}{\textbf{Terms}}
\\
\\
\Rule{\Gamma(x) = A}
     {\sequT{\Gamma}{x : A}}
     {\ruleTVar}
\quad
\Rule{}
     {\sequT{\Gamma}{\constterm{c} : \consttype{c}}}
     {\ruleTConst}
\quad
\Rule{\sequT{\Gamma}{r : D} \quad \sequT{\Gamma}{u : A}}
     {\sequT{\Gamma}{\dataterm{r}{u} : \datatype{D}{A}}}
     {\ruleTComp}
\\
\\
\Rule{\begin{array}{c}
        \lista{\sequC{\theta_i}{p_i : A_i}}_{i \in 1..n}
        \text{ compatible}\vspace{.1cm}
        \\
        (\sequP{\theta_i}{p_i : A_i})_{i\in 1..n}
        \quad
        (\dom{\theta_i} = \fm{p_i})_{i\in 1..n}
        \quad
        (\sequT{\Gamma, \theta_i}{s_i : B})_{i\in 1..n}
      \end{array}
     }
     {\sequT{\Gamma}{(\absterm{p_i}{s_i}{\theta_i})_{i \in 1..n} :
      \functype{(\maxuniontype{i \in 1..n}{A_i})}{B}}
     }
     {\ruleTAbs}
\\
\\
\Rule{\sequT{\Gamma}{r : \functype{\maxuniontype{i \in 1..n}{A_i}}{B}}
      \quad
      \sequT{\Gamma}{u : A_k}
      \quad
      k \in 1..n
      }
      {\sequT{\Gamma}{\appterm{r}{u} : B}}
      {\ruleTApp} 
\quad
\Rule{\sequT{\Gamma}{s : A}
      \quad
      \sequTE{}{A \subtypemu A'}
     }
     {\sequT{\Gamma}{s : A'}}
     {\ruleTSubs}
\end{array} $$
\caption{Typing rules for patterns and terms.}
\label{fig:typingSchemesForPatternsAndTerms}
\end{figure}

A \emphdef{typing context} $\Gamma$ (or $\theta$) is a partial function from
term variables to $\mu$-types; $\Gamma(x) = A$ means that $\Gamma$ maps $x$ to
$A$.  We have two typing judgments, one for patterns $\sequP{\theta}{p : A}$
and one for terms $\sequT{\Gamma}{s : A}$. Accordingly, we have two sets of
typing rules: Fig.~\ref{fig:typingSchemesForPatternsAndTerms}, top and bottom.
We write $\sequPDeriv{\theta}{p:A}$ to indicate that the typing judgment
$\sequP{\theta}{p:A}$ is derivable (likewise for $\sequTDeriv{\Gamma}{s:A}$).
The typing schemes speak for themselves except for two of them which we now
comment. The first is $\ruleTApp$. Note that we do not impose any additional
restrictions on $A_i$, in particular it may be a union-type itself. This implies
that the argument $u$ can have a union type too.
Regarding $\ruleTAbs$ it requests a number of conditions. First of all, each of
the patterns $p_i$ must be typable under the typing context $\theta_i$, $i \in
1..n$. Also, the set of free matchables in each $p_i$ must be exactly the
domain of $\theta_i$. Another condition, indicated by $(\sequT{\Gamma,
\theta_i}{s_i : B})_{i\in 1..n}$, is that the bodies of each of the branches
$s_i$, $i\in 1..n$, must be typable under the context extended with the
corresponding $\theta_i$. More noteworthy is the condition that the list
$\lista{\sequC{\theta_i}{p_i : A_i}}_{i \in 1..n}$ be \emph{compatible}. 

Compatibility is a condition that ensures that Subject Reduction is not violated. We briefly
recall it; see~\cite{DBLP:journals/entcs/VisoBA16} for further details and examples. As already
mentioned in example (\ref{eq:example:compatib:i}) of the introduction, if
$p_i$ subsumes $p_j$ (\ie $\matches{p_i}{p_j}$) with $i < j$, then the branch \absterm{p_j}{s_j}{\theta_j}
will never be evaluated since the argument will already match $p_i$. Thus, in
this case, in order to ensure SR we demand that $A_j \subtypemu A_i$. If $p_i$
does not subsume $p_j$ (\ie $\nmatches{p_i}{p_j}$) with $i<j$ we analyze the
cause of failure of subsumption in order to determine whether requirements on
$A_i$ and $A_j$ must be put forward, focusing on those cases where $\pi \in
\pos{p_i} \cap \pos{p_j}$ is an offending position in both patterns. The
following table exhaustively lists them:
\begin{center}
\begin{tabular}{l|c|c|l}
    & $\at{p_i}{\pi}$                       & $\at{p_j}{\pi}$      & \\
\hline
(a) & \multirow{3}{*}{\constterm{c}}        & \matchable{y}        & restriction required \\
(b) &                                       & \constterm{d}        & no overlapping ($\nmatches{p_j}{p_i}$) \\
(c) &                                       & $\appterm{q_1}{q_2}$ & no overlapping \\
\hline
(d) & \multirow{3}{*}{$\appterm{q_1}{q_2}$} & \matchable{y}        & restriction required \\
(e) &                                       & \constterm{d}        & no overlapping
\end{tabular}
\end{center}
\noindent In cases (b), (c) and (e), no extra condition on the types of $p_i$ and
$p_j$ is necessary, since their respective sets of possible arguments are
disjoint. The cases where $A_i$ and $A_j$ must be related are (a) and (d): for
those we require $A_j\subtypemu A_i$. In summary, the cases requiring
conditions on their types are: 1) $\matches{p_i}{p_j}$; and 2)
$\nmatches{p_i}{p_j}$ and $\matches{p_j}{p_i}$.

\begin{definition}
\label{def:constructorAtPosition}
Given a pattern $\sequP{\theta}{p:A}$ and $\pi\in\pos{p}$, we say $A$
\emph{admits a symbol $\odot$ (with $\odot\in \TypeVariable \cup \TypeConstant
\cup \set{\ifunctype,\idatatype}$)} \emph{at position} $\pi$ iff $\odot \in
\lookup{A}{\pi}$, where: $$
\begin{array}{c}
\begin{array}{r@{\quad\eqdef\quad}l@{\quad}l}
\lookup{a}{\epsilon}                 & \set{a} \\
\lookup{(A_1 \star A_2)}{\epsilon}   & \set{\star},       & \star \in \set{\ifunctype, \idatatype} \\
\lookup{(A_1 \star A_2)}{i\pi}       & \lookup{A_i}{\pi}, & \star \in \set{\ifunctype, \idatatype}, i \in \set{1,2} \\
\lookup{(\uniontype{A_1}{A_2})}{\pi} & \lookup{A_1}{\pi} \cup \lookup{A_2}{\pi} \\
\lookup{(\rectype{V}{A'})}{\pi}      & \lookup{(\substitute{V}{\rectype{V}{A'}}{A'})}{\pi}
\end{array}
\end{array} $$
\end{definition}
Note that $\sequPDeriv{\theta}{p:A}$ and contractiveness of $A$ imply
$\lookup{A}{\pi}$ is well-defined for $\pi\in\pos{p}$.

\begin{definition}
The \emph{maximal positions} in a set of positions $P$ are: $$\mpos{P} \eqdef
\set{\pi \in P \mathrel| \nexists \pi'\neq\epsilon. \pi\pi'\in P}$$
The \emph{mismatching positions} between two patterns are defined
below where, recall from the introduction, $\at{p}{\pi}$ stands for the sub-pattern
at position $\pi$ of $p$:
$$\cpos{p}{q} \eqdef
\set{\pi \mathrel| \pi \in \mpos{\pos{p} \cap \pos{q}} \land
\nmatches{\at{p}{\pi}}{\at{q}{\pi}}}$$
\end{definition}

For instance, given patterns $\constterm{nil}$ and
$\dataterm{\dataterm{\constterm{cons}}{x}}{\mathit{xs}}$ with set of positions
$\set{\epsilon}$ and $\set{\epsilon, 1, 2, 11, 12}$ respectively, we have
$\mpos{\constterm{nil}} = \set{\epsilon}$ and
$\mpos{\dataterm{\dataterm{\constterm{cons}}{x}}{\mathit{xs}}} = \set{11, 12}$,
while the only mismatching position among them is the root, \ie
$\cpos{\constterm{nil}}{\dataterm{\dataterm{\constterm{cons}}{x}}{\mathit{xs}}} = \set{\epsilon}$.

\begin{definition}
\label{def:compatibility}
Define the \emph{compatibility predicate} as $$\Psicomp{\sequC{\theta}{p :
A}}{\sequC{\theta'}{q : B}} \eqdef \forall \pi \in \cpos{p}{q}.\lookup{A}{\pi}
\cap \lookup{B}{\pi} \neq \varnothing$$

We say $\sequC{\theta}{p : A}$ is \emph{compatible} with $\sequC{\theta'}{q :
B}$, notation $\compatible{\sequC{\theta}{p : A}}{\sequC{\theta'}{q : B}}$, iff
$$\Psicomp{\sequC{\theta}{p : A}}{\sequC{\theta'}{q : B}} \implies B \subtypemu
A$$

A list of patterns $\lista{\sequC{\theta_i}{p_i : A_i}}_{i \in 1..n}$ is
compatible if $\forall i, j \in 1..n. i < j \implies
\compatible{\sequC{\theta_i}{p_i : A_i}}{\sequC{\theta_j}{p_j : A_j}}$.
\end{definition}

Following the example above, consider types $\consttype{nil}$ and
$\datatype{\datatype{\consttype{cons}}{\nat}}{(\rectype{\alpha}{\uniontype{\consttype{nil}}{\datatype{\datatype{\consttype{cons}}{\nat}}{\alpha}}})}$
for patterns $\constterm{nil}$ and $\dataterm{\dataterm{\constterm{cons}}{x}}{xs}$
respectively. Compatibility requires no further restriction in this case since
$\cpos{\constterm{nil}}{\dataterm{\dataterm{\constterm{cons}}{x}}{xs}} = \set{\epsilon}$
and $$\lookup{\consttype{nil}}{\epsilon} = \set{\consttype{nil}} \qquad
\lookup{(\datatype{\datatype{\consttype{cons}}{\nat}}{(\rectype{\alpha}{\uniontype{\consttype{nil}}{\datatype{\datatype{\consttype{cons}}{\nat}}{\alpha}}})})}{\epsilon}
= \set{\idatatype}$$ hence $\iPsicomp$ is false and the property gets validated trivially.

On the contrary, recall motivating example (\ref{eq:example:compatib:i}) on Sec.~\ref{sec:intro}.
$\compatible{\sequC{\set{x:\bool}}{\dataterm{\constterm{vl}}{\matchable{x}} :
\datatype{\consttype{vl}}{\bool}}}{\sequC{\set{y : \nat}}{\dataterm{\constterm{vl}}{\matchable{y}}
: \datatype{\consttype{vl}}{\nat}}}$ requires $\datatype{\consttype{vl}}{\nat} \subtypemu
\datatype{\consttype{vl}}{\bool}$ since
$\cpos{\dataterm{\constterm{vl}}{\matchable{x}}}{\dataterm{\constterm{vl}}{\matchable{y}}}
= \varnothing$ (\ie $\iPsicomp$ is trivially true). This actually fails because $\nat
\not\subtypemu \bool$. Thus, this pattern combination is rejected by rule $\ruleTAbs$.

Types are preserved along reduction. The proof relies crucially on compatibility.

\begin{proposition}[Subject Reduction]
\label{prop:subjectReduction}
If $\sequTDeriv{\Gamma}{s : A}$ and $s \reduce s'$, then
$\sequTDeriv{\Gamma}{s' : A}$.
\end{proposition}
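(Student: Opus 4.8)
The plan is to prove the statement by induction on the derivation of $s \reduce s'$, with the congruence (contextual) cases being routine and the firing of $\ruleBeta$ at the root the crux. Beforehand I would isolate three auxiliary results. A \emph{generation lemma} that inverts each typing rule modulo $\ruleTSubs$: from a derivation of $\sequT{\Gamma}{\appterm{r}{u} : A}$ one extracts $\sequT{\Gamma}{r : \functype{\maxuniontype{i \in 1..n}{A_i}}{B}}$, $\sequT{\Gamma}{u : A_k}$ for some $k \in 1..n$, and $B \subtypemu A$; and from the typing of an abstraction one recovers the premises of $\ruleTAbs$ (the pattern typings $\sequP{\theta_i}{p_i : A_i}$, the body typings $\sequT{\Gamma,\theta_i}{s_i : B}$, and compatibility of the list). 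A \emph{matching lemma}: if $\sequP{\theta}{p : C}$ with $\dom{\theta} = \fm{p}$, $\sequT{\Gamma}{u : C}$ and $\match{p}{u} = \sigma$, then $\sequT{\Gamma}{\sigma(x) : \theta(x)}$ for every $x \in \dom{\theta}$, proved by induction on $p$ via $\rulePMatch$, $\rulePConst$ and $\rulePComp$. Finally a \emph{substitution lemma}: if $\sequT{\Gamma,\theta}{s : B}$ and $\sigma$ respects $\theta$ in the previous sense, then $\sequT{\Gamma}{\sigma s : B}$, by induction on the typing of $s$.

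For the contextual cases I would apply the generation lemma to expose the type of the reducing subterm, invoke the induction hypothesis, and reassemble the derivation with the same rule, reinserting any $\ruleTSubs$ step. The union, compound, application and abstraction cases all follow this pattern and present no difficulty.

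The base case is the one where $s = \appterm{(\absterm{p_i}{s_i}{\theta_i})_{i\in 1..n}}{u}$ fires $\ruleBeta$: $\match{p_i}{u} = \fail$ for $i < j$, $\match{p_j}{u} = \sigma_j$, and $s' = \sigma_j s_j$. The generation lemma yields $\sequT{\Gamma}{u : A_k}$ for some $k$, the pattern typing $\sequP{\theta_j}{p_j : A_j}$, the body typing $\sequT{\Gamma,\theta_j}{s_j : B}$, and $B \subtypemu A$. Granting the matching and substitution lemmas, it suffices to establish $\sequT{\Gamma}{u : A_j}$, i.e. that the argument admits the type of the \emph{matched} branch $j$ rather than the \emph{chosen} branch $k$: the matching lemma then shows $\sigma_j$ respects $\theta_j$, the substitution lemma gives $\sequT{\Gamma}{\sigma_j s_j : B}$, and $\ruleTSubs$ with $B \subtypemu A$ concludes.

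The crux, and the step I expect to be the main obstacle, is reconciling $j$ with $k$. I would argue in two moves. First, a canonical-forms observation: since $\match{p_k}{u}$ is decided whenever $k < j$ (it equals $\fail$), the relevant subterms of $u$ are matchable forms, and a matchable form of type $A_k$ necessarily matches $p_k$, so $\match{p_k}{u} \neq \fail$; this contradicts the failure premise for every $k < j$, forcing $k \geq j$. Second, for the pair $(j,k)$ with $j \leq k$, the list being compatible gives $\compatible{\sequC{\theta_j}{p_j : A_j}}{\sequC{\theta_k}{p_k : A_k}}$, and I claim the predicate $\Psicomp{\sequC{\theta_j}{p_j : A_j}}{\sequC{\theta_k}{p_k : A_k}}$ holds: at every mismatching position $\pi \in \cpos{p_j}{p_k}$ the subterm $\at{u}{\pi}$ is a witness whose head symbol lies simultaneously in $\lookup{A_j}{\pi}$ (because $u$ matches $p_j$) and in $\lookup{A_k}{\pi}$ (because $\sequT{\Gamma}{u : A_k}$), so the intersection is non-empty. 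Compatibility then yields $A_k \subtypemu A_j$, and $\ruleTSubs$ delivers $\sequT{\Gamma}{u : A_j}$. The delicate points are making the canonical-forms/shape analysis precise when $A_k$ is itself a union or carries recursion (so that $\lookup{\cdot}{\pi}$ must be read through $\ruleEqmuFold$, as in Definition~\ref{def:constructorAtPosition}), and verifying that the head symbol of each $\at{u}{\pi}$ is genuinely captured by both lookup sets; this position-wise bookkeeping, channelled through Definition~\ref{def:constructorAtPosition} and Definition~\ref{def:compatibility}, is exactly what the compatibility machinery was designed to support.
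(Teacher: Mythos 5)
First, a caveat on the comparison: this paper does not itself prove Prop.~\ref{prop:subjectReduction}; it states it and defers the proof to~\cite{DBLP:journals/entcs/VisoBA16}, offering only the remark that the proof ``relies crucially on compatibility'' together with the design rationale for $\iPsicomp$ in Sec.~\ref{sec:typinsSchemes}. Measured against that rationale, your architecture is the intended one: generation, matching and substitution lemmas; the observation that an argument typed at a \emph{later} branch's type while matching an \emph{earlier} pattern is exactly the situation compatibility legislates; the use of the argument $u$ itself as the witness that $\iPsicomp$ holds at every mismatching position, so that $\compatible{\sequC{\theta_j}{p_j:A_j}}{\sequC{\theta_k}{p_k:A_k}}$ yields $A_k \subtypemu A_j$ and $\ruleTSubs$ retypes $u$ at $A_j$; and the canonical-forms argument ruling out $k<j$ (a decided match of a well-typed argument against a pattern of its own type cannot be $\fail$), which is essentially the analysis Progress needs anyway.

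The gap is in the very first step of your crux: you silently identify the union decomposition appearing in the $\ruleTApp$ instance with the branch decomposition of the abstraction. Inverting the derivation of $\appterm{r}{u}:A$ gives $\sequT{\Gamma}{u:A_k}$ where $A_k$ is a component of whatever display of the function's \emph{domain} that derivation happened to use; inverting the derivation of $r$ gives $\ruleTAbs$ with branch types $A'_1,\ldots,A'_n$ and only the relation $\maxuniontype{i \in 1..n}{A_i} \subtypemu \maxuniontype{i \in 1..n}{A'_i}$ (the domain is contravariant through the interleaved $\ruleTSubs$). From this you obtain $u : \maxuniontype{i \in 1..n}{A'_i}$, not $u : A'_k$ for any single branch $k$ --- and since the paper explicitly allows the $A_i$ in $\ruleTApp$ to be unions themselves, no branch index is determined at all, so neither of your two ``moves'' can even be stated. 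Closing this requires a lemma you do not mention: a disjunction (minimal-typing) property for matchable forms, namely that a matchable form typed at a union inhabits one of the summands --- its derivation must bottom out in $\ruleTConst$, $\ruleTComp$ or $\ruleTAbs$, all of which produce non-union types, and a non-union subtype of a union is a subtype of some summand --- applied position-wise along the spine of $p_j$, since only the positions of $u$ actually inspected by the match are guaranteed to be matchable forms. This is the same union-versus-constructor subtlety the paper itself flags in Sec.~\ref{sec:typeChecking} when explaining why the na\"ive syntax-directed application rule is incomplete, so it cannot be waved away as bookkeeping; you gesture at the right neighbourhood (``when $A_k$ is itself a union'') but locate the difficulty in the canonical-forms step rather than in the prior, more fundamental problem that the branch index $k$ is not yet well defined.
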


Let the set of \emphdef{values} be defined as $v \Coloneq \dataterm{x}{v_1
\ldots v_n} \mathrel| \dataterm{\constterm{c}}{v_1 \ldots v_n} \mathrel|
(\absterm{p_i}{s_i}{\theta_i})_{i \in 1..n}$. The following property guarantees
that no functional application gets stuck. Essentially this means that, in a
well-typed closed term, a function which is applied to an argument has at least
one branch that is capable of handling it.

\begin{proposition}[Progress]
\label{prop:progress}
If $\sequTDeriv{}{s : A}$ and $s$ is not a value, then $\exists s'$ such that $s
\reduce s'$.
\end{proposition}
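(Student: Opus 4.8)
The plan is to proceed by induction on the derivation of $\sequTDeriv{}{s:A}$, with a case analysis on the last rule applied. Since the typing context is empty, the only closed values are the constant-headed data structures $\dataterm{\constterm{c}}{v_1 \ldots v_n}$ and the abstractions $(\absterm{p_i}{s_i}{\theta_i})_{i\in 1..n}$, and both of these are matchable forms. The rules \ruleTVar\ (no free variable is typable in the empty context), \ruleTConst\ and \ruleTAbs\ all produce values, contradicting the hypothesis that $s$ is not a value, so they are vacuous. The rule \ruleTSubs\ leaves $s$ unchanged, so the inductive hypothesis applied to its premise supplies the reduct directly. This leaves the two juxtaposition rules, \ruleTComp\ and \ruleTApp, as the substantive cases.

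For \ruleTComp\ we have $s = \dataterm{r}{u}$ with $\sequTDeriv{}{r:D}$ and $\sequTDeriv{}{u:A'}$. I apply the inductive hypothesis to $r$: if $r \reduce r'$ then $s \reduce \dataterm{r'}{u}$ by contextual closure; otherwise $r$ is a closed value, and since its type is a datatype, canonical forms forces $r$ to be constant-headed. As $s$ is not a value while $r$ is, the argument $u$ cannot be a value either, so the inductive hypothesis gives $u \reduce u'$ and $s \reduce \dataterm{r}{u'}$. For \ruleTApp\ we have $s = \appterm{r}{u}$ with $r$ of function type and $u : A_k$ for some $k$; again I reduce inside $r$ or inside $u$ if either does. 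The crucial situation is when both $r$ and $u$ are closed values: canonical forms makes $r$ an abstraction $(\absterm{p_i}{s_i}{\theta_i})_{i\in 1..n}$, and since $u$ is a closed value it is a matchable form, so every $\match{p_i}{u}$ is \emph{decided} — it never returns $\wait$. It then suffices to exhibit an index $j$ with $\match{p_i}{u} = \fail$ for all $i < j$ and $\match{p_j}{u}$ a success, whereupon \ruleBeta\ fires and $s \reduce \sigma_j s_j$.

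To secure this last point I would prove two auxiliary lemmas. The first is a canonical forms lemma: a closed value whose type has top constructor $\ifunctype$ is an abstraction; one whose type is a datatype is constant-headed; one of singleton type $\consttype{c}$ is $\constterm{c}$; and one of compound type $\datatype{D}{A'}$ is itself a compound $\dataterm{u_1}{u_2}$ with $u_1 : D$ and $u_2 : A'$. The second is a matching lemma: if $\sequPDeriv{\theta}{p:A}$ and $\sequTDeriv{}{u:A}$ with $u$ a closed value, then $\match{p}{u}$ succeeds; this is proved by induction on $p$, invoking canonical forms in the constant and compound cases (the latter using that $\dataterm{u_1}{u_2}$ is a matchable form so that the recursive match is taken). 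Combining these with a generation lemma for abstractions — which relates the function type read off in \ruleTApp\ to the declared pattern types via the contravariance of \ruleSubmuFunc, placing $u$ inside a disjunct of the union $\maxuniontype{i}{A_i}$ — yields a pattern $p_k$ that $u$ matches; taking $j$ minimal among the successful matches (which exists since $u$ is a matchable form and $p_k$ is a candidate) completes the case.

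The main obstacle is the canonical forms lemma in the presence of subtyping, union types and recursion: one must show that the top constructors $\ifunctype$ and $\idatatype$ are incomparable under $\subtypemu$, and that a value typed with a union type — or with a subtype of one — actually inhabits one of its disjuncts. These facts do not follow from a naive inversion on the strong subtyping rules of Fig.~\ref{fig:subtypingSchemesMu}, because \ruleSubmuTrans\ together with the folding and unfolding of recursive types obscures the shape of a subtyping derivation. I expect the cleanest route is to appeal to the coinductive, invertible characterization of $\subtypemu$ developed in Sec.~\ref{sec:invertibility}, from which the incompatibility of distinct constructors and the disjunct-inhabitation property for unions can be read off directly from the generating function.
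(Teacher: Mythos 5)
The paper does not actually prove Progress here: Prop.~\ref{prop:progress} is quoted from~\cite{DBLP:journals/entcs/VisoBA16}, so there is no in-paper proof to compare against. Your plan is the standard one and, checked against the definitions of Sec.~\ref{sec:reviewOfCAP}, its architecture is sound: induction on the derivation, vacuous or value cases for \ruleTVar, \ruleTConst\ and \ruleTAbs, pass-through for \ruleTSubs, reduction under context for \ruleTComp\ and \ruleTApp, and canonical forms plus a matching lemma for the redex case, with inversion delegated to the coinductive characterization of $\subtypemu$ from Sec.~\ref{sec:invertibility} --- which is indeed the right tool, since the declarative rules of Fig.~\ref{fig:subtypingSchemesMu} are not directly invertible.

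Two steps need repair before the plan is watertight. First, your justification that every $\match{p_i}{u}$ is decided ``since $u$ is a closed value it is a matchable form'' does not follow from matchable-formhood alone: $\dataterm{\constterm{c}}{t}$ is a data structure, hence a matchable form, for an \emph{arbitrary} term $t$, and matching it against $\dataterm{\constterm{c}}{\constterm{d}}$ returns $\wait$ whenever $t$ is neither a data structure nor an abstraction. What you need --- and what is true --- is that matching any pattern against a closed \emph{value} is decided, proved by induction on the value using that every spine component and every argument of a value is again a value. This is not cosmetic: \ruleBeta\ requires the matches preceding the first successful one to return $\fail$ rather than $\wait$. Second, the ``disjunct-inhabitation property'' is false for arbitrary well-typed terms (an application can be assigned $\uniontype{A}{B}$ without being assignable either disjunct); it holds for values only because their syntax-directed types (Prop.~\ref{prop:completenessTypeChecking}(2), or a generation lemma) have a root constructor other than $\iuniontype$, after which \ruleSubalUnionR\ places them under a single disjunct of the target union. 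Relatedly, be careful that the matching lemma must be stated against the pattern types $C_i$ declared in \ruleTalAbs/\ruleTAbs, reached from the $A_i$ of \ruleTApp\ via contravariance, not against the $A_i$ themselves; union types can then only occur at variable positions of the pattern, where matching is trivial. With those adjustments the argument goes through.
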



\section{Checking Equivalence and Subtyping}
\label{sec:invertibility}

As mentioned in the related work, there are roughly two approaches to
implementing equivalence and subtype checking in the presence of recursive
types, one based on automata theory and another based on  coinductive
characterizations of the associated relations. The former leads to efficient
algorithms~\cite{DBLP:journals/iandc/PalsbergZ01} while the latter is more
abstract in nature and hence closer to the formalism itself although they may
not be as efficient. In the particular case of subtyping for recursive types in
the presence of ACI operators, the automata approach of~\cite{DBLP:journals/iandc/PalsbergZ01} is known
not to be applicable~\cite{Zhao:thesis} while the coinductive approach,
developed in this section, yields a correct algorithm. In
Sec.~\ref{sec:efficientTypeChecking} we explore an alternative approach for
subtyping based on automata inspired
from~\cite{DBLP:conf/tlca/CosmoPR05}.  We next further describe the
reasoning behind the coinductive approach.

\subsection{Preliminaries}

Consider \emphdef{type constructors} $\idatatype$ and $\ifunctype$ together
with \emphdef{type connector} $\iuniontype$ and the ranked alphabet
$\mathfrak{L} \eqdef \set{a^0 \mathrel| a \in \TypeVariable \cup \TypeConstant}
\cup \set{\idatatype^2, \ifunctype^2, \iuniontype^2}$. We write $\Tree$ for the
set of (possibly) \emphdef{infinite types} with symbols in $\mathfrak{L}$.
This is a standard construction~\cite{Terese:2003,DBLP:journals/tcs/Courcelle83} given
by the metric completion based on a simple depth function measuring the
distance from the root to the minimum conflicting node in two trees.
Perhaps worth mentioning is that the type connector $\iuniontype$ does not
contribute to the depth (hence the reason for calling it a connector rather
than a constructor) excluding types consisting of infinite branches of
$\iuniontype$, such as
$\uniontype{(\uniontype{\ldots}{\ldots})}{(\uniontype{\ldots}{\ldots})}$, from
$\Tree$. A tree is \emphdef{regular} if the set of all its subtrees is finite.
We shall always work with regular trees and also denote them $\Tree$.

\begin{definition}
\label{def:treeCut}
The \emphdef{truncation} of a tree $\tA$ at depth $k \in \Natural$ (notation
$\cut{\tA}{k}$) is defined inductively\footnote{This definition is
well-founded~\cite{DBLP:journals/entcs/VisoBA16}.} as follows: $$
\begin{array}{r@{\quad\eqdef\quad}l@{\quad}l}
\cut{\tA}{0}                          & \circ \\
\cut{a}{k+1}                          & a                                   & \text{for $a \in \TypeVariable \cup \TypeConstant$} \\
\cut{(\tA_1 \star \tA_2)}{k+1}        & \cut{\tA_1}{k} \star \cut{\tA_2}{k} & \text{for $\star \in \set{\idatatype, \ifunctype}$} \\
\cut{(\uniontype{\tA_1}{\tA_2})}{k+1} & \uniontype{\cut{\tA_1}{k+1}}{\cut{\tA_2}{k+1}}
\end{array} $$ where $\circ \in \TypeConstant$ is a distinguished type
constant used to identify the nodes where the tree was truncated.
\end{definition}

\begin{definition}
\label{def:toBTree}
The function $\toBTree{\bullet} : \Type \to \Tree$, mapping $\mu$-types to
types, is defined inductively as follows: $$
\begin{array}{r@{\quad\eqdef\quad}l@{\quad}l}
\toBTree{a}(\epsilon)             & a \\
\toBTree{A_1 \star A_2}(\epsilon) & \star              & \text{for $\star \in \set{\idatatype, \ifunctype, \iuniontype}$} \\
\toBTree{A_1 \star A_2}(i\pi)     & \toBTree{A_i}(\pi) & \text{for $\star \in \set{\idatatype, \ifunctype, \iuniontype}$} \\
\toBTree{\rectype{V}{A}}(\pi)     & \toBTree{\substitute{V}{\rectype{V}{A}}{A}}(\pi) \\
\end{array} $$
\end{definition}

Coinductive characterizations of subsets of $\Type\times\Type$ whose generating
function $\Phi$ is \emph{invertible} admit a simple (although not necessarily
efficient) algorithm for subtype membership checking and consists in ``running
$\Phi$ backwards''~\cite[Sec.~21.5]{DBLP:books/daglib/0005958}. This strategy is
supported by the fact that contractiveness of $\mu$-types guarantees a finite
state space to explore (\ie unfolding these types results in regular trees);
invertibility further guarantees that there is at most one way in which a
member of $\nu\Phi$, the greatest fixed-point of $\Phi$, can be generated.
Invertibility of $\Phi : \powerset{\Type\times\Type} \to
\powerset{\Type\times\Type}$ means that for any $\pair{A}{B} \in \Type$, the set
$\set{\X \in \powerset{\Type\times\Type} \mathrel| \pair{A}{B} \in \Phi(\X)}$
is either empty or contains a unique member.

\subsection{Equivalence Checking}

Fig.~\ref{fig:eqtypeSchemesAl} presents a coinductive definition of
type equality over $\mu$-types. This relation $\eqtypeal$ is defined
by means of rules that  are interpreted coinductively
(indicated by the double lines). The rule $\ruleEqalUnion$ makes use
of functions $f$ and $g$ to encode the ACI nature of $\oplus$. Letters $\ctxtlra, \ctxtlrb$,
used in rules \ruleEqalRecL\ and \ruleEqalRecR, denote contexts of the form: $$A_1
\iuniontype \ldots A_{i-1} \iuniontype \Box \iuniontype A_{i+1} \iuniontype
\ldots \iuniontype A_n$$ where $\Box$ denotes the hole of the context, $A_j
\neq \iuniontype$ for all $j \in 1..n \setminus i$ and $A_l \neq \irectype$ for
all $l \in 1..i-1$. Note that, in particular, $\ctxtlra$ may take the form
$\Box$. These contexts help identify the first occurrence of a $\irectype$
constructor within a maximal union. In turn, this allows us to guarantee the invertibility of the
generating function associated to these rules.

\begin{proposition}
\label{prop:eqtypealInvertibility}
The generating function associated with the rules of Fig.~\ref{fig:eqtypeSchemesAl} is invertible.
\end{proposition}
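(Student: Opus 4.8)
The plan is to read the coinductive rules of Fig.~\ref{fig:eqtypeSchemesAl} as a monotone generating function $\Phieqtypeal$ on $\powerset{\Type\times\Type}$ and to establish that it is \emph{deterministic on conclusions}: for every pair $\pair{A}{B}$ at most one rule has $\pair{A}{B}$ as its conclusion, and whenever such a rule exists its premises are a function of $\pair{A}{B}$. This suffices, since for a rule-generated $\Phieqtypeal$ the collection $\set{\X \mathrel| \pair{A}{B}\in\Phieqtypeal(\X)}$ is exactly the family of supersets of the premise sets of the applicable rules; determinism on conclusions makes this family either empty (no rule applies) or the up-set of a single premise set, which therefore has a unique least element -- precisely the invertibility condition recalled above (\cf\cite[Sec.~21.5]{DBLP:books/daglib/0005958}). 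Hence the whole argument reduces to a case analysis on the shape of $\pair{A}{B}$.

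First I would appeal to the canonical top-level shape of a contractive $\mu$-type: modulo the associativity convention, $A$ is either an atom $a \in \TypeVariable\cup\TypeConstant$, a compound $\datatype{D}{A'}$, a function type $\functype{A'}{A''}$, or a maximal union $\maxuniontype{i\in 1..n}{A_i}$ with each $A_i\neq\iuniontype$. The non-union shapes are immediate: an atom is concluded only by \ruleEqalRefl; a compound root forces \ruleEqalComp\ with premises $D\eqtypeal D'$ and $A'\eqtypeal A''$; a function root forces \ruleEqalFunc\ with premises on domains and codomains; and any further base rule such as \ruleEqalMatch\ is likewise fixed by the root of its conclusion. These are pairwise exclusive because their conclusions carry distinct root constructors, and in each the premises are read off uniquely, so on non-union shapes invertibility is routine.

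The delicate situation is when $A$ (or $B$) is a maximal union, where \ruleEqalUnion, \ruleEqalRecL\ and \ruleEqalRecR\ compete; here the contexts $\ctxtlra,\ctxtlrb$ are exactly the device that restores determinism. Their side conditions -- $A_l\neq\irectype$ for every summand strictly to the left of the hole, and $A_j\neq\iuniontype$ off the hole -- force a decomposition $A=\ctxtlra[\rectype{V}{A'}]$ that isolates the \emph{leftmost} $\mu$-headed summand, and such a decomposition is unique because the index of the first $\irectype$-headed summand is a function of $A$. I would then check that these side conditions realize a fixed priority making the three rules mutually exclusive: \ruleEqalRecL\ applies when the left type has a leading $\mu$, \ruleEqalRecR\ when it does not but the right type does, and \ruleEqalUnion\ only when neither side exhibits a $\mu$-headed summand. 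In the two recursion cases the single premise (the unfolded obligation) is then uniquely fixed.

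The main obstacle I anticipate is \ruleEqalUnion, because the ACI reading of $\iuniontype$ offers apparent freedom in pairing the summands of $\maxuniontype{i}{A_i}$ against those of $\maxuniontype{j}{B_j}$, and a careless treatment would let distinct witnessing functions $f,g$ yield incomparable premise sets, so that no least $\X$ exists. The point to verify is that $f,g$ are not free existential witnesses but are pinned down by the rule, so that the induced obligations $\set{\pair{A_i}{B_{f(i)}}}_i\cup\set{\pair{A_{g(j)}}{B_j}}_j$ depend only on $\pair{A}{B}$: idempotence lets me regard the summands as a set, and commutativity is absorbed into $f$ and $g$ so that the generated premise set is invariant under the admissible choices. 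Confirming this invariance -- \ie that the ACI bookkeeping of \ruleEqalUnion\ produces a canonical set of sub-obligations -- is where the real work lies; once it is in hand, the case analysis above shows that every $\pair{A}{B}$ is the conclusion of at most one rule with uniquely determined premises, and invertibility of $\Phieqtypeal$ follows.
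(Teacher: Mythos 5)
Your reduction of invertibility to ``each pair is the conclusion of at most one rule, with premises determined by the conclusion'', your handling of the non-union constructors, and your account of the priority among \ruleEqalRecL, \ruleEqalRecR\ and \ruleEqalUnion\ enforced by the contexts $\ctxtlra,\ctxtlrb$ (leftmost $\irectype$-headed summand first, then the right-hand side, then unions) are all sound; that is indeed the role the paper assigns to these contexts. The genuine gap is in the step you yourself defer as ``where the real work lies'': the invariance of the premise set of \ruleEqalUnion\ under the admissible choices of $f$ and $g$ is false, so the argument cannot be completed along the route you propose. In the generating function, $\pair{A}{B}\in\Phieqtypeal(\X)$ only requires the premises of \emph{some} instance to lie in $\X$; $f$ and $g$ are genuine existential witnesses, not constrained to produce derivable premises. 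Take $A=B=\uniontype{\consttype{c}}{\consttype{d}}$: the instance with $f=g=\mathrm{id}$ has premise set $\set{\pair{\consttype{c}}{\consttype{c}},\pair{\consttype{d}}{\consttype{d}}}$, while the instance with $f=g$ the transposition has premise set $\set{\pair{\consttype{c}}{\consttype{d}},\pair{\consttype{d}}{\consttype{c}}}$; no other rule applies to this pair. The two sets are incomparable, so $\set{\X \mathrel| \pair{A}{B}\in\Phieqtypeal(\X)}$ has two incomparable minimal members and no least one, and neither idempotence nor commutativity removes this, since both premise sets are already sets of pairs.

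Consequently your proof needs either a canonicalization of $f,g$ built into the rule (together with an argument that the canonical instance yields a least support), or an explicitly weaker notion of invertibility: for each pair at most one rule \emph{schema} applies, and its candidate premise sets form a finite, computable family. The latter is what the algorithm of Fig.~\ref{algo:equivalenceChecking} actually exploits --- in the union case it searches over $f$ and $g$ with $\ttSeq$ rather than inverting deterministically, exactly the remedy the paper applies to the admittedly non-invertible \ruleSubalUnionR. As written, your argument rests on an invariance that a two-element union already refutes. (Minor: \ruleEqalMatch\ is not among the rules of Fig.~\ref{fig:eqtypeSchemesAl}.)
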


Moreover, $\eqtypeal$ coincides with $\eqtypemu$:

\begin{proposition}
\label{prop:eqtypealSoundnessAndCompleteness}
$A \eqtypeal B$ iff $A \eqtypemu B$.
\end{proposition}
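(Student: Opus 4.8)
The plan is to prove the two implications separately, regarding $\eqtypeal$ as the greatest fixed point $\nu\Phieqtypeal$ of the generating function associated with Fig.~\ref{fig:eqtypeSchemesAl}, whose invertibility is guaranteed by Prop.~\ref{prop:eqtypealInvertibility}.

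For the direction $A \eqtypemu B \Rightarrow A \eqtypeal B$, I would exploit that $\eqtypemu$ is, by Fig.~\ref{fig:equivalenceSchemesMu}, the \emph{least} relation closed under its rules; hence it suffices to verify that $\eqtypeal$ is itself closed under every one of those rules, so that $\eqtypemu \subseteq \eqtypeal$ follows at once. This amounts to a battery of coinductive lemmas: that $\eqtypeal$ is reflexive, symmetric and transitive, that it is a congruence for $\idatatype$, $\ifunctype$ and $\iuniontype$, that it absorbs the ACI laws, and that it is closed under folding/unfolding (\ruleEqmuFold, \ruleEqmuRec) and under the contraction rule (\ruleEqmuContr). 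Each is discharged by exhibiting an explicit $\Phieqtypeal$-consistent relation $R \subseteq \Phieqtypeal(R)$ containing the pairs in question and invoking the coinduction principle. As usual for coinductive relations, transitivity is the delicate one and will likely require showing that the relational composition of $\eqtypeal$ with itself is $\Phieqtypeal$-consistent.

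The direction $A \eqtypeal B \Rightarrow A \eqtypemu B$ is the one I expect to be the main obstacle. The difficulty is that a witnessing $\eqtypeal$-derivation may be infinite, whereas every $\eqtypemu$-derivation is finite. The key resource is the finite-state property supplied by contractivity: unfolding a $\mu$-type yields a regular tree, so along any branch of the coinductive derivation of $A \eqtypeal B$ only finitely many distinct pairs of $\mu$-types can occur. I would run $\Phieqtypeal$ backwards—legitimate precisely because it is invertible, so at most one rule applies at each pair—while memoizing the pairs already visited; when a pair recurs, the cycle is closed \emph{inductively} by introducing an $\irectype$-binder via \ruleEqmuContr (together with \ruleEqmuFold and \ruleEqmuRec). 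The care points are threefold: aligning the context manipulations of \ruleEqalRecL and \ruleEqalRecR, which isolate the first $\irectype$ occurring within a maximal union, with uses of the fold/unfold rules on the $\eqtypemu$ side; guaranteeing that the recursion variable introduced when tying a knot produces a \emph{contractive} type, so that the side condition of \ruleEqmuContr is met; and absorbing the ACI reshuffling performed by the functions $f$ and $g$ in \ruleEqalUnion through the union commutativity, associativity and idempotence rules of $\eqtypemu$.

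A convenient way to organise the bookkeeping, and a possible simplification, is to route both implications through the tree semantics $\toBTree{\cdot}$: one shows $A \eqtypeal B$ iff $\toBTree{A} = \toBTree{B}$ and likewise $A \eqtypemu B$ iff $\toBTree{A} = \toBTree{B}$, so that the knot-tying argument is carried out once against the canonical object $\toBTree{A}$ rather than directly between the two syntactic systems. I would keep the direct syntactic argument above as the primary plan, since it is self-contained, but fall back on this semantic detour if the cycle-closing step in the soundness direction proves unwieldy, as the canonical representative tends to make the application of \ruleEqmuContr considerably cleaner.
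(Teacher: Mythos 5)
Your primary, purely syntactic plan is a genuinely different route from the paper's. The paper never confronts the inductive system $\eqtypemu$ directly: it imports Prop.~\ref{prop:eqtypeSoundnessAndCompleteness} ($A \eqtypemu B$ iff $\toBTree{A} \eqtypeco \toBTree{B}$) from prior work, reduces tree equivalence to its finite truncations via Lem.~\ref{lem:cutEquivalenceCo}, and then proves only the new bridge Lem.~\ref{lem:eqtypealSoundnessAndCompleteness} ($A \eqtypeal B$ iff $\forall k.\ \cut{\toBTree{A}}{k} \eqtypeco \cut{\toBTree{B}}{k}$) by exhibiting $\Phi$-dense relations and inducting on the number of $\irectype$'s in the maximal $\irectype\iuniontype$-context; the truncation index $k$ turns every knot-tying issue into a plain induction. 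Your plan instead re-proves the $\eqtypemu$ side from scratch, and the cost is higher than your phrase ``battery of coinductive lemmas'' suggests: closure of $\eqtypeal$ under \ruleEqmuRec{} needs a substitution lemma for open types, closure under \ruleEqmuContr{} is the unique-fixed-point property of contractive equations (itself a substantial coinductive argument), transitivity of $\eqtypeal$ requires aligning the context discipline of \ruleEqalRecL{}/\ruleEqalRecR{} across the composed derivations, and the converse direction is the full Brandt--Henglein-style knot-tying completeness proof, further complicated here by the $f,g$ reindexings of \ruleEqalUnion{}. All of this is workable in principle --- you correctly name every care point --- but none of the hard steps is actually discharged, and the paper's detour through trees exists precisely to avoid them.

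Your fallback, however, is wrong as stated: you propose to characterize both relations by $\toBTree{A} = \toBTree{B}$, but $\toBTree{\cdot}$ preserves the order and association of the $\iuniontype$ nodes, so for instance $\uniontype{\consttype{c}}{\consttype{d}} \eqtypemu \uniontype{\consttype{d}}{\consttype{c}}$ while the two trees are distinct; literal tree equality is strictly finer than either relation. The correct semantic invariant is not equality but the coinductive relation $\eqtypeco$ over $\Tree$, whose union rule carries the same ACI reindexing functions; replacing $=$ by $\eqtypeco$ turns your fallback into essentially the paper's proof.
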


This will allow us to check $A \eqtypemu B$ by using the invertibility of the
generating function (implicit in the rules of Fig.~\ref{fig:eqtypeSchemesAl})
for $\eqtypeal$. The proof of Prop.~\ref{prop:eqtypealSoundnessAndCompleteness}
relies on an intermediate relation $\eqtypeco$ over the possibly infinite trees
resulting from the complete unfolding of $\mu$-types. This relation is defined
using the same rules as in Fig.~\ref{fig:eqtypeSchemesAl} except for two
important differences: 1) the relation is defined over regular trees in \Tree,
and 2) rules $\ruleEqalRecL$ and $\ruleEqalRecR$ are dropped.

\begin{figure} $$
\begin{array}{c}
\RuleCo{}
       {a \eqtypeal a}
       {\ruleEqalRefl}
\\
\\
\RuleCo{D \eqtypeal D' \quad A \eqtypeal A'}
       {\datatype{D}{A} \eqtypeal \datatype{D'}{A'}}
       {\ruleEqalComp}
\qquad
\RuleCo{A \eqtypeal A'
        \quad 
        B \eqtypeal B'}
       {\functype{A}{B} \eqtypeal \functype{A'}{B'}}
       {\ruleEqalFunc}
\\
\\
\RuleCo{\ctxtlra[\substitute{V}{\rectype{V}{A}}{A}] \eqtypeal B
       \quad
       }
       {\ctxtlra[\rectype{V}{A}] \eqtypeal B}
       {\ruleEqalRecL}
\\
\\
\RuleCo{A \eqtypeal \ctxtlrb[\substitute{W}{\rectype{W}{B}}{B}]
        \quad
        A \neq \ctxtlra[\rectype{V}{C}]
       }
       {A \eqtypeal \ctxtlrb[\rectype{W}{B}]}
       {\ruleEqalRecR}
\\
\\
\RuleCo{\begin{array}{ll}
          A_i \eqtypeal B_{f(i)} & \quad f : 1..n \to 1..m \\
          A_{g(j)} \eqtypeal B_j & \quad g : 1..m \to 1..n
        \end{array}
        \quad
        A_i, B_j \neq \irectype,\iuniontype
        \quad
        n + m > 2}
       {\maxuniontype{i \in 1..n}{A_i} \eqtypeal \maxuniontype{j \in 1..m}{B_j}}
       {\ruleEqalUnion}
\end{array} $$
\caption{Coinductive axiomatization of type equality for contractive $\mu$-types.}
\label{fig:eqtypeSchemesAl}
\end{figure}

The proof is structured as follows. First we characterize equality of
$\mu$-types in terms of equality of their infinite unfoldings~\cite{DBLP:journals/entcs/VisoBA16}:

\begin{proposition}
\label{prop:eqtypeSoundnessAndCompleteness}
$A \eqtypemu B$ iff $\toBTree{A} \eqtypeco \toBTree{B}$.
\end{proposition}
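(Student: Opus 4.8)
The plan is to prove the two implications separately. The forward direction $A \eqtypemu B \Rightarrow \toBTree A \eqtypeco \toBTree B$ is \emph{soundness} of the axiomatic system of Fig.~\ref{fig:equivalenceSchemesMu} with respect to equality of the (regular, infinite) unfoldings, and I would establish it by induction on the derivation of $A \eqtypemu B$, checking that every rule is validated after applying $\toBTree{\cdot}$. The backward direction is \emph{completeness}, and there I would exploit regularity of $\toBTree A$ and $\toBTree B$ (finitely many distinct subtrees) to convert the coinductive justification of $\toBTree A \eqtypeco \toBTree B$ into a finite $\eqtypemu$-derivation, closing cycles with the contraction rule $\ruleEqmuContr$.

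For soundness I would first record that $\eqtypeco$ is an equivalence: reflexivity holds because the identity relation on trees is $\Phieqtypeco$-consistent, and symmetry, transitivity follow by the coinduction principle (the converse of $\eqtypeco$, respectively its self-composition, is again $\Phieqtypeco$-consistent, hence contained in $\eqtypeco$). This discharges $\ruleEqmuRefl$, $\ruleEqmuSymm$, $\ruleEqmuTrans$. The congruence rules $\ruleEqmuComp$, $\ruleEqmuFunc$, $\ruleEqmuUnion$ and the ACI rules $\ruleEqmuUnionIdem$, $\ruleEqmuUnionComm$, $\ruleEqmuUnionAssoc$ transfer directly to their tree counterparts, the last three being absorbed by the $f,g$ reindexing in rule $\ruleEqcoUnion$; rule $\ruleEqmuRec$ is then just congruence. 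Crucially, the fold rule $\ruleEqmuFold$ comes for free: by the defining clause $\toBTree{\rectype V A}(\pi) = \toBTree{\substitute V {\rectype V A} A}(\pi)$ of Def.~\ref{def:toBTree}, its two sides denote the \emph{same} tree, hence are $\eqtypeco$-related by reflexivity.

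The genuinely delicate case, and the main obstacle of this direction, is $\ruleEqmuContr$: from the induction hypothesis $\toBTree A \eqtypeco \toBTree{\substitute V A B}$ and contractiveness of $\rectype V B$ I must obtain $\toBTree A \eqtypeco \toBTree{\rectype V B}$, where $\toBTree{\rectype V B} = \toBTree{\substitute V {\rectype V B} B}$; the obligation is circular, since the desired conclusion reappears under the recursion. I would resolve it by a unique--solution--of--guarded--equations argument. Writing $F(t)$ for the tree obtained from $\toBTree B$ by grafting $t$ at every leaf labelled $V$, one has $\toBTree{\substitute V X B} = F(\toBTree X)$, so the hypothesis reads $\toBTree A \eqtypeco F(\toBTree A)$ while $T \eqdef \toBTree{\rectype V B}$ satisfies $T = F(T)$. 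Contractiveness forces every $V$-leaf of $\toBTree B$ to sit under a $\idatatype$ or $\ifunctype$ node, both of which increase depth, so using the truncation $\cut{\cdot}{k}$ of Def.~\ref{def:treeCut} one shows $F$ is non-expansive and in fact a contraction modulo $\eqtypeco$; as $\toBTree A$ and $T$ are then fixed points of a contraction on the metric quotient $\Tree/{\eqtypeco}$, they must be $\eqtypeco$-equal. (Equivalently, one exhibits a $\Phieqtypeco$-consistent relation containing $(\toBTree A, T)$ directly, descending through the guarded $F$.) Proving that $\eqtypeco$ is a congruence compatible with this metric is the technical heart of the case.

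For completeness I would use regularity. Since $\toBTree A$ and $\toBTree B$ have only finitely many subtrees, running the (invertible) generating function $\Phieqtypeco$ backwards from $(\toBTree A, \toBTree B)$ --- by the analogue for $\eqtypeco$ of Prop.~\ref{prop:eqtypealInvertibility} --- visits only finitely many pairs of subtrees, each presentable as $(\toBTree{A'}, \toBTree{B'})$ for subexpressions-after-unfolding $A', B'$ of $A, B$, yielding a finite $\Phieqtypeco$-consistent set of goals. I would then translate this finite graph into an $\eqtypemu$-derivation: local decompositions are reproduced by the corresponding congruence rules, the ACI reindexings of $\ruleEqcoUnion$ are realised by finitely many applications of $\ruleEqmuUnion$, $\ruleEqmuUnionComm$, $\ruleEqmuUnionAssoc$ and $\ruleEqmuUnionIdem$, and --- the key move --- each back-edge, i.e.\ each goal revisited through a recursion, is discharged by $\ruleEqmuFold$ followed by $\ruleEqmuContr$, whose guardedness side-condition is precisely guaranteed by contractiveness. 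The main obstacle here is the bookkeeping that simultaneously reconciles the ACI matching maps $f,g$ with finite union derivations and keeps the cycle structure compatible with the side-condition of $\ruleEqmuContr$; the contexts $\ctxtlra,\ctxtlrb$ isolating the first $\irectype$ in a maximal union (Fig.~\ref{fig:eqtypeSchemesAl}) are what make this localisation unambiguous. Combining the two implications yields the stated equivalence.
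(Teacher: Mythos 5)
A preliminary remark: the paper itself does not prove this proposition --- it is imported from~\cite{DBLP:journals/entcs/VisoBA16} (``we characterize equality of $\mu$-types in terms of equality of their infinite unfoldings''), so there is no in-paper argument to compare yours against. Judged on its own terms, your plan is the classical Amadio--Cardelli route (rule induction for soundness, finite-witness-to-derivation for completeness), its overall shape is viable, and you correctly single out $\ruleEqmuFold$ as trivially sound and $\ruleEqmuContr$ as the crux; the unique-solution-of-guarded-equations argument is the right tool for the latter, provided you route it through the truncations of Def.~\ref{def:treeCut} and Lem.~\ref{lem:cutEquivalenceCo} rather than through an unconstructed metric quotient $\Tree/{\eqtypeco}$.

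There are, however, concrete obligations your sketch elides. First, $\ruleEqmuRec$ is not ``just congruence'': it is a binder rule, the induction hypothesis $\toBTree{A}\eqtypeco\toBTree{B}$ concerns open trees with $V$-leaves, and passing to $\toBTree{\rectype{V}{A}}\eqtypeco\toBTree{\rectype{V}{B}}$ needs exactly the guarded-fixed-point (or truncation-induction) machinery you reserve for $\ruleEqmuContr$, together with the substitution lemma $\toBTree{\substitute{V}{X}{B}}=F(\toBTree{X})$ that you use but do not prove. Second, in the completeness direction, discharging ``each back-edge by $\ruleEqmuFold$ followed by $\ruleEqmuContr$'' is too local: $\ruleEqmuContr$ closes a cycle only once you have exhibited a single contractive $\rectype{V}{C}$ of which both components of the revisited pair are provably pre-fixed points, and with several mutually recursive goals this forces a simultaneous, Beki\v{c}-style solution of the whole finite equation system (producing nested $\mu$'s), plus a check that every cycle in the goal graph crosses a $\ifunctype$ or $\idatatype$ decomposition so that the resulting $C$ is contractive --- true here because maximal unions have non-union components, but it must be argued, since $\iuniontype$ contributes no depth. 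Finally, the appeal to invertibility of $\Phieqtypeco$ is neither available ($\ruleEqcoUnion$ admits several choices of $f,g$, hence several minimal supports) nor needed: it suffices to restrict an arbitrary $\Phieqtypeco$-dense witness to the finitely many pairs of subtrees of $\toBTree{A}$ and $\toBTree{B}$, which remains dense by inspection of the rules. None of these is fatal, but each is a genuine gap between the plan and a proof.
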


The proof of Prop.~\ref{prop:eqtypealSoundnessAndCompleteness} thus reduces to
showing that $A \eqtypeal B$ coincides with $\toBTree{A} \eqtypeco
\toBTree{B}$. In order to do so, we appeal to the following result that states
that inspecting all finite truncations suffices to determine whether
$\toBTree{A} \eqtypeco \toBTree{B}$ holds:

\begin{lemma}
\label{lem:cutEquivalenceCo}
$\forall k \in \Natural.\cut{\tA}{k} \eqtypeco \cut{\tB}{k}$ iff $\tA \eqtypeco \tB$.
\end{lemma}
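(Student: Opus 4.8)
The plan is to prove the two implications of the biconditional separately. Reading it as ``all truncations equivalent iff the trees are equivalent'', the implication from $\tA \eqtypeco \tB$ to $\forall k.\,\cut{\tA}{k} \eqtypeco \cut{\tB}{k}$ is a routine induction on $k$, while the converse is a coinductive argument whose single delicate point is extracting a matching of union components that is uniform in the truncation depth; that is where I expect the real difficulty. For the easy implication I would fix a derivation of $\tA\eqtypeco\tB$ and induct on $k$ with $\tA,\tB$ universally quantified. The base case is immediate since $\cut{\tA}{0}=\cut{\tB}{0}=\circ$. For $k+1$ I would use that $\eqtypeco$ is a fixed point of $\Phieqtypeco$, so $\pair{\tA}{\tB}$ is generated by one of \ruleEqalRefl, \ruleEqalComp, \ruleEqalFunc, \ruleEqalUnion\ with premises again in $\eqtypeco$. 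The \ruleEqalComp\ and \ruleEqalFunc\ cases are direct, as their immediate subtrees lie at depth $k$ and the same rule reassembles the truncations via the IH. For \ruleEqalUnion\ the key observation is that $\iuniontype$ does not consume depth, so $\cut{(\maxuniontype{i\in 1..n}{A_i})}{k+1}=\maxuniontype{i\in 1..n}{\cut{A_i}{k+1}}$ with every $\cut{A_i}{k+1}$ still non-union; each component $A_i$ is headed by an atom or by $\idatatype/\ifunctype$, so inverting $A_i\eqtypeco B_{f(i)}$ drops its subtrees to depth $k$, where the IH applies, and the original functions $f,g$ together with the side condition $n+m>2$ carry over unchanged.

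For the hard implication I would argue by coinduction: set $R\eqdef\set{\pair{\tS}{\tT}\mid\forall k.\,\cut{\tS}{k}\eqtypeco\cut{\tT}{k}}$ and establish $R\subseteq\Phieqtypeco(R)$, so that $R\subseteq\eqtypeco$ and the hypothesis places $\pair{\tA}{\tB}$ in $R$. Given $\pair{\tS}{\tT}\in R$, I would write $\tS=\maxuniontype{i\in 1..n}{S_i}$ and $\tT=\maxuniontype{j\in 1..m}{T_j}$ as maximal unions of non-union components, which is legitimate because $\Tree$ contains no infinite $\iuniontype$-branches. Comparing $\cut{\cdot}{1}$ already forces the outermost shapes to agree. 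When $n=m=1$ the pair is justified by \ruleEqalRefl, \ruleEqalComp\ or \ruleEqalFunc, and inverting the hypothesis at every depth (using syntax-directedness of the rules, see Prop.~\ref{prop:eqtypealInvertibility}) yields $\forall k.\,\cut{D}{k}\eqtypeco\cut{D'}{k}$ and $\forall k.\,\cut{A}{k}\eqtypeco\cut{A'}{k}$ for the immediate subtrees $D,A$ of $\tS$ and $D',A'$ of $\tT$; these are exactly the premises required, and they lie in $R$.

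The genuine-union case $n+m>2$ is the main obstacle. Inverting the hypothesis at each depth $k$ gives matchings $f_k:1..n\to 1..m$ and $g_k:1..m\to 1..n$ with $\cut{S_i}{k}\eqtypeco\cut{T_{f_k(i)}}{k}$ and $\cut{S_{g_k(j)}}{k}\eqtypeco\cut{T_j}{k}$, but \ruleEqalUnion\ demands a single pair $f,g$ valid for all $k$ simultaneously. I would resolve this in two steps. First, a finiteness step: $n,m$ are fixed, so only finitely many pairs $(f_k,g_k)$ are possible, and some pair $(f,g)$ recurs for infinitely many $k$. Second, a downward-closure property, $\cut{\tC}{k+1}\eqtypeco\cut{\tD}{k+1}\Rightarrow\cut{\tC}{k}\eqtypeco\cut{\tD}{k}$, which upgrades ``holds for infinitely many $k$'' to ``holds for all $k$''. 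This property I would obtain by applying the already-established easy implication to the finite trees $\cut{\tC}{k+1},\cut{\tD}{k+1}$, combined with the syntactic identity $\cut{(\cut{\tC}{k+1})}{k}=\cut{\tC}{k}$, itself proved by induction on the depth and on the finite top-level $\iuniontype$-nesting. Consequently $\pair{S_i}{T_{f(i)}}\in R$ for every $i$ and $\pair{S_{g(j)}}{T_j}\in R$ for every $j$; feeding $f,g$ and these premises into \ruleEqalUnion\ (whose side conditions $S_i,T_j\neq\iuniontype$ and $n+m>2$ are inherited) shows $\pair{\tS}{\tT}\in\Phieqtypeco(R)$, which closes the coinduction and hence the lemma.
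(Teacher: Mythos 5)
Your proposal is correct and follows essentially the same route as the paper: the delicate direction is handled exactly as in the paper's proof, by showing that $\S = \set{\pair{\tA}{\tB} \mid \forall k.\ \cut{\tA}{k} \eqtypeco \cut{\tB}{k}}$ is $\Phieqtypeco$-dense and invoking the coinduction principle, with your pigeonhole-plus-downward-closure argument supplying the uniform matching $(f,g)$ that the union rule requires. The only (inessential) deviation is that you prove the easy direction by induction on $k$, where the paper instead shows the companion relation $\R = \set{\pair{\cut{\tA}{k}}{\cut{\tB}{k}} \mid \tA \eqtypeco \tB,\ k \in \Natural}$ to be $\Phieqtypeco$-dense; the two arguments are interchangeable here because truncations are finite trees.
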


\begin{proof}
This is proved by showing that the relations $\S \eqdef \set{\pair{\tA}{\tB}
\mathrel| \forall k \in \Natural. \cut{\tA}{k} \eqtypeco \cut{\tB}{k}}$ and
$\R \eqdef \set{\pair{\cut{\tA}{k}}{\cut{\tB}{k}} \mathrel| \tA \eqtypeco \tB,
k \in \Natural}$ are $\Phieqtypeco$-dense. Then we conclude by the coinductive
principle. For full details refer to~\cite{DBLP:journals/entcs/VisoBA16}.
\end{proof}

The proofs of Lem.~\ref{lem:eqtypealSoundnessAndCompleteness}
and~\ref{lem:subtypealSoundnessAndCompleteness} rely on some key lemmas which
we now state, preceded by some preliminary notions. Let $\cA,\cB,\ldots$ refer
to (multi-hole) $\irectype\iuniontype$-contexts: $$\cA \Coloneq \Box \mathrel|
\rectype{V}{\cA} \mathrel| \uniontype{\cA}{\cA}$$

When $\cA$ consists solely of $\irectype$ type constructors, then we say that
it is a $\irectype$-context. As for types, we use notation $\maxuniontype{i \in
1..n}{\cA_i}$ when referring to a context irrespective of how unions are
associated within it. Any $A \in \Type$, can be uniquely written as a
\emph{maximal $\irectype\iuniontype$-context} $A = \cA[A_1, \ldots, A_n]$,
where each $A_i \neq \irectype,\iuniontype$, $i \in 1..n$. The
\emph{$\irectype\iuniontype$-depth} of $A$ is the depth of $\cA$.

\begin{definition}
\label{def:projection}
The \emphdef{projection} of a hole's position $\pi$ in $\cA$, notation
$\project{\pi}{\cA[\vec{A}]}$, is defined by induction on $\cA$ as follows: $$
\begin{array}{r@{\quad\eqdef\quad}l@{\quad}l}
\project{\epsilon}{\Box[A]}                                      & A \\
\project{1\pi}{(\rectype{V}{\cA[\vec{A}]})}                      & \project{\pi}{\substitute{V}{\rectype{V}{\cA[\vec{A}]}}{\cA[\vec{A}]}} \\
\project{i\pi}{(\uniontype{\cA_1[\vec{A_1}]}{\cA_2[\vec{A_2}]})} & \project{\pi}{\cA_i[\vec{A_i}]} & i = 1,2
\end{array} $$
\end{definition}

We write $\eraserec{\cA}$ for the $\irectype\iuniontype$-context resulting from
dropping all the $\irectype$ type constructors from $\cA$: $$
\begin{array}{r@{\quad\eqdef\quad}l}
\eraserec{\Box}                       & \Box \\
\eraserec{(\rectype{V}{\cA})}         & \eraserec{\cA} \\
\eraserec{(\uniontype{\cA_1}{\cA_n})} & \uniontype{\eraserec{\cA_1}}{\eraserec{\cA_n}}
\end{array} $$

\begin{lemma}
\label{lem:forgettingMu}
Let $\cA$ be a $\irectype\iuniontype$-context. Then, $$\toBTree{\cA[\vec{A}]} =
\eraserec{\cA}[\toBTree{\project{\pi_1}{\cA[\vec{A}]}}, \ldots,
\toBTree{\project{\pi_n}{\cA[\vec{A}]}}]$$ where $\pi_i$ is the position of the
$i$-th hole (in the left-to-right order) in $\cA$.
\end{lemma}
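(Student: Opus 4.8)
The plan is to proceed by structural induction on the $\irectype\iuniontype$-context $\cA$, establishing the identity for \emph{all} fillers $\vec A$ at once, so that the induction hypothesis can be reused with different fillers in the recursive step. Since $\toBTree{\bullet}$, $\project{\bullet}{\bullet}$ and the filling of tree-contexts are all specified positionwise, and equality in $\Tree$ is determined positionwise, each case reduces to matching the relevant defining clauses. The three cases of the grammar $\cA \Coloneq \Box \mathrel| \rectype{V}{\cA} \mathrel| \uniontype{\cA}{\cA}$ correspond exactly to the three clauses of $\eraserec{\bullet}$ and of $\project{\bullet}{\bullet}$, so the hole positions line up automatically. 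The base case $\cA = \Box$ is immediate: $\eraserec{\Box} = \Box$, the unique hole sits at $\epsilon$, and $\project{\epsilon}{\Box[A]} = A$, so both sides equal $\toBTree{A}$. For the union case $\cA = \uniontype{\cA_1}{\cA_2}$, I split $\vec A = \vec{A_1},\vec{A_2}$ according to the two sub-contexts; the root clause of $\toBTree{\bullet}$ places $\iuniontype$ at the root with subtrees $\toBTree{\cA_1[\vec{A_1}]}$ and $\toBTree{\cA_2[\vec{A_2}]}$, while $\eraserec{\uniontype{\cA_1}{\cA_2}} = \uniontype{\eraserec{\cA_1}}{\eraserec{\cA_2}}$ and $\project{i\pi}{\uniontype{\cA_1[\vec{A_1}]}{\cA_2[\vec{A_2}]}} = \project{\pi}{\cA_i[\vec{A_i}]}$ show that the right-hand side is the $\iuniontype$-tree whose children are the two filled erased sub-contexts. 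The induction hypothesis on $\cA_1$ and $\cA_2$ identifies these children with the two subtrees, closing the case.

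The recursion case $\cA = \rectype{V}{\cA'}$ is the crux. The recursive clause of $\toBTree{\bullet}$ gives $\toBTree{\rectype{V}{\cA'[\vec A]}} = \toBTree{\substitute{V}{\rectype{V}{\cA'[\vec A]}}{\cA'[\vec A]}}$. Writing $M \eqdef \rectype{V}{\cA'[\vec A]}$, the decisive observation is that the skeleton of a $\irectype\iuniontype$-context carries no atoms or constructors, so every free occurrence of $V$ in $\cA'[\vec A]$ must lie inside a filler; after $\alpha$-renaming the binders of $\cA'$ away from the free variables of $M$, the substitution therefore distributes over hole-filling, $\substitute{V}{M}{\cA'[\vec A]} = \cA'[\vec{A'}]$ with $\vec{A'} \eqdef \substitute{V}{M}{\vec A}$. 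I would then rewrite the projections at the holes of $\cA$ — each located at a position $1\pi_j$ — using $\project{1\pi}{\rectype{V}{\cA'[\vec A]}} = \project{\pi}{\substitute{V}{M}{\cA'[\vec A]}} = \project{\pi}{\cA'[\vec{A'}]}$, and rewrite the erased context using $\eraserec{\rectype{V}{\cA'}} = \eraserec{\cA'}$. Applying the induction hypothesis to $\cA'$ with the fillers $\vec{A'}$ yields $\eraserec{\cA'}[\ldots] = \toBTree{\cA'[\vec{A'}]}$, which equals $\toBTree{\rectype{V}{\cA'[\vec A]}}$ by the unfolding above; this is exactly the desired identity.

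The step requiring the most care is precisely this substitution identity: it rests on the fact that $\irectype\iuniontype$-contexts have no content in their skeleton, so that $V$ occurs only in the fillers and the substitution passes cleanly through the holes, together with the routine $\alpha$-conversion needed to keep the $\irectype$-binders inside $\cA'$ from capturing variables free in $M$. One should also note that the recursion defining $\toBTree{\bullet}$ and $\project{\bullet}{\bullet}$ on $M$ terminates by the contractiveness assumption recorded earlier, so that these objects are well-defined; the remaining manipulations are a routine matching of the positionwise recursions.
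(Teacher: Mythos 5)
Your proposal is correct and follows essentially the same route as the paper's proof: a structural induction on $\cA$ whose only delicate case is $\cA = \rectype{V}{\cA'}$, resolved by observing that the substitution generated by unfolding passes through the context skeleton into the fillers, so the induction hypothesis applies to $\cA'$ with the substituted fillers and the projection clause realigns the hole positions. Your explicit remarks on strengthening the induction over all fillers and on $\alpha$-renaming make precise what the paper leaves implicit, but the argument is the same.
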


\begin{proof}
Induction on the context $\cA$.
\begin{itemize}
  \item $\cA = \Box$: $$\toBTree{\cA[A]} = \toBTree{\Box[A]} = \toBTree{A} =
  \Box[\toBTree{A}] = \eraserec{\Box}[\toBTree{\project{\epsilon}{\Box[A]}}] =
  \eraserec{\cA}[\toBTree{\project{\epsilon}{\cA[A]}}]$$
  
  \item $\cA = \rectype{V}{\cA'}$: $$
\begin{array}{rcl@{\quad}l}
\toBTree{\cA[\vec{A}]} & = & \toBTree{\rectype{V}{\cA'}[\vec{A}]} \\
                       & = & \toBTree{\substitute{V}{\rectype{V}{\cA'}[\vec{A}]}{\cA'[\vec{A}]}} & \text{Def.~\ref{def:toBTree}} \\
                       & = & \toBTree{(\cA'[\substitute{V}{\rectype{V}{\cA'}[\vec{A}]}{\vec{A}}])} \\
                       & = & \eraserec{\cA'}[\toBTree{\project{\pi_1}{\cA'[\substitute{V}{\rectype{V}{\cA'}[\vec{A}]}{\vec{A}}]}}, \ldots, \\
                       &   & \hphantom{\eraserec{\cA'}[} \toBTree{\project{\pi_n}{\cA'[\substitute{V}{\rectype{V}{\cA'}[\vec{A}]}{\vec{A}}]}}] & \text{inductive hypothesis} \\
                       & = & \eraserec{\cA'}[\toBTree{\substitute{V}{\rectype{V}{\cA'}[\vec{A}]}{\project{\pi_1}{\cA'[\vec{A}]}}}, \ldots, \\
                       &   & \hphantom{\eraserec{\cA'}[} \toBTree{\substitute{V}{\rectype{V}{\cA'}[\vec{A}]}{\project{\pi_n}{\cA'[\vec{A}]}}}] \\
                       & = & \eraserec{\cA}[\toBTree{\project{1\pi_1}{\cA[\vec{A}]}}, \ldots, \toBTree{\project{1\pi_n}{\cA[\vec{A}]}}] & \text{Def.~\ref{def:projection} and $\eraserec{\cA}$}
\end{array} $$
  
  \item $\cA = \uniontype{\cA_1}{\cA_2}$: $$
\begin{array}{rcl@{\quad}l}
\toBTree{\cA[\vec{A}]} & = & \toBTree{\uniontype{\cA_1[\vec{A_1}]}{\cA_2[\vec{A_2}]}} & \vec{A} = \vec{A_1},\vec{A_2} \\
                       & = & \uniontype{\toBTree{\cA_1[\vec{A_1}]}}{\toBTree{\cA_2[\vec{A_2}]}} & \text{Def.~\ref{def:toBTree}} \\
                       & = & \eraserec{\cA_1}[\toBTree{\project{\pi'_1}{\cA_1[\vec{A_1}]}}, \ldots, \toBTree{\project{\pi'_{n_1}}{\cA_1[\vec{A_1}]}}] \mathrel{\iuniontype} \\
                       &   & \eraserec{\cA_2}[\toBTree{\project{\pi''_1}{\cA_2[\vec{A_2}]}}, \ldots, \toBTree{\project{\pi''_{n_2}}{\cA_2[\vec{A_2}]}}] & \text{inductive hypothesis} \\
                       & = & \eraserec{\cA}[\toBTree{\project{\pi'_1}{\cA_1[\vec{A_1}]}}, \ldots, \toBTree{\project{\pi'_{n_1}}{\cA_1[\vec{A_1}]}}, \\
                       &   & \hphantom{\eraserec{\cA}[} \toBTree{\project{\pi''_1}{\cA_2[\vec{A_2}]}}, \ldots, \toBTree{\project{\pi''_{n_2}}{\cA_2[\vec{A_2}]}}] & \text{Def. of $\eraserec{\cA}$} \\
                       & = & \eraserec{\cA}[\toBTree{\project{\pi_1}{\cA[\vec{A}]}}, \ldots, \toBTree{\project{\pi_n}{\cA[\vec{A}]}}] & \text{Def.~\ref{def:projection}}
\end{array} $$ where $\lista{\pi_1, \ldots, \pi_n} = \lista{1\pi'_1, \ldots,
  1\pi'_{n_1}, 2\pi''_1, \ldots, 2\pi''_{n_2}}$.
\end{itemize}
\end{proof}

We also need to deduce the form of a type $A$ whose associated tree is a
maximal union type.

\begin{lemma}
\label{lem:decomposeUnion}
Let $A \in \Type$, $\tA_i \in \Tree$ for $i \in 1..n$ such that $\toBTree{A} =
\maxuniontype{i \in 1..n}{\tA_i}$ and $\tA_i \neq \iuniontype$. Then, there
exists $n' \leq n$, $\cA, \cA_1, \ldots, \cA_{n'}$ maximal
$\iuniontype$-contexts, $A_1, \ldots, A_{n'} \neq \iuniontype$ and functions
$s,t : 1..n' \to 1..n$ such that $$A = \cA[\vec{A}] \qquad\text{and}\qquad
\toBTree{A_l} = \cA_l[\tA_{s(l)}, \ldots, \tA_{t(l)}]$$ for every $l \in
1..n'$.
\end{lemma}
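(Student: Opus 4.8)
The plan is to isolate the top-level union structure of $A$ as a $\mu$-type, push it through $\toBTree{\bullet}$ to recover the given tree-level union $\maxuniontype{i \in 1..n}{\tA_i}$, and then read off the components $A_l$ together with their blocks of leaves. The point to keep in mind throughout is that a component $A_l$ is required only to satisfy $A_l \neq \iuniontype$, so it may have root $\irectype$ and hence unfold to a tree whose root is $\iuniontype$; this is exactly why $n'$ may be strictly smaller than $n$.

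First I would invoke the pure maximal $\iuniontype$-context decomposition of $A$: every $A \in \Type$ is uniquely $A = \cA[A_1, \ldots, A_{n'}]$, where $\cA$ is built only from $\Box$ and $\iuniontype$ and each $A_l \neq \iuniontype$. This is obtained by repeatedly splitting off a root $\iuniontype$ and stopping at each non-union subterm. I would then transfer this decomposition across $\toBTree{\bullet}$. Since $\cA$ contains no $\irectype$ we have $\eraserec{\cA} = \cA$, and $\project{\pi_l}{\cA[\vec{A}]} = A_l$ for the position $\pi_l$ of the $l$-th hole (by Def.~\ref{def:projection}, following $\pi_l$ through the union spine lands at $A_l$); hence Lemma~\ref{lem:forgettingMu} gives $\toBTree{A} = \cA[\toBTree{A_1}, \ldots, \toBTree{A_{n'}}]$. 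This also follows directly from Def.~\ref{def:toBTree}, as $\toBTree{\bullet}$ sends a root $\iuniontype$ to a root $\iuniontype$ with the images of its two children.

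Next I would analyze each component. The tree $\toBTree{A_l}$ sits as the subtree of $\toBTree{A}$ at position $\pi_l$ inside the union spine $\cA$; since $\toBTree{A} \in \Tree$ and $\Tree$ excludes infinite $\iuniontype$-branches, $\toBTree{A_l} \in \Tree$ and admits a finite maximal union decomposition $\toBTree{A_l} = \cA_l[\ldots]$ into non-union trees, with $\cA_l$ a maximal $\iuniontype$-context having at least one leaf. Because $\cA$ is itself a union context, flattening $\cA[\toBTree{A_1}, \ldots, \toBTree{A_{n'}}]$ to maximal union form concatenates, in left-to-right order, the maximal-union leaves of $\toBTree{A_1}, \ldots, \toBTree{A_{n'}}$. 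Matching this against the canonical leaves $\tA_1, \ldots, \tA_n$ of $\toBTree{A}$ identifies the leaves of $\toBTree{A_l}$ with a contiguous block $\tA_{s(l)}, \ldots, \tA_{t(l)}$, defining $s, t : 1..n' \to 1..n$ with $s(1) = 1$, $t(n') = n$ and $s(l+1) = t(l)+1$. As each block is non-empty and the blocks partition $1..n$, we conclude $n' \leq n$.

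The main obstacle is the last step: making precise that the leaves $\tA_i$, which are defined irrespective of how the unions in $\toBTree{A}$ are associated, split into the contiguous blocks coming from the $A_l$. This needs the facts that the maximal union decomposition of a tree in $\Tree$ is well-defined and finite (guaranteed by the exclusion of infinite $\iuniontype$-branches, itself a consequence of contractiveness) and that plugging trees into a union spine respects this decomposition by concatenation. A clean way to discharge it is an induction on the $\iuniontype$-depth of $A$: in the base case $A \neq \iuniontype$ one takes $\cA = \Box$, $n' = 1$, $A_1 = A$ and $\cA_1$ the full union context of $\toBTree{A}$; in the step $A = \uniontype{A'}{A''}$ one splits $\maxuniontype{i \in 1..n}{\tA_i}$ at the number of leaves of $\toBTree{A'}$, applies the inductive hypothesis to $A'$ and $A''$, and concatenates the resulting contexts, components and blocks.
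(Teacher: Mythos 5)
Your proposal is correct and follows essentially the same route as the paper: both decompose $A$ along its top-level union spine, use Lemma~\ref{lem:forgettingMu} (equivalently, the fact that $\toBTree{\bullet}$ erases $\irectype$ and preserves the $\iuniontype$-structure) to push the decomposition through to the tree, and identify each non-union component $A_l$ with a contiguous block $\tA_{s(l)},\ldots,\tA_{t(l)}$. The only cosmetic difference is that the paper constructs $\cA_l$ syntactically as $\eraserec{\cA'_l}$ from the maximal $\irectype\iuniontype$-context of $A$, whereas you obtain it as the maximal $\iuniontype$-context of $\toBTree{A_l}$ and justify the block structure by induction on the $\iuniontype$-depth; both yield the same contexts by uniqueness of the maximal union decomposition.
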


\begin{proof}
As noted before, every $A \in \Type$ can be uniquely written as a maximal
$\irectype\iuniontype$-context $\cA'$ and types $A'_1, \ldots, A'_m$ such that
$A'_j \neq \irectype,\iuniontype$ for all $j \in 1..m$, \ie $A =
\cA'[\vec{A'}]$. It is immediate to see, by Def.~\ref{def:toBTree}
and~\ref{def:projection}, that $m = n$ and
\begin{equation}
\label{eq:decomposeUnion:tree}
\toBTree{\project{\pi_i}{\cA'[\vec{A'}]}} = \tA_i
\end{equation}
for every $i \in 1..n$.

Then, we can decompose $\cA'$ into two parts, namely a maximal
$\iuniontype$-context $\cA$ with $n' \leq n$ holes, and multiple
$\irectype\iuniontype$-contexts $\cA'_1, \ldots, \cA'_{n'}$ such that $\cA' =
\cA[\vec{\cA'}]$ and each $\cA'_l$ is either $\Box$ or starts with $\irectype$
for every $l \in 1..n'$. Thus, there exists types $A_1, \ldots, A_{n'}$ and
functions $s,t : 1..n' \to 1..n$ such that
\begin{equation}
\label{eq:decomposeUnion:ctxt}
A_l = \cA'_l[A'_{s(l)},\ldots,A'_{t(l)}] \qquad\text{and}\qquad A = \cA[\vec{A}]
\end{equation}
Note that $A_l \neq \iuniontype$ since $\cA'_l = \Box,\irectype$ for every
$l \in 1..n'$.

Finally, let $\rho_1, \ldots, \rho_{n_l}$ be the positions of the holes in
$\cA'_l$, we have $$
\begin{array}{r@{\quad=\quad}l@{\quad}l}
\toBTree{A_l}
& \toBTree{\cA'_l[A'_{s(l)},\ldots,A'_{t(l)}]}
 & \text{by (\ref{eq:decomposeUnion:ctxt})-left} \\
& \eraserec{\cA'_l}[\toBTree{\project{\rho_1}{\cA'_l[A'_{s(l)},     \ldots, A'_{t(l)}]}},
            \ldots, \toBTree{\project{\rho_{n_l}}{\cA'_l[A'_{s(l)}, \ldots, A'_{t(l)}]}}]
 & \text{by Lem.~\ref{lem:forgettingMu}} \\
& \eraserec{\cA'_l}[\toBTree{\project{\pi_{s(l)}}{\cA'[\vec{A'}]}}, \ldots,
                    \toBTree{\project{\pi_{t(l)}}{\cA'[\vec{A'}]}}]
 & \cA' = \cA[\vec{\cA'}] \\
& \eraserec{\cA'_l}[\tA_{s(l)}, \ldots, \tA_{t(l)}]
 & \text{by (\ref{eq:decomposeUnion:tree})}
\end{array} $$
for every $l \in 1..n'$. Then, it is enough to take $\cA_l = \eraserec{\cA'_l}$.
\end{proof}

Finally, we show that $A \eqtypeal B$ can be determined by checking all finite
truncations of its associated regular trees:

\begin{lemma}
\label{lem:eqtypealSoundnessAndCompleteness}
$A \eqtypeal B$ iff $\forall k \in \Natural. \cut{\toBTree{A}}{k} \eqtypeco
\cut{\toBTree{B}}{k}$.
\end{lemma}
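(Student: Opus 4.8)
The plan is to first discharge the truncations and then argue coinductively in both directions. By Lem.~\ref{lem:cutEquivalenceCo} the right-hand side $\forall k \in \Natural.\ \cut{\toBTree{A}}{k} \eqtypeco \cut{\toBTree{B}}{k}$ is equivalent to $\toBTree{A} \eqtypeco \toBTree{B}$, so it suffices to establish $A \eqtypeal B$ iff $\toBTree{A} \eqtypeco \toBTree{B}$. Both $\eqtypeal$ and $\eqtypeco$ are greatest fixed points of invertible generating functions (Prop.~\ref{prop:eqtypealInvertibility}), so each inclusion will be obtained by exhibiting a $\Phieqtypeco$- (resp. $\Phieqtypeal$-) dense relation and appealing to the coinduction principle. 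The observation used throughout is that $\toBTree{\bullet}$ is invariant under unfolding: $\toBTree{\rectype{V}{C}} = \toBTree{\substitute{V}{\rectype{V}{C}}{C}}$ by Def.~\ref{def:toBTree}, and more generally $\toBTree{\ctxtlra[\rectype{V}{C}]} = \toBTree{\ctxtlra[\substitute{V}{\rectype{V}{C}}{C}]}$ by Lem.~\ref{lem:forgettingMu} together with Def.~\ref{def:projection}. This is precisely what lets the two rec rules, which have no counterpart in $\eqtypeco$, be absorbed without disturbing the associated trees.

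For soundness ($\Rightarrow$) I would set $\R \eqdef \set{\pair{\toBTree{A}}{\toBTree{B}} \mathrel| A \eqtypeal B}$ and show it is $\Phieqtypeco$-dense. Given $A \eqtypeal B$, invertibility (Prop.~\ref{prop:eqtypealInvertibility}) determines the unique rule justifying it. If that rule is $\ruleEqalRecL$ or $\ruleEqalRecR$, the premise carries the same tree pair into $\R$ by unfolding-invariance; since the types are contractive the root of $\toBTree{A}$ (and of $\toBTree{B}$) is exposed after finitely many such unfoldings, so I may assume the justifying rule is structural. The cases $\ruleEqalRefl$, $\ruleEqalComp$ and $\ruleEqalFunc$ then mirror directly, the premises landing in $\R$ by Def.~\ref{def:toBTree}. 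For $\ruleEqalUnion$, the head-normalised type is a maximal union $\maxuniontype{i \in 1..n}{A_i}$ with each $A_i \neq \irectype,\iuniontype$, whence $\toBTree{A} = \maxuniontype{i \in 1..n}{\toBTree{A_i}}$ with each $\toBTree{A_i} \neq \iuniontype$; the same functions $f,g$ then justify the pair via the $\eqtypeco$ union rule, its premises again lying in $\R$.

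For completeness ($\Leftarrow$) I would dually set $\S \eqdef \set{\pair{A}{B} \mathrel| \toBTree{A} \eqtypeco \toBTree{B}}$ and show it is $\Phieqtypeal$-dense. Here invertibility of $\Phieqtypeal$ selects the rule from the syntactic shape of $A,B$: if $A$ has a leading $\irectype$ inside its maximal union, $\ruleEqalRecL$ applies and its premise stays in $\S$ by unfolding-invariance (no finiteness argument is needed, as density inspects only one layer); symmetrically for $\ruleEqalRecR$ when only $B$ does. When neither type has a leading $\irectype$, each is an atom, a $\idatatype$/$\ifunctype$ compound, or a maximal union of non-$\irectype$, non-$\iuniontype$ components, and its tree has the corresponding root; the rule justifying $\toBTree{A} \eqtypeco \toBTree{B}$, determined by the common root symbol of the trees, then yields the matching $\eqtypeal$ rule with premises in $\S$. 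The union subcase is the one requiring care: I use Lem.~\ref{lem:decomposeUnion} together with the absence of a leading $\irectype$ to conclude that the maximal-union components of $A$ and $B$ are in bijective, order-preserving correspondence with the union components of $\toBTree{A}$ and $\toBTree{B}$, so that the $f,g$ supplied by the tree-level union rule transfer verbatim to $\ruleEqalUnion$.

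The main obstacle, and the reason the two preceding lemmas are needed, is exactly the interaction between $\irectype$ and $\iuniontype$ at the head of a type: recursion can be interleaved with unions (e.g. $\rectype{V}{(\uniontype{\consttype{c}}{\datatype{\consttype{d}}{V}})}$), so the maximal union visible in $\toBTree{A}$ need not be syntactically present in $A$. Lem.~\ref{lem:forgettingMu} (unfolding commutes with erasing $\irectype$ from a $\irectype\iuniontype$-context) and Lem.~\ref{lem:decomposeUnion} (recovering a union decomposition of $A$ from one of $\toBTree{A}$) are what reconcile the two presentations in this case, while contractiveness supplies the finiteness that makes the head exposable in the soundness direction.
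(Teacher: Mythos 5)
Your proposal is correct and follows essentially the same route as the paper: coinduction in both directions via $\Phieqtypeco$- and $\Phieqtypeal$-dense relations, absorbing the rec rules through the unfolding-invariance of $\toBTree{\bullet}$ with contractiveness supplying termination, and using Lem.~\ref{lem:forgettingMu} and Lem.~\ref{lem:decomposeUnion} to reconcile the $\irectype$/$\iuniontype$ interaction in the union cases. The only differences are presentational: you discharge the truncations up front via Lem.~\ref{lem:cutEquivalenceCo} and work with full trees, where the paper carries the cuts $\cut{\cdot}{k}$ through both directions, and you replace the paper's explicit induction on the number of head $\irectype$ constructors by a "finitely many unfoldings" argument of the same substance.
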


\begin{proof}
For this proof we use notation $\card{\irectype}{A}$ for the number of
occurrences of the $\irectype$ type constructor in the maximal
$\irectype\iuniontype$-context $\cA$ defining $A = \cA[\vec{A}]$.

$\Rightarrow)$ Given $A \eqtypeal B$, there exists $\S \in
\powerset{\Type\times\Type}$ such that $\S \subseteq \Phieqtypeal(\S)$ and
$\pair{A}{B} \in \S$. Define: $$\R(\S) \eqdef
\set{\pair{\cut{\toBTree{A'}}{k}}{\cut{\toBTree{B'}}{k}} \mathrel|
\pair{A'}{B'} \in \S, k \in \Natural}$$ We show that $\R(\S) \subseteq
\Phieqtypeco(\R(\S))$.

Define $\S_c \eqdef \set{\pair{A'}{B'} \in \S \mathrel| c = \card{\irectype}{A}
+ \card{\irectype}{B}}$. For each $c \in \Natural$, we prove: $$\pair{A'}{B'}
\in \S_c \implies \forall k \in \Natural.
\pair{\cut{\toBTree{A'}}{k}}{\cut{\toBTree{B'}}{k}} \in \Phieqtypeco(\R(\S))$$

Note that if $k = 0$, the result follows directly from $\ruleEqcoRefl$ since
$\pair{\cut{\toBTree{A'}}{k}}{\cut{\toBTree{B'}}{k}} = \pair{\circ}{\circ}
\in \Phieqtypeco(\R(\S))$ trivially. So we in fact prove $$\pair{A'}{B'} \in
\S_c \implies \forall k > 0 \in \Natural.
\pair{\cut{\toBTree{A'}}{k}}{\cut{\toBTree{B'}}{k}} \in \Phieqtypeco(\R(\S))$$

We proceed by induction on $c$.
\begin{itemize}
  \item $c = 0$. Since $\pair{A}{B} \in \S_0$ implies $\pair{A}{B} \in \S
  \subseteq \Phieqtypeal(\S)$, one of the following cases must hold:
  \begin{itemize}
    \item $\pair{A}{B} = \pair{a}{a}$. Then
    $\pair{\cut{\toBTree{A}}{k}}{\cut{\toBTree{A}}{k}} = \pair{a}{a}$ for every
    $k > 0$ and we conclude by $\ruleEqcoRefl$, $\Phieqtypeco(\R(\S))$.
    
    \item $\pair{A}{B} = \pair{\functype{A'}{B'}}{\functype{A''}{B''}}$ with
    $\pair{A'}{A''} \in \S$ and $\pair{B'}{B''} \in \S$. Then, by definition of
    $\R$ we have both
    $\pair{\cut{\toBTree{A'}}{k-1}}{\cut{\toBTree{A''}}{k-1}},
    \pair{\cut{\toBTree{B'}}{k-1}}{\cut{\toBTree{B''}}{k-1}} \in \R(\S)$ for
    every $k > 0$. Finally, by $\ruleEqcoFunc$ and Def.~\ref{def:treeCut},
    $$\pair{\cut{\toBTree{\functype{A'}{B'}}}{k}}{\cut{\toBTree{\functype{A''}{B''}}}{k}}
    \in \Phieqtypeco(\R(\S))$$ for every $k > 0$.
    
    \item $\pair{A}{B} = \pair{\datatype{D}{A'}}{\datatype{D'}{B'}}$ with
    $\pair{D}{D'} \in \S$ and $\pair{A'}{B'} \in \S$. Similarly to the previous
    case, we get $\pair{\cut{\toBTree{D}}{k-1}}{\cut{\toBTree{D'}}{k-1}},
    \pair{\cut{\toBTree{A'}}{k-1}}{\cut{\toBTree{B'}}{k-1}}\in \R(\S)$ for any
    $k > 0$ from Def.~\ref{def:toBTree}, and conclude by applying
    $\ruleEqcoComp$ with Def.~\ref{def:treeCut}.
    
    \item $\pair{A}{B} = \pair{\maxuniontype{i \in 1..n}{A_i}}{\maxuniontype{j
    \in 1..m}{B_j}}$ with $n+m > 2$, $A_i,B_j \neq \irectype,\iuniontype$ and
    there exists functions $f : 1..n \to 1..m, g : 1..m \to 1..n$ s.t.
    $\pair{A_i}{B_{f(i)}} \in \S$ and $\pair{A_{g(j)}}{B_j} \in \S$ for every
    $i \in 1..n, j \in 1..m$.
    
    Once again, we have
    $\pair{\cut{\toBTree{A_i}}{k}}{\cut{\toBTree{B_{f(i)}}}{k}},
    \pair{\cut{\toBTree{A_{g(j)}}}{k}}{\cut{\toBTree{B_j}}{k}} \in \R(\S)$ for
    every $i \in 1..n, j \in 1..m, k > 0$ by Def.~\ref{def:toBTree} and $\R$.
    
    Moreover, since $A_i,B_j \neq \irectype,\iuniontype$ we can assure
    $\toBTree{A_i},\toBTree{B_j} \neq \iuniontype$ too. Thus, we are able to
    apply $\ruleEqcoUnion$ with $f$ and $g$ to conclude,
    $$\pair{\cut{\toBTree{\maxuniontype{i
    \in 1..n}{A_i}}}{k}}{\cut{\toBTree{\maxuniontype{j \in 1..m}{B_j}}}{k}}
    \in \Phieqtypeco(\R(\S))$$ for every $k > 0$.
  \end{itemize}
  
  \item $c > 1$. Then:
  \begin{itemize}
    \item $\pair{A}{B} = \pair{\ctxtlra[\rectype{V}{A'}]}{B}$ with
    $\pair{\ctxtlra[\substitute{V}{\rectype{V}{A'}}{A'}]}{B} \in \S$. By
    contractiveness of $\mu$-types, we have
    $\card{\irectype}{\ctxtlra[\substitute{V}{\rectype{V}{A'}}{A'}]} <
    \card{\irectype}{\ctxtlra[\rectype{V}{A'}]}$. Hence we can apply the
    inductive hypothesis to get
    $$\pair{\cut{\toBTree{\ctxtlra[\substitute{V}{\rectype{V}{A'}}{A'}]}}{k}}{\cut{\toBTree{B}}{k}}
    \in \Phieqtypeco(\R(\S))$$ for every $k > 0$. Moreover, since
    $\toBTree{\ctxtlra[\rectype{V}{A'}]} =
    \toBTree{\ctxtlra[\substitute{V}{\rectype{V}{A'}}{A'}]}$ by
    Def.~\ref{def:toBTree}, we can safely conclude.
    
    \item $\pair{A}{B} = \pair{A}{\ctxtlrb[\rectype{W}{B'}]}$ with
    $\pair{A}{\ctxtlrb[\substitute{W}{\rectype{W}{B'}}{B'}]} \in \S$ and $A
    \neq \ctxtlra[\rectype{V}{C}]$. As before, we conclude directly from the
    inductive hypothesis by resorting to contractiveness of $\mu$-types. Thus,
    $$\pair{\cut{\toBTree{A}}{k}}{\cut{\toBTree{\ctxtlrb[\rectype{W}{B'}]}}{k}}
    \in \Phieqtypeco(\R(\S))$$ for every $k > 0$.
  \end{itemize}
\end{itemize}

$\Leftarrow)$ We show that $\R \eqdef \set{\pair{A'}{B'} \mathrel| A', B' \in
\Type, \forall k \in \Natural. \cut{\toBTree{A'}}{k} \eqtypeco
\cut{\toBTree{B'}}{k}}$ is $\Phieqtypeal$-dense.

Let $\pair{A}{B} \in \R$. We proceed by induction on $c = \card{\irectype}{A} +
\card{\irectype}{B}$. From now on we consider $k > 0$ to avoid the border case
$\pair{\cut{\toBTree{A}}{0}}{\cut{\toBTree{B}}{0}} = \pair{\circ}{\circ}$.
\begin{itemize}
  \item $c = 0$. Since $\pair{\cut{\toBTree{A}}{k}}{\cut{\toBTree{B}}{k}} \in
  \mathbin{\eqtypeco}$ and $\mathbin{\eqtypeco} = \Phieqtypeco(\eqtypeco)$, one
  of the following cases must hold:
  \begin{itemize}
    \item $\pair{\cut{\toBTree{A}}{k}}{\cut{\toBTree{B}}{k}} = \pair{a}{a}$.
    Then, since $c = 0$ it must be the case $A = a = B$. Thus, by
    $\ruleEqalRefl$, $\pair{A}{B} \in \Phieqtypeal(\R)$.
    
    \item $\pair{\cut{\toBTree{A}}{k}}{\cut{\toBTree{B}}{k}} =
    \pair{\functype{\cut{\tA'}{k-1}}{\cut{\tA''}{k-1}}}
         {\functype{\cut{\tB'}{k-1}}{\cut{\tB''}{k-1}}}$
    with
\begin{equation}
\label{eq:eqCompl:func}
\cut{\tA'}{k-1} \eqtypeco \cut{\tB'}{k-1}
\qquad\text{and}\qquad
\cut{\tA''}{k-1} \eqtypeco \cut{\tB''}{k-1}
\end{equation}
    Once again, since the outermost constructor of both $A$ and $B$ is not
    $\irectype$, there exists $A',A'',B',B'' \in \Type$ such that $A =
    \functype{A'}{A''}$ and $B = \functype{B'}{B''}$ with $\toBTree{A'} = \tA'$,
    $\toBTree{A''} = \tA''$, $\toBTree{B'} = \tB'$ and $\toBTree{B''} = \tB''$.
    Moreover, from (\ref{eq:eqCompl:func}) and the definition of $\R$ we get
    $\pair{A'}{B'}, \pair{A''}{B''} \in \R$. We conclude with $\ruleEqalFunc$,
    $\pair{A}{B} \in \Phieqtypeal(\R)$.
    
    \item $\pair{\cut{\toBTree{A}}{k}}{\cut{\toBTree{B}}{k}} =
    \pair{\datatype{\cut{\tA'}{k-1}}{\cut{\tA''}{k-1}}}
         {\datatype{\cut{\tB'}{k-1}}{\cut{\tB''}{k-1}}}$
    with $$\cut{\tA'}{k-1} \eqtypeco \cut{\tB'}{k-1} \qquad\text{and}\qquad
    \cut{\tA''}{k-1} \eqtypeco \cut{\tB''}{k-1}$$ As in the previous case, we
    know there exists $D,D',A',B' \in \Type$ such that $A = \datatype{D}{A'}$,
    $B = \datatype{D'}{B'}$ and $\pair{D}{D'}, \pair{A'}{B'} \in \R$. Here we
    use $\ruleEqalComp$ to conclude.
    
    \item $\pair{\cut{\toBTree{A}}{k}}{\cut{\toBTree{B}}{k}} =
    \pair{\maxuniontype{i \in 1..n}{\cut{\tA_i}{k}}}
         {\maxuniontype{j \in 1..m}{\cut{\tB_j}{k}}}$
    with $n+m > 2$, $\cut{\tA_i}{k}, \cut{\tB_j}{k} \neq \iuniontype$ and there
    exists functions $f : 1..n \to 1..m, g : 1..m \to 1..n$ such that
\begin{equation}
\label{eq:eqCompl:union:hyp}
\cut{\tA_i}{k} \eqtypeco \cut{\tB_{f(i)}}{k}
\qquad\text{and}\qquad
\cut{\tA_{g(j)}}{k} \eqtypeco \cut{\tB_j}{k}
\end{equation}
    for every $i \in 1..n, j \in 1..m$.
    
    By Lem.~\ref{lem:decomposeUnion} on $A$, there exists $n' \leq n$,
    maximal $\iuniontype$-contexts $\cA, \cA_1, \ldots, \cA_{n'}$, types $A_1,
    \ldots, A_{n'} \neq \iuniontype$ and functions $s,t : 1..n' \to 1..n$ such
    that $$A = \cA[\vec{A}] \qquad\text{and}\qquad \toBTree{A_l} =
    \cA_l[\tA_{s(l)}, \ldots, \tA_{t(l)}]$$ for every $l \in 1..n'$. Moreover,
    since $\card{\irectype}{A} = 0$ and $\cA$ is maximal, it can only be the
    case that $n' = n$ and $\cA_l = \Box$ for every $l \in 1..n'$. Thus, $s =
    t = id$ and
\begin{equation}
\label{eq:eqCompl:union:a}
\toBTree{A_i} = \tA_i
\qquad\text{with}\qquad
A_i \neq \irectype,\iuniontype
\end{equation}
    for every $i \in 1..n$.
    
    With a similar analysis we have $B = \cB[\vec{B}]$ with $\cB$ a maximal
    $\iuniontype$-context, and 
\begin{equation}
\label{eq:eqCompl:union:b}
\toBTree{B_j} = \tB_j
\qquad\text{with}\qquad
B_j \neq \irectype,\iuniontype
\end{equation}
    for every $j \in 1..m$. Then, from (\ref{eq:eqCompl:union:hyp}),
    (\ref{eq:eqCompl:union:a})-left and (\ref{eq:eqCompl:union:b})-left we get
    $$\cut{\toBTree{A_i}}{k} \eqtypeco \cut{\toBTree{B_{f(i)}}}{k}
    \qquad\text{and}\qquad \cut{\toBTree{A_{g(j)}}}{k} \eqtypeco
    \cut{\toBTree{B_j}}{k}$$ for every $i \in 1..n, j \in 1..m$. Finally, we
    apply $\ruleEqalUnion$ with $f$, $g$, (\ref{eq:eqCompl:union:a})-right and
    (\ref{eq:eqCompl:union:b})-right to conclude
    $\pair{\maxuniontype{i \in 1..n}{A_i}}
          {\maxuniontype{j \in 1..m}{B_j}} \in \Phieqtypeal(\R)$.
  \end{itemize}
  
  \item $c > 0$. Then we need to distinguish two cases:
  \begin{itemize}
    \item $\card{\irectype}{A} = 0$. Then, it is necessarily the case
    $\card{\irectype}{B} > 0$. Thus, $B = \ctxtlrb[\rectype{W}{B'}]$ and, by
    Def.~\ref{def:toBTree}, we have
    $$\cut{\toBTree{\ctxtlrb[\rectype{W}{B'}]}}{k} =
    \cut{\toBTree{\ctxtlrb[\substitute{W}{\rectype{W}{B'}}{B'}]}}{k}$$ for
    every $k \in \Natural$. Moreover, by contractiveness of $\mu$-types we
    know that $$\card{\irectype}{\ctxtlrb[\substitute{W}{\rectype{W}{B'}}{B'}]}
    < \card{\irectype}{\ctxtlrb[\rectype{W}{B'}]}$$ and we can apply the
    inductive hypothesis to get
    $\pair{A}{\ctxtlrb[\substitute{W}{\rectype{W}{B'}}{B'}]} \in \R$.
    
    Note that $\card{\irectype}{A} = 0$ implies $A \neq
    \ctxtlra[\rectype{V}{A'}]$. Then, we are under the hypothesis of rule
    $\ruleEqalRecR$, and we conclude $\pair{A}{\ctxtlrb[\rectype{W}{B'}]} \in
    \Phieqtypeal(\R)$.
    
    \item $\card{\irectype}{A} > 0$. Then, $A = \ctxtlra[\rectype{V}{A'}]$.
    Similarly to the previous case, by Def.~\ref{def:toBTree} we have
    $$\cut{\toBTree{\ctxtlra[\rectype{V}{A'}]}}{k} =
    \cut{\toBTree{\ctxtlra[\substitute{V}{\rectype{V}{A'}}{A'}]}}{k}$$ for
    every $k \in \Natural$. By contractiveness of $\mu$-types we can safely
    apply the inductive hypothesis to get
    $\pair{\ctxtlra[\substitute{V}{\rectype{V}{A'}}{A'}]}{B} \in \R$, and
    finally conclude with rule $\ruleEqalRecL$,
    $\pair{\ctxtlra[\rectype{V}{A'}]}{B} \in \Phieqtypeal(\R)$.
  \end{itemize}
\end{itemize}
\end{proof}


Thus we can resort to invertibility of the generating function to check for
$\eqtypeal$. Fig.~\ref{algo:equivalenceChecking} presents the algorithm. It
uses $\ttSeq e_1 \ldots e_n$ which sequentially evaluates each of its
arguments, returning the value of the first of these that does not
fail. Evaluation of $\fEqtype{\emptyset}{A}{B}$ can have one of two
outcomes: \ttFail, meaning that $A \not\eqtypeal B$, or a set $S\in
\powerset{\Type\times\Type}$ that is $\Phi$-dense with $(A,B)\in S$,
proving that $A\eqtypeal B$.

\begin{figure} 
{\small $$
\begin{array}{rcl}
\multicolumn{3}{l}{\fEqtype{S}{A}{B}\eqdef} \\
 & & \ttIf \pair{A}{B} \in S \\
 & & \quad \ttThen S \\
 & & \quad \ttElse\ \ttLet S_0 = S \cup \set{\pair{A}{B}} \ttIn \\
 & & \qquad \ttCase \pair{A}{B} \ttOf \\
 & & \qqquad \pair{a}{a} \rightarrow \\
 & & \qqqquad S_0\\

 & & \qqquad \pair{\datatype{A'}{A''}}{\datatype{B'}{B''}} \rightarrow \\
 & & \qqqquad \ttIf A', B' \text{ are datatypes} \\
 & & \qqqqquad \ttThen\ \ttLet S_1 = \fEqtype{S_0}{A'}{B'} \ttIn \\
 & & \qqqqqquad \fEqtype{S_1}{A''}{B''}\\
 & & \qqqqquad \ttElse \ttFail \\

 & & \qqquad \pair{\functype{A'}{A''}}{\functype{B'}{B''}} \rightarrow \\
 & & \qqqquad \ttLet S_1 = \fEqtype{S_0}{A'}{B'} \ttIn \\
 & & \qqqqquad \fEqtype{S_1}{A''}{B''} \\

 & & \qqquad \pair{\ctxtlra[\rectype{V}{A'}]}{B} \rightarrow \\
 & & \qqqquad \fEqtype{S_0}{\ctxtlra[\substitute{V}{\rectype{V}{A'}}{A'}]}{B} \\

 & & \qqquad \pair{A}{\ctxtlrb[\rectype{W}{B'}]} \rightarrow \\
 & & \qqqquad \fEqtype{S_0}{A}{\ctxtlrb[\substitute{W}{\rectype{W}{B'}}{B'}]} \\

 & & \qqquad \pair{\maxuniontype{i \in 1..n}{A_i}}{\maxuniontype{j \in 1..m}{B_j}} \rightarrow \\

 & & \qqqquad \ttLet S_1       = (\ttSeq \fEqtype{S_0}{A_1}{B_1},           \ldots, \fEqtype{S_0}{A_1}{B_m}) \ttIn \\
 & & \qqqquad \ldots \\
 & & \qqqquad \ttLet S_n       = (\ttSeq \fEqtype{S_{n-1}}{A_n}{B_1},       \ldots, \fEqtype{S_{n-1}}{A_n}{B_m}) \ttIn \\
 & & \qqqquad \ttLet S_{n+1}   = (\ttSeq \fEqtype{S_n}{A_1}{B_1},           \ldots, \fEqtype{S_n}{A_n}{B_1}) \ttIn \\
 & & \qqqquad \ldots \\
 & & \qqqquad \ttLet S_{n+m-1} = (\ttSeq \fEqtype{S_{n+m-2}}{A_1}{B_{m-1}}, \ldots, \fEqtype{S_{n+m-2}}{A_n}{B_{m-1}}) \ttIn \\
 & & \qqqqquad \ttSeq \fEqtype{S_{n+m-1}}{A_1}{B_m}, \ldots, \fEqtype{S_{n+m-1}}{A_n}{B_m} \\

 & & \qqquad \mathtt{otherwise} \rightarrow \\
 & & \qqqquad \ttFail
\end{array} $$
} 
\caption{Equivalence checking algorithm.}
\label{algo:equivalenceChecking}
\end{figure}

\subsection{Subtype Checking}

The approach to subtype checking is similar to that of type equivalence.
First consider the relation $\subtypeal$ over $\mu$-types defined in
Fig.~\ref{fig:subtypingSchemesAl}. It captures $\subtypemu$:

\begin{proposition}
\label{prop:subtypealSoundnessAndCompleteness}
$A \subtypeal B$ iff $A \subtypemu B$.
\end{proposition}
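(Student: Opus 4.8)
The plan is to reproduce, step by step, the three-part development that yielded Prop.~\ref{prop:eqtypealSoundnessAndCompleteness} for equivalence. That proof factored through (i) the tree characterization $A \eqtypemu B$ iff $\toBTree{A} \eqtypeco \toBTree{B}$ (Prop.~\ref{prop:eqtypeSoundnessAndCompleteness}), (ii) the truncation lemma (Lem.~\ref{lem:cutEquivalenceCo}), and (iii) the bridge Lem.~\ref{lem:eqtypealSoundnessAndCompleteness} between $\eqtypeal$ and finite truncations. I would build the subtyping statement from the analogous three ingredients and chain them: $A \subtypeal B$ iff $\forall k \in \Natural.\,\cut{\toBTree{A}}{k} \subtypeco \cut{\toBTree{B}}{k}$ iff $\toBTree{A} \subtypeco \toBTree{B}$ iff $A \subtypemu B$.

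Concretely, I would first invoke the subtyping counterpart of Prop.~\ref{prop:eqtypeSoundnessAndCompleteness}, namely $A \subtypemu B$ iff $\toBTree{A} \subtypeco \toBTree{B}$, established in~\cite{DBLP:journals/entcs/VisoBA16}. Next I would prove the truncation lemma $\forall k \in \Natural.\,\cut{\tA}{k} \subtypeco \cut{\tB}{k}$ iff $\tA \subtypeco \tB$ by the same recipe as Lem.~\ref{lem:cutEquivalenceCo}: show that $\set{\pair{\tA}{\tB} \mathrel| \forall k.\,\cut{\tA}{k} \subtypeco \cut{\tB}{k}}$ and $\set{\pair{\cut{\tA}{k}}{\cut{\tB}{k}} \mathrel| \tA \subtypeco \tB,\ k \in \Natural}$ are $\Phisubtypeco$-dense, and conclude by the coinductive principle. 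Finally, the real work is Lem.~\ref{lem:subtypealSoundnessAndCompleteness}, $A \subtypeal B$ iff $\forall k.\,\cut{\toBTree{A}}{k} \subtypeco \cut{\toBTree{B}}{k}$, which I would attack exactly as Lem.~\ref{lem:eqtypealSoundnessAndCompleteness}: by a two-direction argument inducting on $c = \card{\irectype}{A} + \card{\irectype}{B}$, reusing the shared decomposition results Lem.~\ref{lem:forgettingMu} and Lem.~\ref{lem:decomposeUnion} unchanged, both to peel off leading $\irectype$ constructors and to realize a type whose unfolding is a maximal union as a genuine maximal $\iuniontype$-context of non-union components.

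The main obstacle is the union case of Lem.~\ref{lem:subtypealSoundnessAndCompleteness}, which departs from equivalence in two ways. The minor one is contravariance of $\ifunctype$: the function cases must route their premises in the order dictated by \ruleSubmuFunc\ and its tree analogue \ruleSubcoFunc, swapping domains relative to codomains---error-prone but routine. The serious one is that subtyping of maximal unions $\maxuniontype{i \in 1..n}{A_i} \subtypeal \maxuniontype{j \in 1..m}{B_j}$ is witnessed by a single map $f : 1..n \to 1..m$ with $A_i \subtypeal B_{f(i)}$, rather than the symmetric pair of maps used by \ruleEqalUnion. In the soundness direction this is harmless, since $f$ can be read straight off the $\subtypeal$ derivation; but in the completeness direction the hypotheses only supply, for each depth $k$, some $f_k$ with $\cut{\toBTree{A_i}}{k} \subtypeco \cut{\toBTree{B_{f_k(i)}}}{k}$, whereas to fire the subtyping rule once I need a \emph{single} $f$ valid at every $k$.

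I would extract such an $f$ by a pigeonhole argument: since $n$ and $m$ are fixed, only finitely many maps $1..n \to 1..m$ exist, so one of them, say $f$, serves as $f_k$ for an unbounded set of depths $k$; because truncation is a congruence for $\subtypeco$---so that $\cut{\tA}{k+1} \subtypeco \cut{\tB}{k+1}$ entails $\cut{\tA}{k} \subtypeco \cut{\tB}{k}$, a fact implicit in the truncation lemma above---this $f$ then works at every $k$. With the uniform witness in hand, the induction hypothesis yields $A_i \subtypeal B_{f(i)}$ componentwise and the union rule closes the case. Assembling the three equivalences then delivers the proposition.
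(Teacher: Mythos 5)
Your plan is the paper's proof: the same chain $A \subtypeal B$ iff $\forall k.\,\cut{\toBTree{A}}{k} \subtypeco \cut{\toBTree{B}}{k}$ iff $\toBTree{A} \subtypeco \toBTree{B}$ iff $A \subtypemu B$ (the last two steps are exactly Lem.~\ref{lem:cutSubtypingCo} and Prop.~\ref{prop:subtypeSoundnessAndCompleteness}, imported from~\cite{DBLP:journals/entcs/VisoBA16}), and the bridge lemma is proved by the same induction on $\card{\irectype}{A}+\card{\irectype}{B}$ using Lem.~\ref{lem:forgettingMu} and Lem.~\ref{lem:decomposeUnion}. Two remarks on where you diverge from the written proof. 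First, your pigeonhole-plus-monotonicity extraction of a single witness $f$ valid at all depths $k$ in the completeness direction is a point the paper silently elides (in its $n=1$ union subcase the index $h$ is chosen from $f(1)$, which a priori depends on $k$); your explicit repair is correct — the sets of admissible maps at depth $k$ are finite, nonempty and decreasing in $k$ — and is an improvement over the text. Second, you underestimate the soundness direction: the claim that the witness for $\ruleSubcoUnion$ ``can be read straight off the $\subtypeal$ derivation'' does not hold, because $\ruleSubalUnionL$ and $\ruleSubalUnionR$ peel off only the outermost maximal union, while the components of $\toBTree{A}$ and $\toBTree{B}$ may sit under interleaved $\irectype$ and $\iuniontype$ constructors; relating the fully flattened non-union projections of both sides requires a separate induction on the $\irectype\iuniontype$-depth, which the paper isolates as Lem.~\ref{lem:subtypeUnion}. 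This missing lemma does not invalidate your route — it is in the spirit of the decomposition results you cite — but it is a genuine piece of work your plan passes over.
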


The proof strategy is similar to that of
Prop.~\ref{prop:eqtypealSoundnessAndCompleteness}. In this case we resort to a
proper subtyping relation for infinite trees that essentially results from
dropping rules $\ruleSubalRecL$ and $\ruleSubalRecR$ in
Fig.~\ref{fig:subtypingSchemesAl}.

\begin{figure} 
{\small $$
\begin{array}{c}
\RuleCo{}{a \subtypeal a}{\ruleSubalRefl}
\\
\\
\qquad
\RuleCo{D \subtypeal D' \quad A \subtypeal A'}
       {\datatype{D}{A} \subtypeal \datatype{D'}{A'}}
       {\ruleSubalComp}
\qquad
\RuleCo{A' \subtypeal A \quad B \subtypeal B'}
       {\functype{A}{B} \subtypeal \functype{A'}{B'}}
       {\ruleSubalFunc}
\\
\\
\RuleCo{\substitute{V}{\rectype{V}{A}}{A} \subtypeal B}
       {\rectype{V}{A} \subtypeal B}
       {\ruleSubalRecL}
\qquad
\RuleCo{A \subtypeal \substitute{W}{\rectype{W}{B}}{B}
        \quad
        A \neq \irectype
       }
       {A \subtypeal \rectype{W}{B}}
       {\ruleSubalRecR}
\\
\\
\RuleCo{A_i \subtypeal B \text{ for all $i \in 1..n$}
        \quad
        n > 1
        \quad
        B \neq \irectype
        \quad
        A_i \neq \iuniontype}
       {\maxuniontype{i \in 1..n}{A_i} \subtypeal B}
       {\ruleSubalUnionL}
\\
\\
\RuleCo{A \subtypeal B_k \text{ for some $k \in 1..m$}
        \quad
        m > 1
        \quad
        A \neq \irectype, \iuniontype
        \quad
        B_j \neq \iuniontype}
       {A \subtypeal \maxuniontype{j \in 1..m}{B_j}}
       {\ruleSubalUnionR}
\end{array} $$
} 
\caption{Coinductive axiomatization of subtyping for contractive $\mu$-types.}
\label{fig:subtypingSchemesAl}
\end{figure}

In this case we resort to Prop.~\ref{prop:subtypeSoundnessAndCompleteness} and
Lem.~\ref{lem:cutSubtypingCo} to establish a relation with $\subtypemu$ via
the infinite trees semantics and its finite truncations~\cite{DBLP:journals/entcs/VisoBA16}.

\begin{proposition}
\label{prop:subtypeSoundnessAndCompleteness}
$A \subtypemu B$ iff $\toBTree{A} \subtypeco \toBTree{B}$.
\end{proposition}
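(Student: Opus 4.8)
The plan is to establish the two inclusions separately, mirroring the companion development for equivalence, Prop.~\ref{prop:eqtypeSoundnessAndCompleteness}, which I will freely use. Recall that the tree relation $\subtypeco$ is the greatest fixed point of the generating function $\Phisubtypeco$ built from $\ruleSubcoRefl$, $\ruleSubcoComp$, $\ruleSubcoFunc$, $\ruleSubcoUnionL$ and $\ruleSubcoUnionR$ (the rules of Fig.~\ref{fig:subtypingSchemesAl} with the two recursion rules removed), so that both directions amount to relating the finitary, hypothesis-driven system of Fig.~\ref{fig:subtypingSchemesMu} with a coinductive relation on the regular trees $\toBTree{A}$.

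For soundness ($\Rightarrow$) I would proceed by induction on the derivation of $\sequTE{}{A \subtypemu B}$, but first strengthen the statement to arbitrary contexts $\Sigma$: I collect from every judgment in the derivation a single candidate relation $\R$ on regular trees and check $\R \subseteq \Phisubtypeco(\R)$, concluding by the coinduction principle. The structural rules $\ruleSubmuComp$, $\ruleSubmuFunc$, $\ruleSubmuUnionL$, $\ruleSubmuUnionRL$ and $\ruleSubmuUnionRR$ map onto their tree counterparts through Def.~\ref{def:toBTree}; $\ruleSubmuRefl$ uses $\ruleSubcoRefl$; and $\ruleSubmuEq$ is discharged by Prop.~\ref{prop:eqtypeSoundnessAndCompleteness} together with the easy fact that $\eqtypeco \subseteq \subtypeco$ (witnessed on trees by chaining $\ruleSubcoUnionR$ and $\ruleSubcoUnionL$ for the union case). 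The two cases that require real work are $\ruleSubmuTrans$, which forces me to prove as a separate lemma that $\subtypeco$ is transitive --- delicate precisely because it is a coinductive relation and because of the union connector --- and $\ruleSubmuRec$, where the bound variables $V,W$ must be matched and the freshly added assumption $V \subtypemu W$ justified by folding the corresponding subtrees into $\R$; contractiveness of $\mu$-types is what guarantees both that unfolding leaves the associated tree unchanged (Def.~\ref{def:toBTree}) and that the guardedness required for the coinductive step is available.

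For completeness ($\Leftarrow$) I would use the Amadio--Cardelli cycle-detection methodology. Given $\toBTree{A} \subtypeco \toBTree{B}$, I build a finite derivation of $\sequTE{}{A \subtypemu B}$ by a structural traversal of the pair, following the coinductive witness (in particular the choices of subtype made in the $\ruleSubcoUnionR$ steps). Since $A$ and $B$ are contractive, their unfoldings are \emph{regular}, so only finitely many distinct pairs of subtrees occur during the traversal; whenever a pair $\rectype{V}{A'} \subtypemu \rectype{W}{B'}$ is met I apply $\ruleSubmuRec$, recording $V \subtypemu W$ in $\Sigma$, and when that pair recurs I close the branch with $\ruleSubmuHyp$. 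The finite-state bound ensures termination and yields a well-founded derivation. Because Fig.~\ref{fig:subtypingSchemesMu} has no one-sided unfolding rules (unlike $\ruleSubalRecL$/$\ruleSubalRecR$ of Fig.~\ref{fig:subtypingSchemesAl}), the situations in which only one side is a $\mu$-type, and the reassociations needed to align maximal unions, must be routed through $\ruleSubmuEq$ (using $\ruleEqmuFold$) and $\ruleSubmuTrans$.

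The main obstacle is the recursion machinery, and it bites in both directions simultaneously: in soundness I must justify coinductively the assumption introduced by $\ruleSubmuRec$ and must separately establish transitivity of $\subtypeco$ for $\ruleSubmuTrans$; in completeness I must show that closing a branch by $\ruleSubmuHyp$ never lets through a spurious subtyping, and that the side conditions $W \notin \fv{A}$ and $V \notin \fv{B}$ of $\ruleSubmuRec$ can always be met after $\alpha$-renaming. Transitivity of coinductive union subtyping and the soundness of the fold/assumption interplay are exactly where the argument is technical; once they are in place, the $\Rightarrow$ inclusion follows by the coinduction principle and the $\Leftarrow$ inclusion by the termination of cycle detection guaranteed by regularity.
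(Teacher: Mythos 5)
The paper does not actually prove this proposition: it is imported verbatim from~\cite{DBLP:journals/entcs/VisoBA16}, and the only hint given about its proof is that it is established ``via the infinite trees semantics and its finite truncations'', i.e.\ in tandem with Lem.~\ref{lem:cutSubtypingCo}. So there is no in-paper argument to compare yours against line by line. Your outline is the classical Amadio--Cardelli shape (soundness by exhibiting a $\Phisubtypeco$-dense relation harvested from the derivation; completeness by regularity and cycle detection closed off with $\ruleSubmuHyp$), and it correctly locates the difficulties. But as it stands it is a plan rather than a proof: the two load-bearing steps you flag --- transitivity of the coinductive relation $\subtypeco$ in the presence of ACI unions and the Vouillon-style union rules, and the soundness/completeness of the $\ruleSubmuRec$/$\ruleSubmuHyp$ interplay --- are precisely where such arguments fail when they fail, and you leave both as acknowledged obligations. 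For completeness in particular, ``only finitely many distinct pairs of subtrees occur'' does not by itself give finitely many distinct pairs of \emph{syntactic} $\mu$-expressions, and $\ruleSubmuRec$ only fires when both sides are $\mu$-abstractions whose binders are already aligned with the cycles of the coinductive witness; realigning them requires a canonical-form argument threaded through $\ruleSubmuEq$ and $\ruleSubmuTrans$ that goes well beyond reassociating unions. Neither of these is a step that ``would fail'' in principle, but neither is discharged.

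It is also worth noting that the route taken in the cited source is genuinely different from yours on the soundness side: instead of building a $\Phisubtypeco$-dense relation directly from the derivation (which forces you to prove transitivity of the full coinductive $\subtypeco$ up front), it factors through the finite truncations $\cut{\tA}{k}$ and Lem.~\ref{lem:cutSubtypingCo}. On truncations one has an induction measure (the depth $k$), so transitivity and the rule-by-rule case analysis become inductive rather than coinductive --- the same device this paper uses explicitly in the proofs of Lem.~\ref{lem:eqtypealSoundnessAndCompleteness} and Lem.~\ref{lem:subtypealSoundnessAndCompleteness}. If you intend to carry out your direct coinductive version, you should either prove transitivity of $\subtypeco$ as a standalone lemma (nontrivial with the union connector, since a $\ruleSubcoUnionR$ choice on the left of a composition must be matched against a $\ruleSubcoUnionL$ decomposition on the right) or switch to the truncation-based formulation, where that difficulty largely dissolves.
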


\begin{lemma}
\label{lem:cutSubtypingCo}
$\forall k \in \Natural.\cut{\tA}{k} \subtypeco \cut{\tB}{k}$ iff $\tA \subtypeco \tB$.
\end{lemma}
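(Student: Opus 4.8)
The plan is to mirror the proof of Lem.~\ref{lem:cutEquivalenceCo}, replacing $\eqtypeco$ by $\subtypeco$ throughout and appealing to the coinductive principle for $\Phisubtypeco$: any \emph{$\Phisubtypeco$-dense} relation $\R$ (i.e. $\R \subseteq \Phisubtypeco(\R)$) is contained in the greatest fixed point $\subtypeco = \nu\Phisubtypeco$. I would prove the two implications by exhibiting $\Phisubtypeco$-dense relations, and I would establish the ($\Leftarrow$) direction first, since ($\Rightarrow$) reuses it.

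\noindent$(\Leftarrow)$ Define $\R \eqdef \set{\pair{\cut{\tA}{k}}{\cut{\tB}{k}} \mathrel| \tA \subtypeco \tB,\ k \in \Natural}$ and show $\R \subseteq \Phisubtypeco(\R)$ by case analysis on the rule generating $\pair{\tA}{\tB}$ in $\subtypeco = \Phisubtypeco(\subtypeco)$, together with Def.~\ref{def:treeCut}. The border case $k=0$ yields $\pair{\circ}{\circ}$, discharged by $\ruleSubcoRefl$. For $k>0$: the atom case uses $\ruleSubcoRefl$; for $\ruleSubcoComp$ and $\ruleSubcoFunc$ truncation commutes with $\idatatype,\ifunctype$, dropping the depth by one, so the immediate subtree pairs lie in $\R$ at depth $k-1$ and the same rule reassembles the truncations; for $\ruleSubcoUnionL$ and $\ruleSubcoUnionR$ the point is that $\iuniontype$ does not contribute to depth, so truncation distributes over a maximal union keeping the index $k$ and preserving non-unionhood of the components, whence the same rule applies to the truncations. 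This gives $\R \subseteq \subtypeco$, i.e. $\cut{\tA}{k}\subtypeco\cut{\tB}{k}$ for all $k$.

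\noindent$(\Rightarrow)$ Define $\S \eqdef \set{\pair{\tA}{\tB} \mathrel| \forall k \in \Natural.\ \cut{\tA}{k}\subtypeco\cut{\tB}{k}}$ and show $\S \subseteq \Phisubtypeco(\S)$. For $\pair{\tA}{\tB}\in\S$ the root shapes of $\tA,\tB$ are already revealed by $\cut{\tA}{1},\cut{\tB}{1}$ (atoms stay atoms, $\idatatype/\ifunctype$ become non-union, unions stay unions with the same number of components), and truncation preserves these shapes at every $k\ge1$. If $\tA$ is a maximal union with $n>1$ non-union components, then for each $k\ge1$ the union $\cut{\tA}{k}=\maxuniontype{i \in 1..n}{\cut{\tA_i}{k}}$ forces $\ruleSubcoUnionL$ (the only rule with a union on the left), yielding $\cut{\tA_i}{k}\subtypeco\cut{\tB}{k}$ for all $i$, hence $\pair{\tA_i}{\tB}\in\S$, and $\ruleSubcoUnionL$ applies at the tree level. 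When $\tA,\tB$ are both non-union, for $k\ge1$ the generation uses $\ruleSubcoRefl$, $\ruleSubcoComp$ or $\ruleSubcoFunc$, matching the root constructors and placing the immediate subtree pairs in $\S$ via recursion at depth $k-1$.

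The main obstacle is the remaining case $\tA\neq\iuniontype$ and $\tB=\maxuniontype{j \in 1..m}{\tB_j}$ a maximal union, which must be closed by the \emph{existential} rule $\ruleSubcoUnionR$: for each $k$ the generation of $\cut{\tA}{k}\subtypeco\cut{\tB}{k}$ selects some witness $j_k$ with $\cut{\tA}{k}\subtypeco\cut{\tB_{j_k}}{k}$, and these witnesses may a priori vary with $k$. Here I would use that maximal unions are finite (infinite $\iuniontype$-branches are excluded from $\Tree$), so by pigeonhole a single $j^\star$ is chosen for infinitely many $k$; and that the already-proved $(\Leftarrow)$ direction, applied to the finite trees $\cut{\tA}{k}$ and $\cut{\tB_{j^\star}}{k}$ together with the identity $\cut{(\cut{\tA}{k})}{k'}=\cut{\tA}{k'}$ for $k'\le k$, yields downward closure of the witnessed relation, so that $\cut{\tA}{k}\subtypeco\cut{\tB_{j^\star}}{k}$ holds for \emph{all} $k$. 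Thus $\pair{\tA}{\tB_{j^\star}}\in\S$ and $\ruleSubcoUnionR$ applies, closing the argument; the coinductive principle then gives $\S\subseteq\subtypeco$, i.e. $\tA\subtypeco\tB$.
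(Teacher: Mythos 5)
Your proposal is correct and follows the same route the paper takes for the analogous Lemma~\ref{lem:cutEquivalenceCo} (and implicitly for this lemma, whose details are deferred to~\cite{DBLP:journals/entcs/VisoBA16}): exhibit the two candidate relations, show each is $\Phisubtypeco$-dense, and conclude by the coinduction principle. You also correctly isolate and resolve the one genuinely delicate point that the paper's sketch glosses over --- the existential union-on-the-right premise --- via finiteness of maximal unions, pigeonhole over the (finitely many) witnesses, and downward closure of truncations obtained from the already-established $(\Leftarrow)$ direction together with $\cut{(\cut{\tA}{K})}{k'}=\cut{\tA}{k'}$ for $k'\le K$.
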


The following lemma allows us to relate the different non-union projections of
two maximal union types that belong to a $\Phisubtypeal$-dense relation.

\begin{lemma}
\label{lem:subtypeUnion}
Let $A, B \in \Type$. Suppose $\pair{A}{B} \in \S$ and $\S \subseteq
\Phisubtypeal(\S)$. Let $\cA$ be the maximal $\irectype\iuniontype$-context
such that $A = \cA[A_1, \ldots, A_n]$, for some $A_i$, $i \in 1..n$, and $\cB$
the maximal $\irectype\iuniontype$-context such that $B = \cB[B_1, \ldots,
B_m]$. Then, there exists $f : 1..n \to 1..m$ such that for each $i \in 1..n$,
$\pair{\project{\pi_i}{\cA[\vec{A}]}}{\project{\rho_{f(i)}}{\cB[\vec{B}]}} \in
\S$.
\end{lemma}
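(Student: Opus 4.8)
The plan is to argue by induction on the total number $\size{\cA}+\size{\cB}$ of $\irectype$ and $\iuniontype$ constructors occurring in the two maximal $\irectype\iuniontype$-contexts. The engine of the proof is an \emph{inversion} observation: since $\pair{A}{B}\in\S\subseteq\Phisubtypeal(\S)$, the pair is generated by some rule of Fig.~\ref{fig:subtypingSchemesAl}, and—thanks to the side conditions $A\neq\irectype$ and $B\neq\irectype$ guarding the union and right-recursion rules—the applicable rule is uniquely determined by the root symbols of $A$ and $B$, with the priority: an outermost $\irectype$ on the left ($\ruleSubalRecL$) is peeled first, then an outermost $\irectype$ on the right ($\ruleSubalRecR$, which stays available even when $A$ is a union since it only requires $A\neq\irectype$), then an outermost $\iuniontype$ on the left ($\ruleSubalUnionL$), then an outermost $\iuniontype$ on the right ($\ruleSubalUnionR$). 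In each of these four cases the premises again lie in $\S$ and the measure strictly decreases; in the remaining case both roots are in $\TypeVariable\cup\TypeConstant\cup\set{\idatatype,\ifunctype}$, so $\cA=\cB=\Box$, $n=m=1$, $\project{\epsilon}{\Box[A]}=A$, $\project{\epsilon}{\Box[B]}=B$, and $f(1)=1$ works directly from $\pair{A}{B}\in\S$.

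I would then treat the four inductive cases. When $A=\rectype{V}{A'}$, rule $\ruleSubalRecL$ gives $\pair{\substitute{V}{\rectype{V}{A'}}{A'}}{B}\in\S$; writing $\cA=\rectype{V}{\cA'}$, contractiveness forbids any hole from being the bound variable $V$ itself, so the substitution reaches only the holes and leaves each hole's non-$\irectype\iuniontype$ root unchanged; hence $\cA'$ is the maximal context of $\substitute{V}{\rectype{V}{A'}}{A'}$, $\size{\cA'}<\size{\cA}$, and the IH supplies the required $f$ on the same index set $1..n$. The case where the root of $A$ is $\iuniontype$, say $A=\maxuniontype{j\in 1..k}{C_j}$ with each $C_j\neq\iuniontype$, uses $\ruleSubalUnionL$ (licensed because we are past the right-$\irectype$ case, so $B\neq\irectype$): each component satisfies $\pair{C_j}{B}\in\S$ with a strictly smaller context, and I combine the functions returned by the IH, using that the holes of $\cA$ partition into those of the $\cC_j$ and that projection into $\cA$ factors through the branch selection. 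The two remaining cases, $\ruleSubalRecR$ and $\ruleSubalUnionR$, are the mirror images that reduce the right-hand context; there $\cA$ and its hole set are untouched and only the codomain of $f$ must be readjusted.

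The technical heart—and the step I expect to cost the most care—is checking that the function handed back by the IH on the reduced pair is \emph{already} the function demanded for the original pair, i.e.\ that projection commutes with each reduction. For $\ruleSubalRecL$ I would verify
\[
\project{\pi_i}{\cA[\vec A]}=\project{1\pi'_i}{\bigl(\rectype{V}{\cA'}\bigr)[\vec A]}=\project{\pi'_i}{\cA'[\vec{A^{*}}]},\qquad A^{*}_l\coloneq\substitute{V}{\rectype{V}{A'}}{A_l},
\]
straight from the $\irectype$-clause of Def.~\ref{def:projection}, observing that $\rectype{V}{\cA'[\vec A]}=A$, so the unfolding performed silently by the projection is exactly the one performed by the rule; the two union cases reduce to the branch-selecting clause of Def.~\ref{def:projection}, and $\ruleSubalRecR$/$\ruleSubalUnionR$ are the symmetric computations on $\cB$. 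This bookkeeping is of the same nature as Lem.~\ref{lem:forgettingMu}, and its delicate point is tracking how hole positions are renamed when an outermost $\irectype$ is absorbed by the unfolding; contractiveness is precisely what guarantees that the maximal contexts, their holes, and the substituted holes all remain in correspondence throughout.
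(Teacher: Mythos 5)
Your proposal is correct and follows essentially the same route as the paper's proof: induction on a size measure of the two maximal $\irectype\iuniontype$-contexts (the paper uses the sum of their depths, you use the constructor count, both of which strictly decrease in every case), rule inversion driven by the root symbols and the side conditions of Fig.~\ref{fig:subtypingSchemesAl}, contractiveness to ensure the maximal context survives the unfolding in the $\irectype$ cases, decomposition of the maximal context at unions with recombination of the functions returned by the IH, and the Def.~\ref{def:projection} bookkeeping relating hole positions before and after each reduction. The only differences are organizational (you case-split by the applicable rule with an explicit priority, the paper case-splits on the shape of $\cA$ with sub-cases on $\cB$), and all the delicate points you flag are exactly the ones the paper handles.
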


\begin{proof}
Induction on the sum of the $\irectype\iuniontype$-depths.
\begin{itemize}
  \item If the sum is 0, then $\cA = \Box = \cB$ and the result is immediate.
  
  \item If it is greater than 0, we analyze the shape of $\cA$:
  \begin{itemize}
    \item $\cA = \Box$. Then $n = 1$ and $A_n = A \neq \irectype,\iuniontype$
    and we have two possibilities for $\cB$:
    \begin{enumerate}
      \item $\cB = \rectype{W}{\cB'}$. Then the only rule that applies is
      $\ruleSubalRecR$. By hypothesis of the rule,
      $\pair{A}{\substitute{W}{\rectype{W}{\cB'[\vec{B}]}}{\cB'[\vec{B}]}} \in
      \S$. Moreover, since
      $$\substitute{W}{\rectype{W}{\cB'[\vec{B}]}}{\cB'[\vec{B}]} =
      \cB'[\substitute{W}{\rectype{W}{\cB'[\vec{B}]}}{\vec{B}}]$$ and
      $\mu$-types are contractive (hence none of the $\vec{B}$ can be
      variables), we can assure $\cB'$ is a maximal
      $\irectype\iuniontype$-context too. Then, by inductive hypothesis, there
      exists $f : 1..n \to 1..m$ such that
      $$\pair{\project{\pi_i}{\cA[\vec{A}]}}{\project{\rho_{f(i)}}{\substitute{W}{\rectype{W}{\cB'[\vec{B}]}}{\cB'[\vec{B}]}}}
      \in \S$$ for each $i \in 1..n$. Finally, we conclude by
      Def.~\ref{def:projection},
      $\pair{\project{\pi_i}{\cA[\vec{A}]}}{\project{1\rho_{f(i)}}{(\rectype{W}{\cB'[\vec{B}]})}}
      \in \S$.

      \item $\cB = \uniontype{\cB_1}{\cB_2}$. From $B = \cB[\vec{B}]$, with a
      similar analysis to the one made for the proof of
      Lem.~\ref{lem:decomposeUnion}, we can decompose $\cB$ into a maximal
      $\iuniontype$-context $\cB'$ and multiple $\irectype\iuniontype$-contexts
      $\cB'_1, \ldots, \cB'_{m'}$ with $m' \leq m$. Then, there exists types
      $B'_1, \ldots, B'_{m'}$ and functions $s,t : 1..m' \to 1..m$ such that
\begin{equation}
\label{eq:subtypeUnion:unionR}
B = \cB'[\vec{B'}]
\qquad\text{and}\qquad
B'_l = \cB'_l[B_{s(l)}, \ldots, B_{t(l)}]
\end{equation}
      for every $l \in 1..m'$. Moreover, since $\cB = \uniontype{\cB_1}{\cB_2}$
      and $\cB'$ is maximal, we have $m' > 1$ and $B'_l \neq \iuniontype$ for
      every $l \in 1..m'$.
      
      Then, we can assure that the only rule that applies for $\pair{A}{B} \in
      \S$ is $\ruleSubalUnionR$. By hypothesis of it, $\pair{A}{B'_k} \in \S$
      for some $k \in 1..m'$. Let $m_k = t(k)-s(k)$ (\ie the number of holes in
      $\cB'_k$). By inductive hypothesis and
      (\ref{eq:subtypeUnion:unionR})-right with $l = k$, there exists $f : 1..n
      \to 1..m_k$ such that
      $$\pair{\project{\pi_i}{\cA[\vec{A}]}}{\project{\tau^k_{f(i)}}{\cB'_k[B_{s(k)},\ldots,B_{t(k)}]}}
      \in \S$$ where $\tau^k_h$ is the position of the $(h-s(k))$-th hole in
      $\cB'_k$. Let $\rho'_k$ be the position of the $k$-th hole in $\cB'$ and
      take $\rho_{h} = \rho'_k\tau^k_{h}$. Then, by Def.~\ref{def:projection}
      $$\project{\tau_{h}}{\cB'_k[B_{s(k)},\ldots,B_{t(k)}]} =
      \project{\rho_h}{\cB'[\vec{B'}]}$$ Finally, since $\cB'[\vec{B'}] = B =
      \cB[\vec{B}]$, we resort to the hypothesis and
      (\ref{eq:subtypeUnion:unionR})-left to conclude
      $\pair{\project{\pi_i}{\cA[\vec{A}]}}{\project{\rho_{f(i)}}{\cB[\vec{B}]}}
      \in \S$.
    \end{enumerate}
    
    \item $\cA = \rectype{V}{\cA'}$. Then the only rule that applies is
    $\ruleSubalRecL$ and we have
    $$\pair{\substitute{V}{\rectype{V}{\cA'[\vec{A}]}}{\cA'[\vec{A}]}}{\cB[\vec{B}]}
    \in \S$$ As before, by contractiveness of $\mu$-types, we get
    $\pair{\cA'[\substitute{V}{\rectype{V}{\cA'[\vec{A}]}}{\vec{A}}]}{\cB[\vec{B}]}
    \in \S$ with $\cA'$ still a maximal $\irectype\iuniontype$-context. Then, by
    inductive hypothesis, there exists $f : 1..n \to 1..m$ such that
    $$\pair{\project{\pi_i}{\substitute{V}{\rectype{V}{\cA'[\vec{A}]}}{\cA'[\vec{A}]}}}{\project{\rho_{f(i)}}{\cB[\vec{B}]}}
    \in \S$$ for each $i \in 1..n$. Once again we conclude by
    Def.~\ref{def:projection},
    $\pair{\project{1\pi_i}{(\rectype{V}{\cA'[\vec{A}]})}}{\project{\rho_{f(i)}}{\cB[\vec{B}]}}
    \in \S$.
  
    \item $\cA = \uniontype{\cA_1}{\cA_2}$. Here we have two possibilities for
    $B$:
    \begin{enumerate}
      \item $B = \irectype$ (\ie $\cB = \rectype{W}{\cB'}$). Then, it can only
      be the case of $\ruleSubalRecR$ and analysis here is analogous to the one
      presented before.
      
      \item $B \neq \irectype$. This case is similar to the one for $\cB =
      \uniontype{\cB_1}{\cB_2}$ presented above. Here we have $A =
      \cA[\vec{A}]$ and we can decompose $\cA$ into a maximal
      $\iuniontype$-context $\cA'$ and multiple $\irectype\iuniontype$-contexts
      $\cA'_1, \ldots, \cA'_{n'}$ with $n' \leq n$. Then, there exists types
      $A'_1, \ldots, A'_{n'}$ and functions $s,t : 1..n' \to 1..n$ such that
\begin{equation}
\label{eq:subtypeUnion:unionL}
A = \cA'[\vec{A'}]
\qquad\text{and}\qquad
A'_l = \cA'_l[A_{s(l)}, \ldots, A_{t(l)}]
\end{equation}
      for every $l \in 1..n'$. Again we get $A'_l \neq \iuniontype$ and $n' >
      1$, thus we can only apply $\ruleSubalUnionL$ for $\pair{A}{B} \in \S$.
      By hypothesis of the rule we know $\pair{A'_l}{B} \in \S$ for every $l
      \in 1..n'$. Now we take $n_l = t(l)-s(l)$ (\ie the number of holes in
      $\cA'_l$) and we apply the inductive hypothesis on each assumption of
      the rule. Thus, there exists $f_l : 1..n_l \to 1..m$ such that
      $$\pair{\project{\tau^l_i}{\cA'_l[A_{s(l)},\ldots,A_{t(l)}]}}{\project{\rho_{f_l(i)}}{\cB[\vec{B}]}}
      \in \S$$ for every $l \in 1..n'$. Then, let $\pi'_l$ be the position of
      the $l$-th hole in $\cA'$ and $\pi_i = \pi'_l\tau^l_i$ to get
      $$\project{\tau^l_i}{\cA'_l[A_{s(l)},\ldots,A_{t(l)}]} =
      \project{\pi_i}{\cA'[\vec{A'}]}$$ Finally, we combine all functions $f_l$
      into one $f : 1..n \to 1..m$, and resort to
      (\ref{eq:subtypeUnion:unionL})-left (\ie $\cA'[\vec{A'}] = A =
      \cA[\vec{A}]$) to conclude with
      $\pair{\project{\pi_i}{\cA[\vec{A}]}}{\project{\rho_{f(i)}}{\cB[\vec{B}]}}
      \in \S$.
    \end{enumerate}
  \end{itemize}
\end{itemize}
\end{proof}

Finally, we prove:

\begin{lemma}
\label{lem:subtypealSoundnessAndCompleteness}
$A \subtypeal B$ iff $\forall k \in \Natural.\cut{\toBTree{A}}{k} \subtypeco
\cut{\toBTree{B}}{k}$.
\end{lemma}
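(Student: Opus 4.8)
The plan is to pass to the infinite-tree semantics once and for all and then run two coinductive (density) arguments. Applying Lem.~\ref{lem:cutSubtypingCo} to $\toBTree{A}$ and $\toBTree{B}$, the condition $\forall k \in \Natural.\,\cut{\toBTree{A}}{k} \subtypeco \cut{\toBTree{B}}{k}$ is equivalent to $\toBTree{A} \subtypeco \toBTree{B}$, so it suffices to prove the biconditional $A \subtypeal B$ iff $\toBTree{A} \subtypeco \toBTree{B}$, each implication being obtained by exhibiting a dense relation for the relevant generating function. I fix the maximal $\irectype\iuniontype$-context decompositions $A = \cA[\vec{A}]$ and $B = \cB[\vec{B}]$, writing $P_i \eqdef \project{\pi_i}{\cA[\vec{A}]}$ and $Q_j \eqdef \project{\rho_j}{\cB[\vec{B}]}$ for the projections at the holes; each $P_i, Q_j \neq \irectype,\iuniontype$, and by Lem.~\ref{lem:forgettingMu} the images $\toBTree{A} = \maxuniontype{i \in 1..n}{\toBTree{P_i}}$ and $\toBTree{B} = \maxuniontype{j \in 1..m}{\toBTree{Q_j}}$ are genuine maximal unions (no summand is headed by $\iuniontype$).

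For $A \subtypeal B \Rightarrow \toBTree{A} \subtypeco \toBTree{B}$, I show that $\R \eqdef \set{\pair{\toBTree{A'}}{\toBTree{B'}} \mathrel| A' \subtypeal B'}$ is $\Phisubtypeco$-dense. Given $A \subtypeal B$, I apply Lem.~\ref{lem:subtypeUnion} with $\S = \mathord{\subtypeal}$ (which is $\Phisubtypeal$-dense, being the greatest fixed point) to get $f : 1..n \to 1..m$ with $P_i \subtypeal Q_{f(i)}$ for all $i$, and then produce a single $\Phisubtypeco$-step by the shape of $\toBTree{A}$. If $n = m = 1$ the trees are non-union, $\pair{P_1}{Q_1} \in \mathord{\subtypeal}$, and inverting the $\subtypeal$-rule that generated it (necessarily \ruleSubalRefl, \ruleSubalComp\ or \ruleSubalFunc, since $P_1,Q_1 \neq \irectype,\iuniontype$) gives subcomponent pairs whose $\toBTree{\bullet}$-images lie in $\R$; I close with the matching co-rule. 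If $n = 1 < m$, I use \ruleSubcoUnionR\ with witness $f(1)$, its premise $\toBTree{P_1} \subtypeco \toBTree{Q_{f(1)}}$ being in $\R$. The delicate case is $n > 1$: here \ruleSubcoUnionL\ demands each $\toBTree{P_i} \subtypeco \toBTree{B}$, i.e.\ $\pair{P_i}{B} \in \mathord{\subtypeal}$, which I derive from $P_i \subtypeal Q_{f(i)}$ by a weakening sub-claim: \emph{if $P \neq \irectype,\iuniontype$ and $P \subtypeal \project{\rho}{\cB[\vec{B}]}$ for some hole $\rho$ of $\cB$, then $P \subtypeal \cB[\vec{B}]$}. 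I prove this by induction on $\cB$, unfolding through \ruleSubalRecR\ at each $\irectype$ node and selecting the summand that contains $\rho$ via \ruleSubalUnionR\ at each $\iuniontype$ node.

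For the converse $\toBTree{A} \subtypeco \toBTree{B} \Rightarrow A \subtypeal B$, I show that $\R \eqdef \set{\pair{A'}{B'} \mathrel| \toBTree{A'} \subtypeco \toBTree{B'}}$ is $\Phisubtypeal$-dense; no auxiliary induction is needed, only a case split on the heads of $A$ and $B$. If $A$ is headed by $\irectype$ I unfold with \ruleSubalRecL, and if instead $B$ is (with $A \neq \irectype$) with \ruleSubalRecR; since $\toBTree{\rectype{V}{C}} = \toBTree{\substitute{V}{\rectype{V}{C}}{C}}$, the unfolded premise has the same trees and stays in $\R$. When both heads avoid $\irectype$, I invert $\toBTree{A} \subtypeco \toBTree{B}$ using $\subtypeco = \Phisubtypeco(\mathord{\subtypeco})$: a union on the left forces \ruleSubcoUnionL\ and, after regrouping the tree-level summands belonging to each syntactic $\iuniontype$-component $A_i$ of $A$ back into a union by a further \ruleSubcoUnionL, yields $\toBTree{A_i} \subtypeco \toBTree{B}$ for every $i$, licensing \ruleSubalUnionL; a non-union left against a union right forces \ruleSubcoUnionR, yielding a single tree-summand $\tD$ of $\toBTree{B}$ with $\toBTree{A} \subtypeco \tD$, whence $\toBTree{A} \subtypeco \toBTree{B_k}$ by \ruleSubcoUnionR\ for the $\iuniontype$-component $B_k$ whose image contains $\tD$, licensing \ruleSubalUnionR\ with that $k$; and two non-unions force \ruleSubcoRefl/\ruleSubcoComp/\ruleSubcoFunc, mirrored by the corresponding $\subtypeal$ rule. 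In every case the chosen premises are of the form $\pair{A_i}{B}$, $\pair{A}{B_k}$, or subcomponent pairs, all in $\R$ by construction.

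I expect the main obstacle to be the union machinery of the forward implication, which is strictly harder than in the equivalence analogue Lem.~\ref{lem:eqtypealSoundnessAndCompleteness}. There the symmetric rule \ruleEqalUnion\ matches components through functions $f$ and $g$ that equality pins down canonically, whereas subtyping splits into the asymmetric \ruleSubcoUnionL\ and \ruleSubcoUnionR, so a density step over a union-headed left-hand side needs the genuine supertype-weakening sub-claim above rather than a bare component correspondence; dually, the converse relies on the small observations that $\subtypeco$ is stable under merging and splitting union summands (both immediate re-uses of \ruleSubcoUnionL/\ruleSubcoUnionR). A subtler point that the tree-level detour neutralizes is the apparent depth-dependence of the \ruleSubcoUnionR\ witness: reasoning with the truncations directly, the summand $B_k$ could in principle vary with $k$, but once Lem.~\ref{lem:cutSubtypingCo} has lifted us to $\toBTree{A} \subtypeco \toBTree{B}$ the witness comes from a single tree derivation and is automatically uniform. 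Finally, chaining the resulting biconditional with Prop.~\ref{prop:subtypeSoundnessAndCompleteness} re-proves Prop.~\ref{prop:subtypealSoundnessAndCompleteness} as a corollary.
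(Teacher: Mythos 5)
Your proposal is correct, but it packages the argument differently from the paper, and the difference is worth noting. The paper does \emph{not} first collapse the $\forall k$ quantifier: it works with truncations throughout, defining $\R(\S) = \set{\pair{\cut{\toBTree{A'}}{k}}{\cut{\toBTree{B'}}{k}} \mid \pair{A'}{B'}\in\S,\ k\in\Natural}$ in the forward direction and $\R = \set{\pair{A'}{B'} \mid \forall k.\ \cut{\toBTree{A'}}{k}\subtypeco\cut{\toBTree{B'}}{k}}$ in the converse, and in both directions runs a nested induction on $c = \card{\irectype}{A}+\card{\irectype}{B}$ to push through the rec rules before reaching a structural case. You instead invoke Lem.~\ref{lem:cutSubtypingCo} up front and argue over full trees; this buys you two things. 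First, in the forward direction the auxiliary induction on $c$ disappears, because Lem.~\ref{lem:subtypeUnion} applied to the maximal $\irectype\iuniontype$-decomposition (with $\S = \mathord{\subtypeal}$, legitimately, since $\subtypeal$ is the largest dense set) already digests all the $\irectype$-unfoldings and union selections in one step, leaving only a single inversion of the rule generating $\pair{P_1}{Q_1}$ in the non-union case. Second, and more substantively, your observation about the depth-dependence of the union witness is on target: the paper's converse inverts $\cut{\toBTree{A}}{k}\subtypeco\cut{\toBTree{B}}{k}$ separately for each $k$ and selects a summand $h$ from the resulting $f$, which a priori varies with $k$; lifting to the full trees first makes the witness uniform for free. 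The one place where you do more work than necessary is the ``weakening sub-claim'' for a union-headed left-hand side: the paper's tree-level union rule ($\ruleSubcoUnion$) is the combined form with a function $f:1..n\to1..m$ and side condition $n+m>2$, so the premises of the single $\Phisubtypeco$-step are exactly the pairs $\pair{\toBTree{P_i}}{\toBTree{Q_{f(i)}}}$ that Lem.~\ref{lem:subtypeUnion} hands you, and no supertype-weakening at the $\mu$-level is needed; your sub-claim is nonetheless true and correctly sketched, so it costs only length. The remaining ingredients (Lem.~\ref{lem:forgettingMu}, the regrouping of tree summands into syntactic $\iuniontype$-components \`a la Lem.~\ref{lem:decomposeUnion}, and the case discipline enforced by the side conditions of $\ruleSubalUnionL$/$\ruleSubalUnionR$) match the paper's.
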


\begin{proof}
In the following proof we denote with $\card{\irectype}{A}$ the number
$\irectype$ type constructors at the head of type $A$: $$
\card{\irectype}{A} \eqdef \left\{
\begin{array}{ll}
0                        & \quad\text{if } A \neq \irectype \\
1 + \card{\irectype}{A'} & \quad\text{if } A = \rectype{V}{A'}
\end{array}\right. $$

$\Rightarrow)$ Given $A \subtypeal B$, there exists $\S \in
\powerset{\Type\times\Type}$ such that $\S \subseteq \Phisubtypeal(\S)$ and
$\pair{A}{B} \in \S$. Define: $$\R(\S) \eqdef
\set{\pair{\cut{\toBTree{A'}}{k}}{\cut{\toBTree{B'}}{k}} \mathrel|
\pair{A'}{B'} \in \S, k \in \Natural}$$ We show that $\R(\S) \subseteq
\Phisubtypeco(\R(\S))$.

Define $\S_c \eqdef \set{\pair{A'}{B'} \in \S \mathrel| c =
\card{\irectype}{A'} + \card{\irectype}{B'}}$. For each $c \in \Natural$, we
prove: $$\pair{A'}{B'} \in \S_c \implies \forall k \in
\Natural.\pair{\cut{\toBTree{A'}}{k}}{\cut{\toBTree{B'}}{k}} \in
\Phisubtypeco(\R(\S))$$

Note that if $k = 0$, the result follows directly from $\ruleSubcoRefl$ since
$\pair{\cut{\toBTree{A'}}{k}}{\cut{\toBTree{B'}}{k}} = \pair{\circ}{\circ}
\in \Phisubtypeco(\R(\S))$ trivially. So we in fact prove $$\pair{A'}{B'} \in
\S_c \implies \forall k > 0 \in
\Natural.\pair{\cut{\toBTree{A'}}{k}}{\cut{\toBTree{B'}}{k}} \in
\Phisubtypeco(\R(\S))$$

We proceed by induction on $c$.
\begin{itemize}
  \item $c = 0$. Since $\pair{A}{B} \in \S$ and $\S$ is $\Phisubtypeal$-dense
  one of the following cases must occur:
  \begin{itemize}
    \item $\pair{A}{B} = \pair{a}{a}$. Then for any $k > 0$,
    $\pair{\cut{\toBTree{A}}{k}}{\cut{\toBTree{B}}{k}} = \pair{a}{a} \in
    \Phisubtypeco(\R(\S))$ by $\ruleSubcoRefl$.
    
    \item $\pair{A}{B} = \pair{\functype{A'}{B'}}{\functype{A''}{B''}}$ and
    $\pair{A''}{A'} \in \S$ and $\pair{B'}{B''} \in \S$. From $\pair{A''}{A'}
    \in \S$ we have $\pair{\cut{\toBTree{A''}}{k-1}}{\cut{\toBTree{A'}}{k-1}}
    \in \R(\S)$, for any $k > 0$, as follows from Def.~\ref{def:toBTree}.
    Similarly,
    $\pair{\cut{\toBTree{B'}}{k-1}}{\cut{\toBTree{B''}}{k-1}} \in \R(\S)$. Then
    by $\ruleSubcoFunc$ and Def.~\ref{def:treeCut},
    $$\pair{\cut{\toBTree{\functype{A'}{B'}}}{k}}{\cut{\toBTree{\functype{A''}{B''}}}{k}}
    \in \Phisubtypeco(\R(\S))$$ for any $k > 0$.
    
    \item $\pair{A}{B} = \pair{\datatype{D}{A'}}{\datatype{D'}{B'}}$ and
    $\pair{D}{D'} \in \S$ and $\pair{A'}{B'} \in \S$. As before, we get
    $\pair{\cut{\toBTree{D}}{k-1}}{\cut{\toBTree{D'}}{k-1}},
    \pair{\cut{\toBTree{A'}}{k-1}}{\cut{\toBTree{B'}}{k-1}}\in \R(\S)$ for any
    $k > 0$ from the definition of $T$, and conclude by applying
    $\ruleSubcoComp$ with Def.~\ref{def:treeCut}.
    
    \item $\pair{A}{B} = \pair{\maxuniontype{i \in 1..n}{A_i}}{B}$ with $n >
    1$, $B \neq \irectype$, $A_i \neq \iuniontype$ and $\pair{A_i}{B} \in \S$
    for all $i \in 1..n$. $A$ can alternately be written as $\cA[\vec{A}]$
    where $\cA$ is a maximal $\iuniontype$-context.
    
    Let $\cA'$ be the $\irectype\iuniontype$-context such that $A =
    \cA'[\vec{A'}]$ with $A'_i \neq \irectype,\iuniontype$ for all $i \in
    1..n'$ (note that $n' \geq n$). Similarly, let $\cB$ the
    $\irectype\iuniontype$-context such that $B = \cB[\vec{B}]$ with $B_j \neq
    \irectype,\iuniontype$ for all $j \in 1..m$. Note that, since $A =
    \cA[\vec{A}]$, there exist $s,t : 1..n \to 1..n'$ and
    $\irectype\iuniontype$-contexts $\cA_1,\ldots,\cA_n$ such that
    $$\cA' = \cA[\vec{\cA}] \qquad\text{and}\qquad A_i = \cA_i[A'_{s(i)},
    \ldots, A'_{t(i)}]$$

    From $\pair{A}{B}\in \S$ and Lem.~\ref{lem:subtypeUnion}, there exists $f :
    1..n' \to 1..m$ such that
    $$\pair{\project{\pi_i}{\cA'[\vec{A'}]}}{\project{\rho_{f(i)}}{\cB[\vec{B}]}}
    \in \S$$ where $\pi_i$ is the position of the $i$-th hole in $\cA'$ and
    $\rho_{f(i)}$ is the position of the $f(i)$-th hole in $\cB$. Now we reason
    as follows: $$\kern-4em
\begin{array}{rcl@{\quad}l}
\pair{\project{\pi_i}{\cA'[\vec{A'}]}}{\project{\rho_{f(i)}}{\cB[\vec{B}]}} \in \S
  & \implies & \pair{\cut{\toBTree{\project{\pi_i}{\cA'[\vec{A'}]}}}{k}}{\cut{\toBTree{\project{\rho_{f(i)}}{\cB[\vec{B}]}}}{k}} \in \R(\S) & \text{for all $i \in 1..n', k \in \Natural$} \\
  & \implies & \langle \eraserec{\cA'}[\cut{\toBTree{\project{\pi_1}{\cA'[\vec{A'}]}}}{k},\ldots,\cut{\toBTree{\project{\pi_{n'}}{\cA'[\vec{A'}]}}}{k}], \\
  &          & \hphantom{\langle} \eraserec{\cB}[\cut{\toBTree{\project{\rho_1}{\cB[\vec{B}]}}}{k},\ldots,\cut{\toBTree{\project{\rho_m}{\cB[\vec{B}]}}}{k}] \rangle \in \Phisubtypeco(\R(\S)) & \text{$\ruleSubcoUnion$ with $f$} \\
  & \implies & \langle \cut{(\eraserec{\cA'}[\toBTree{\project{\pi_1}{\cA'[\vec{A'}]}},\ldots,\toBTree{\project{\pi_{n'}}{\cA'[\vec{A'}]}}])}{k}, \\
  &          & \hphantom{\langle} \cut{(\eraserec{\cB}[\toBTree{\project{\rho_1}{\cB[\vec{B}]}},\ldots,\toBTree{\project{\rho_m}{\cB[\vec{B}]}}])}{k} \rangle \in \Phisubtypeco(\R(\S)) \\
  & \implies & \pair{\cut{\toBTree{\cA'[\vec{A'}]}}{k}}{\cut{\toBTree{\cB[\vec{B}]}}{k}} \in \Phisubtypeco(\R(\S)) & \text{Lem.~\ref{lem:forgettingMu}}
\end{array} $$
    Note that since $A'_i\neq \irectype,\iuniontype$ for all $i\in 1..n'$, then
    $\project{\pi_i}{\cA'[\vec{A'}]} \neq \irectype,\iuniontype$ too. Then, we
    can assure that $\toBTree{\project{\pi_i}{\cA'[\vec{A'}]}} \neq
    \iuniontype$ for every $i \in 1..n'$. Similarly,
    $\toBTree{\project{\rho_j}{\cB[\vec{B}]}} \neq \iuniontype$ for every $j
    \in 1..m$. Thus, $\ruleSubcoUnion$ can be applied safely.
    
    \item $\pair{A}{B} = \pair{A}{\maxuniontype{j \in 1..m}{B_l}}$ with
    $\pair{A}{B_l} \in \S$ for some $l \in 1..m$, $m > 1$, $A \neq
    \irectype,\iuniontype$ and $B_j \neq \iuniontype$ for all $j \in 1..m$. We
    call $\cB$ the maximal $\iuniontype$-context defining $B$ (\ie $B =
    \cB[\vec{B}]$ with $B_j \neq \iuniontype$ for every $j \in 1..m$).

    Let $\cB'$ be the $\irectype\iuniontype$-context such that $B =
    \cB'[\vec{B'}]$ with $B'_j \neq \irectype,\iuniontype$ for all $l \in
    1..m'$ (here $m' \geq m$). Similarly, $A = \cA[\vec{A}]$ but we know that
    $\cA = \Box$ since $A \neq \irectype,\iuniontype$ by hypothesis. As in the
    previous case, from $B = \cB[\vec{B}]$, there exists $s,t : 1..m \to 1..m'$
    and $\irectype\iuniontype$-contexts $\cB_1,\ldots,\cB_m$ such that $$\cB' =
    \cB[\vec{\cB}] \qquad\text{and}\qquad B_j = \cB_j[B'_{s(j)}, \ldots,
    B'_{t(j)}]$$

    Take $\vec{B'_j} = B'_{s(j)},\ldots,B'_{t(j)}$ for every $j \in 1..m$.
    Then, from $\pair{A}{B_l} \in \S$ and Lem.\ref{lem:subtypeUnion}, there
    exists $f : 1..n \to 1..m'$ such that
    $$\pair{\project{\pi_i}{\cA[\vec{A}]}}{\project{\tau_{f(i)}}{\cB_l[\vec{B'_l}]}}
    \in \S$$

    Now, from $B'_j \neq \irectype,\iuniontype$ we can assure
    $\project{\rho'_j}{\cB'[\vec{B'}]} \neq \irectype,\iuniontype$ for every $j
    \in 1..m'$. Thus, $\project{\rho'_{f(1)}}{\cB'[\vec{B'}]} \neq
    \irectype,\iuniontype$ taking $\rho'_{f(1)} = \rho_l\tau_{f(1)}$ where
    $\rho_l$ is the position of the $l$-th hole in $\cB$. Finally, we can
    assure both $\toBTree{A}$ and
    $\toBTree{\project{\rho'_{f(1)}}{\cB'[\vec{B'}]}}$ are non-union. We
    conclude as follows (recall $n = 1$ and $\cA = \Box$, thus
    $\project{\pi_1}{\cA[\vec{A}]} = A$): $$\kern-6em
\begin{array}{rcl@{\quad}l}
\pair{\project{\pi_1}{\cA[\vec{A}]}}{\project{\rho'_{f(1)}}{\cB'[\vec{B'}]}} \in \S
  & \implies & \pair{\cut{\toBTree{\project{\pi_1}{\cA[\vec{A}]}}}{k}}{\cut{\toBTree{\project{\rho'_{f(1)}}{\cB'[\vec{B'}]}}}{k}} \in \R(\S) & \text{for all $k \in \Natural$} \\
  & \implies & \langle \cut{\toBTree{\project{\pi_1}{\cA[\vec{A}]}}}{k}, \\
  &          & \hphantom{\langle} \eraserec{\cB'}[\cut{\toBTree{\project{\rho'_1}{\cB'[\vec{B'}]}}}{k},\ldots,\cut{\toBTree{\project{\rho'_m}{\cB'[\vec{B'}]}}}{k}] \rangle \in \Phisubtypeco(\R(\S)) & \text{$\ruleSubcoUnion$ with $f$} \\
  & \implies & \langle \cut{\toBTree{\project{\pi_1}{\cA[\vec{A}]}}}{k}, \\
  &          & \hphantom{\langle} \cut{(\eraserec{\cB'}[\toBTree{\project{\rho'_1}{\cB'[\vec{B'}]}},\ldots,\toBTree{\project{\rho'_m}{\cB'[\vec{B'}]}}])}{k} \rangle \in \Phisubtypeco(\R(\S)) \\
  & \implies & \pair{\cut{\toBTree{\cA[\vec{A}]}}{k}}{\cut{\toBTree{\cB'[\vec{B'}]}}{k}} \in \Phisubtypeco(\R(\S)) & \text{Lem.~\ref{lem:forgettingMu}}
\end{array} $$
  \end{itemize}

  \item $c > 0$. Then:
  \begin{itemize}
    \item $\pair{A}{B} = \pair{\rectype{V}{A'}}{B}$ and
    $\pair{\substitute{V}{\rectype{V}{A'}}{A'}}{B} \in \S$. By inductive
    hypothesis,
    $$\pair{\cut{\toBTree{\substitute{V}{\rectype{V}{A'}}{A'}}}{k}}{\cut{\toBTree{B}}{k}}
    \in \Phisubtypeco(\R(\S))$$ Since
    $\pair{\cut{\toBTree{\substitute{V}{\rectype{V}{A'}}{A'}}}{k}}{\cut{\toBTree{B}}{k}}
    = \pair{\cut{\toBTree{\rectype{V}{A'}}}{k}}{\cut{\toBTree{B}}{k}}$ by
    Def.~\ref{def:toBTree}, we conclude.

    \item $\pair{A}{B} = \pair{A}{\rectype{W}{B'}}$ and
    $\pair{A}{\substitute{W}{\rectype{W}{B'}}{B'}} \in \S$ and $A \neq
    \irectype$. As before, from the inductive hypothesis we get
    $$\pair{\cut{\toBTree{A}}{k}}{\cut{\toBTree{\substitute{W}{\rectype{W}{B'}}{B'}}}{k}}
    \in \Phisubtypeco(\R(\S))$$ and we conclude by Def.~\ref{def:toBTree}.
  \end{itemize}
\end{itemize}

$\Leftarrow)$ We show that $\R \eqdef \set{\pair{A'}{B'} \mathrel| A', B' \in
\Type, \forall k \in \Natural. \cut{\toBTree{A'}}{k} \subtypeco
\cut{\toBTree{B'}}{k}}$ is $\Phisubtypeal$-dense.

Let $\pair{A}{B} \in \R$. We proceed by induction on $c = \card{\irectype}{A} +
\card{\irectype}{B}$. From now on we consider $k > 0$ to avoid the border case
$\pair{\cut{\toBTree{A}}{0}}{\cut{\toBTree{B}}{0}} = \pair{\circ}{\circ}$.

\begin{itemize}
  \item $c = 0$. Since $\pair{\cut{\toBTree{A}}{k}}{\cut{\toBTree{B}}{k}} \in
  \mathbin{\subtypeco}$ and $\mathbin{\subtypeco} = \Phisubtypeco(\subtypeco)$,
  one of the following cases must hold:
  \begin{itemize}
    \item $\pair{\cut{\toBTree{A}}{k}}{\cut{\toBTree{B}}{k}} = \pair{a}{a}$.
    Then, since $c = 0$ it must be the case $A = a = B$. Thus, by
    $\ruleSubalRefl$, $\pair{A}{B} \in \Phisubtypeal(\R)$.
    
    \item $\pair{\cut{\toBTree{A}}{k}}{\cut{\toBTree{B}}{k}} =
    \pair{\functype{\cut{\tA'}{k-1}}{\cut{\tA''}{k-1}}}
         {\functype{\cut{\tB'}{k-1}}{\cut{\tB''}{k-1}}}$
    with
\begin{equation}
\label{eq:subCompl:func}
\cut{\tB'}{k-1} \subtypeco \cut{\tA'}{k-1}
\qquad\text{and}\qquad
\cut{\tA''}{k-1} \subtypeco \cut{\tB''}{k-1}
\end{equation}
    Once again, since the outermost constructor of both $A$ and $B$ is not
    $\irectype$, there exists $A',A'',B',B'' \in \Type$ such that $A =
    \functype{A'}{A''}$ and $B = \functype{B'}{B''}$ with $\toBTree{A'} = \tA'$,
    $\toBTree{A''} = \tA''$, $\toBTree{B'} = \tB'$ and $\toBTree{B''} = \tB''$.
    Moreover, from (\ref{eq:subCompl:func}) we get $\pair{B'}{A'},
    \pair{A''}{B''} \in \R$. We conclude with $\ruleSubalFunc$, $\pair{A}{B}
    \in \Phisubtypeal(\R)$.
    
    \item $\pair{\cut{\toBTree{A}}{k}}{\cut{\toBTree{B}}{k}} =
    \pair{\datatype{\cut{\tA'}{k-1}}{\cut{\tA''}{k-1}}}
         {\datatype{\cut{\tB'}{k-1}}{\cut{\tB''}{k-1}}}$
    with $$\cut{\tA'}{k-1} \subtypeco \cut{\tB'}{k-1} \qquad\text{and}\qquad
    \cut{\tA''}{k-1} \subtypeco \cut{\tB''}{k-1}$$ As in the previous case, we
    know there exists $D,D',A',B' \in \Type$ such that $A = \datatype{D}{A'}$,
    $B = \datatype{D'}{B'}$ and $\pair{D}{D'}, \pair{A'}{B'} \in \R$. Here we
    use $\ruleSubalComp$ to conclude.
    
    \item $\pair{\cut{\toBTree{A}}{k}}{\cut{\toBTree{B}}{k}} =
    \pair{\maxuniontype{i \in 1..n}{\cut{\tA_i}{k}}}
         {\maxuniontype{j \in 1..m}{\cut{\tB_j}{k}}}$
    such that $n+m > 2$, $\cut{\tA_i}{k}, \cut{\tB_j}{k} \neq \iuniontype$ and
    there exists $f : 1..n \to 1..m$ s.t. $\cut{\tA_i}{k} \subtypeco
    \cut{\tB_{f(i)}}{k}$.
    
    We distinguish two cases:
    \begin{enumerate}
      \item $n = 1$. Then $\cut{\toBTree{A}}{k} = \cut{\tA_1}{k} \neq
      \irectype,\iuniontype$ and $m > 1$. Moreover, by
      Lem.~\ref{lem:decomposeUnion} on $B$, there exists $m' \leq m$, $\cB,
      \cB_1, \ldots, \cB_{m'}$ maximal $\iuniontype$-contexts, $B_1, \ldots,
      B_{m'} \neq \iuniontype$ and functions $s,t : 1..m' \to 1..m$ such that
\begin{equation}
\label{eq:subCompl:unionR}
B = \cB[\vec{B}]
\qquad\text{and}\qquad
\toBTree{B_l} = \cB_l[\tB_{s(l)}, \ldots, \tB_{t(l)}]
\end{equation}
      for every $l \in 1..m'$.
      
      Let $h \in 1..m'$ such that $f(1) \in s(h)..t(h)$. From
      $$\cut{\toBTree{A}}{k} = \cut{\tA_1}{k} \subtypeco \cut{\tB_{f(1)}}{k}$$
      and (\ref{eq:subCompl:unionR})-right, we get $$\cut{\toBTree{A}}{k}
      \subtypeco \maxuniontype{j \in s(h)..t(h)}{\cut{\tB_j}{k}} =
      \cut{\toBTree{B_h}}{k}$$ by $\ruleSubcoUnion$. Thus, $\pair{A}{B_h} \in
      \R$.
      
      Finally, we have $A \neq \irectype,\iuniontype$, $B_l \neq \iuniontype$
      and, since $\card{\irectype}{B} = 0$, $m \geq m' > 1$. We take $f' : 1..1
      \to 1..m'$ such that $f'(1) = h$ and conclude with
      (\ref{eq:subCompl:unionR})-left, by $\ruleSubalUnionR$, $\pair{A}{B} \in
      \Phisubtypeal(\R)$.
      
      \item $n > 1$. By Lem.~\ref{lem:decomposeUnion} on $A$, there exists $n'
      \leq n$, $\cA, \cA_1, \ldots, \cA_{n'}$ maximal $\iuniontype$-contexts,
      $A_1, \ldots, A_{n'} \neq \iuniontype$ and functions $s,t : 1..n' \to
      1..n$ such that
\begin{equation}
\label{eq:subCompl:unionL}
A = \cA[\vec{A}]
\qquad\text{and}\qquad
\toBTree{A_l} = \cA_l[\tA_{s(l)}, \ldots, \tA_{t(l)}]
\end{equation}
      for every $l \in 1..n'$.
      
      From (\ref{eq:subCompl:unionL})-right, by $\ruleSubcoUnion$ using
      function $f$ from the hypothesis, we get $$\cut{\toBTree{A_l}}{k} =
      \maxuniontype{i \in s(h)..t(h)}{\cut{\tA_i}{k}} \subtypeco
      \cut{\toBTree{B}}{k}$$ and thus, $\pair{A_l}{B} \in \R$ for every $l \in
      1..n'$.
      
      Then, we conclude by $\ruleSubalUnionL$ resorting to
      (\ref{eq:subCompl:unionL})-left and the fact that $A,B \neq \irectype$ by
      hypothesis (hence $n \geq n' > 1$) and $A_l \neq \iuniontype$,
      $\pair{A}{B} \in \Phisubtypeal(\R)$.
    \end{enumerate}
  \end{itemize}

  \item $c > 0$. Then we must analyze two cases:
  \begin{enumerate}
    \item $A = \rectype{V}{A'}$. Then $\cut{\toBTree{A}}{k} =
    \cut{\toBTree{\substitute{V}{\rectype{V}{A'}}{A'}}}{k}$ by
    Def.~\ref{def:toBTree}. Thus,
    $\pair{\substitute{V}{\rectype{V}{A'}}{A'}}{B} \in \R$ too. We conclude by
    $\ruleSubalRecL$, $\pair{A}{B} \in \Phisubtypeal(\R)$.
    
    \item $A \neq \irectype$ and $B = \rectype{W}{B'}$. Again, $\pair{A}{B} \in
    \R$ implies $\pair{A}{\substitute{W}{\rectype{W}{B'}}{B'}} \in \R$. This
    time we conclude by $\ruleSubalRecR$, $\pair{A}{B} \in \Phisubtypeal(\R)$.
  \end{enumerate}
\end{itemize}
\end{proof}


Unfortunately, the generating function determined by the rules in
Fig.~\ref{fig:subtypingSchemesAl}, let us call it $\Phisubtypeal$, is not
invertible. Notice that $\ruleSubalUnionR$ overlaps with itself. For example,
$\consttype{c} \subtypeal
\uniontype{(\uniontype{\consttype{c}}{\consttype{d}})}{(\uniontype{\consttype{e}}{\consttype{c}})}$
belongs to two $\Phisubtypeal$-saturated sets (\ie sets $\X$ such that $\X \subseteq
\Phisubtypeal(\X)$):
\begin{center}
$\begin{array}{rcl}
\X_1  & =  &\set{
\pair{\consttype{c}}{(\consttype{c}\iuniontype\consttype{d})\iuniontype
  (\consttype{e}\iuniontype\consttype{c})}, 
\pair{\consttype{c}}{
(\consttype{c}\iuniontype\consttype{d})}, 
\pair{\consttype{c}}{\consttype{c}}}\\
\X_2  & =  & \set{
\pair{\consttype{c}}{(\consttype{c}\iuniontype\consttype{d})\iuniontype
  (\consttype{e}\iuniontype\consttype{c})}, 
\pair{\consttype{c}}{
(\consttype{e}\iuniontype\consttype{c})}, 
\pair{\consttype{c}}{\consttype{c}}}
\end{array}$
\end{center}

However, since this is the only source of non-invertibility we easily derive
a subtype membership checking function $\fSubtype{\bullet}{\bullet}{\bullet}$
that, in the case of $\ruleSubalUnionR$, simply checks all cases
(Fig.~\ref{algo:subtypeChecking}).

\begin{figure} $$
\begin{array}{rcl}
\multicolumn{3}{l}{\fSubtype{S}{A}{B}\eqdef} \\
 & & \ttIf \pair{A}{B} \in S \\
 & & \quad \ttThen S \\
 & & \quad \ttElse\ \ttLet S_0 = S \cup \set{\pair{A}{B}} \ttIn \\
 & & \qquad \ttCase \pair{A}{B} \ttOf \\
 & & \qqquad \pair{a}{a} \rightarrow \\
 & & \qqqquad S_0\\

 & & \qqquad \pair{\datatype{A'}{A''}}{\datatype{B'}{B''}} \rightarrow \\
 & & \qqqquad \ttIf A', B' \text{ are datatypes} \\
 & & \qqqqquad \ttThen\ \ttLet S_1 = \fSubtype{S_0}{A'}{A''} \ttIn \\
 & & \qqqqqquad \fSubtype{S_1}{B'}{B''}\\
 & & \qqqqquad \ttElse \ttFail \\

 & & \qqquad \pair{\functype{A'}{A''}}{\functype{B'}{B''}} \rightarrow \\
 & & \qqqquad \ttLet S_1 = \fSubtype{S_0}{B'}{A'} \ttIn \\
 & & \qqqqquad \fSubtype{S_1}{A''}{B''} \\

 & & \qqquad \pair{\rectype{V}{A'}}{B} \rightarrow \\
 & & \qqqquad \fSubtype{S_0}{\substitute{V}{\rectype{V}{A'}}{A'}}{B} \\

 & & \qqquad \pair{A}{\rectype{W}{B'}} \rightarrow \\
 & & \qqqquad \fSubtype{S_0}{A}{\substitute{W}{\rectype{W}{B'}}{B'}} \\

 & & \qqquad \pair{\maxuniontype{i \in 1..n}{A_i}}{B} \rightarrow \\
 & & \qqqquad \ttLet S_1 = \fSubtype{S_0}{A_1}{B} \ttIn \\
 & & \qqqquad \ttLet S_2 = \fSubtype{S_1}{A_2}{B} \ttIn \\
 & & \qqqquad \ldots \\
 & & \qqqquad \ttLet S_{n-1} = \fSubtype{S_{n-2}}{A_{n-1}}{B} \ttIn \\
 & & \qqqqquad \fSubtype{S_{n-1}}{A_n}{B} \\

 & & \qqquad \pair{A}{\maxuniontype{j \in 1..m}{B_j}} \rightarrow \\
 & & \qqqquad \ttSeq \fSubtype{S_0}{A}{B_1}, \ldots, \fSubtype{S_0}{A}{B_m} \\

 & & \qqquad \mathtt{otherwise} \rightarrow \\
 & & \qqqquad \ttFail
\end{array} $$
\caption{Subtype checking algorithm.}
\label{algo:subtypeChecking}
\end{figure}


\section{Type Checking}
\label{sec:typeChecking}

A syntax-directed presentation for typing in $\capp$, inferring judgments of
the form \sequSD{\Gamma}{s : A}, may be obtained from the rules of
Fig.~\ref{fig:typingSchemesForPatternsAndTerms} by dropping subsumption. This
requires ``hard-wiring'' it back in into $\ruleTApp$. Unfortunately, the na\"ive
syntax-directed variant:
\begin{center}
$\Rule{\sequSD{\Gamma}{r : (\maxuniontype{i \in 1..n}{A_i}) \ifunctype B}
      \quad
      \sequSD{\Gamma}{u : A'}
      \quad
      A'\subtypemu A_k, \mbox{ for some }k \in 1..n
      }
      {\sequSD{\Gamma}{r \iappterm u : B}}
      {\ruleTalApp'}
$ 
\end{center} 
fails to capture all the required terms. In other words, there are $\Gamma, s$
and $A$ such that \sequT{\Gamma}{s : A} but no $A'\subtypemu A$ such that
\sequSD{\Gamma}{s : A'}. For example, take $\Gamma(x) \eqdef
\uniontype{(\functype{\uniontype{\consttype{c}}{\consttype{e}}}{\consttype{d}})}{(\functype{\uniontype{\consttype{c}}{\consttype{f}}}{\consttype{d}})}$,
$s \eqdef \appterm{x}{\constterm{c}}$ and $A \eqdef \consttype{d}$. More
generally, from \sequSD{\Gamma}{r : A} and $A \subtypemu \maxuniontype{i \in
1..n}{A_i} \ifunctype B$ we cannot infer that $A$ is a functional type due to
the presence of union types. A complete
(Prop.~\ref{prop:completenessTypeChecking}) syntax directed presentation is
obtained by dropping $\ruleTSubs$ from
Fig.~\ref{fig:typingSchemesForPatternsAndTerms} and replacing $\ruleTAbs$ and
$\ruleTApp$ by $\ruleTalAbs$ and $\ruleTalApp$, resp., of
Fig.~\ref{fig:algorithmicTypingSchemes}.

\begin{figure} $$
\begin{array}{c}
\Rule{\Gamma(x) = A}
     {\sequSD{\Gamma}{x : A}}
     {\ruleTalVar}
\quad
\Rule{}
     {\sequSD{\Gamma}{\constterm{c} : \consttype{c}}}
     {\ruleTalConst} 
\quad
\Rule{\sequSD{\Gamma}{r : D} \quad \sequSD{\Gamma}{u : A}}
     {\sequSD{\Gamma}{\dataterm{r}{u} : \datatype{D}{A}}}
     {\ruleTalComp}
\\
\\
\Rule{\begin{array}{c}
        \lista{\sequC{\theta_i}{p_i : A_i}}_{i \in 1..n}
        \text{ compatible}
        \\
        (\sequP{\theta_i}{p_i : A_i})_{i\in 1..n}
        \quad
        (\dom{\theta_i} = \fm{p_i})_{i\in 1..n}
        \quad
        (\sequSD{\Gamma, \theta_i}{s_i : B_i})_{i\in 1..n}
      \end{array}
     }
     {\sequSD{\Gamma}{(\absterm{p_i}{s_i}{\theta_i})_{i \in 1..n} :
      \functype{\maxuniontype{i \in 1..n}{A_i}}{\maxuniontype{i \in 1..n}{B_i}}}}
     {\ruleTalAbs}
\\
\\
\Rule{\begin{array}{l}
        \sequSD{\Gamma}{r : A}
        \quad
        A \eqtypemu \maxuniontype{i\in 1..n}{(\functype{A_i}{B_i})}
        \quad
        A_i \neq \iuniontype
        \\
        \sequSD{\Gamma}{u : C}
        \quad
        (\sequTE{}{C \subtypemu A_i})_{i \in 1..n}
      \end{array}
     }
     {\sequSD{\Gamma}{\appterm{r}{u} : \maxuniontype{i \in 1..n}{B_i}}}
     {\ruleTalApp} 
\end{array} $$
\caption{Syntax-directed typing rules for terms.}
\label{fig:algorithmicTypingSchemes}
\end{figure}

\begin{proposition}
\label{prop:completenessTypeChecking}
\begin{enumerate}
  \item If $\sequSD{\Gamma}{s : A}$, then $\sequT{\Gamma}{s : A}$.
  
  \item If $\sequT{\Gamma}{s : A}$, then $\exists A'$ such that $A' \subtypemu
  A$ and $\sequSD{\Gamma}{s : A'}$.
\end{enumerate}
\end{proposition}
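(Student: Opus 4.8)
The plan is to prove the two items by separate inductions: \textbf{(1)} by induction on the syntax-directed derivation $\sequSD{\Gamma}{s:A}$, and \textbf{(2)} by induction on the declarative derivation $\sequT{\Gamma}{s:A}$ of Fig.~\ref{fig:typingSchemesForPatternsAndTerms}. The syntax-directed system is the declarative one with \ruleTSubs\ removed and subtyping ``hard-wired'' into \ruleTalApp\ (with a union codomain collected in \ruleTalAbs). Accordingly, in \textbf{(1)} I reintroduce a single \ruleTSubs\ wherever the hard-wiring replaced it, whereas in \textbf{(2)} I absorb every \ruleTSubs\ into the existential witness $A'$ by transitivity (\ruleSubmuTrans). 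The pattern judgment $\sequP{\theta}{p:A}$ is shared by both systems, so the pattern types $A_i$ in the abstraction rules coincide and need no special treatment.

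For \textbf{(1)} the rules \ruleTalVar, \ruleTalConst\ and \ruleTalComp\ are simulated verbatim by their declarative counterparts. In \ruleTalAbs\ the bodies are typed $s_i:B_i$ and the conclusion has codomain $\maxuniontype{i}{B_i}$; I first lift each body to this common codomain using $B_i\subtypemu\maxuniontype{i}{B_i}$ (\ruleSubmuUnionRL/\ruleSubmuUnionRR) and \ruleTSubs, and then apply \ruleTAbs. For \ruleTalApp, from $A\eqtypemu\maxuniontype{i}{(\functype{A_i}{B_i})}$ and $C\subtypemu A_i$ for all $i$ I observe, by contravariance, that each summand satisfies $\functype{A_i}{B_i}\subtypemu\functype{C}{\maxuniontype{j}{B_j}}$ via \ruleSubmuFunc\ (the domain side is exactly $C\subtypemu A_i$, the codomain side is union introduction), so \ruleSubmuUnionL\ gives $\maxuniontype{i}{(\functype{A_i}{B_i})}\subtypemu\functype{C}{\maxuniontype{i}{B_i}}$. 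Hence $r:\functype{C}{\maxuniontype{i}{B_i}}$ by \ruleTSubs, and since $u:C$ one use of \ruleTApp\ with the single domain component $C$ yields $\appterm{r}{u}:\maxuniontype{i}{B_i}$.

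For \textbf{(2)} the base cases are immediate and the congruence cases carry the witness through. For \ruleTComp\ the hypotheses give $r:D'$, $u:A''$ with $D'\subtypemu D$ and $A''\subtypemu A$, and \ruleTalComp\ followed by \ruleSubmuComp\ yields $\datatype{D'}{A''}\subtypemu\datatype{D}{A}$; this uses the auxiliary closure fact that a subtype of a datatype is again a datatype, so that $D'$ is admissible in \ruleTalComp, which I establish by a routine induction on $\subtypemu$. The \ruleTSubs\ case is the point of the existential: from $s:A$, $A\subtypemu A'$ and $\sequSD{\Gamma}{s:A''}$ with $A''\subtypemu A$, transitivity gives $A''\subtypemu A'$. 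For \ruleTAbs\ the bodies are declaratively typed at a common $B$; refining them to $s_i:B_i'$ with $B_i'\subtypemu B$, rule \ruleTalAbs\ produces $\functype{\maxuniontype{i}{A_i}}{\maxuniontype{i}{B_i'}}$, which is a subtype of $\functype{\maxuniontype{i}{A_i}}{B}$ by \ruleSubmuFunc\ (reflexive on the shared domain) and \ruleSubmuUnionL.

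The crux, and expected main obstacle, is the \ruleTApp\ case of \textbf{(2)}: the hypotheses give $r:A_r$ with $A_r\subtypemu\functype{\maxuniontype{i}{A_i}}{B}$ and $u:C$ with $C\subtypemu A_k$, and I must fire \ruleTalApp. This requires an \emph{inversion lemma}: if $A\subtypemu\functype{D}{E}$ then $A\eqtypemu\maxuniontype{j}{(\functype{D_j}{E_j})}$, a maximal union of function types, with $D\subtypemu D_j$ and $E_j\subtypemu E$ for every $j$. Granting it with $D=\maxuniontype{i}{A_i}$ and $E=B$, the argument satisfies $C\subtypemu A_k\subtypemu\maxuniontype{i}{A_i}\subtypemu D_j$ for all $j$ (union introduction, then the lemma's bound), so the premises of \ruleTalApp\ hold and its conclusion $\maxuniontype{j}{E_j}$ is below $B$ by \ruleSubmuUnionL, providing the witness. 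A direct induction on the derivation of $A\subtypemu\functype{D}{E}$ is delicate because of \ruleSubmuTrans, \ruleSubmuEq\ and especially \ruleSubmuRec; I therefore pass to the tree semantics. By Prop.~\ref{prop:subtypeSoundnessAndCompleteness} it suffices to analyse $\toBTree{A}\subtypeco\functype{\toBTree{D}}{\toBTree{E}}$, and with a function type on the right only \ruleSubcoRefl, \ruleSubcoFunc\ and \ruleSubcoUnionL\ can apply. This forces $\toBTree{A}$ to be a union of function trees meeting the contravariant domain and covariant codomain bounds (the construction of $\Tree$ excludes infinite $\iuniontype$-branches, so this top-level union is finite), and Lemma~\ref{lem:decomposeUnion} reads it back as a $\mu$-type decomposition $A\eqtypemu\maxuniontype{j}{(\functype{D_j}{E_j})}$, the inequalities $D\subtypemu D_j$ and $E_j\subtypemu E$ being recovered through Prop.~\ref{prop:subtypeSoundnessAndCompleteness} in the reverse direction.
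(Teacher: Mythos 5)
Your proposal is correct and follows the same overall strategy as the paper: two separate inductions, with \ruleTSubs\ absorbed by transitivity in item~(2), the \ruleTalAbs/\ruleTAbs\ cases handled by lifting each body type to the union of codomains, and the crux located exactly where the paper locates it, namely inverting $A_r \subtypemu \functype{D}{E}$ in the \ruleTApp\ case of item~(2). The one genuine difference is how that inversion is carried out. The paper stays at the level of $\mu$-types: it invokes Prop.~\ref{prop:subtypealSoundnessAndCompleteness} to move into the coinductive relation $\subtypeal$, normalizes $A_r$ to a maximal union of non-$\irectype$, non-$\iuniontype$ components by contractiveness, and then reads off the decomposition directly from the only applicable rules \ruleSubalUnionL\ and \ruleSubalFunc. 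You instead pass to the infinite-tree relation $\subtypeco$ via Prop.~\ref{prop:subtypeSoundnessAndCompleteness}, argue that only the function and left-union rules can fire against a function tree on the right, and read the resulting union of function trees back into a $\mu$-type decomposition with Lem.~\ref{lem:decomposeUnion}. The two routes rest on the same machinery (the correctness of $\subtypeal$ is itself proved through $\subtypeco$), but the paper's is slightly more economical because $\subtypeal$ already packages the unfolding of $\irectype$ and the flattening of unions, whereas your route must separately note that the non-union components produced by Lem.~\ref{lem:decomposeUnion} may still carry head $\irectype$'s that need to be unfolded (finitely, by contractiveness) before the claimed form $A_r \eqtypemu \maxuniontype{j}{(\functype{D_j}{E_j})}$ is literally obtained. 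On the other hand, your derivation of $C \subtypemu D_j$ by $C \subtypemu A_k \subtypemu \maxuniontype{i\in 1..n}{A_i} \subtypemu D_j$ and transitivity is cleaner than the paper's detour through re-expressing $A_k$ as a sub-union of the non-union components of the domain.
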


\begin{proof}
\begin{enumerate}
  \item By induction on $\sequSD{\Gamma}{s : A}$ analyzing the last rule
  applied.
  \begin{itemize}
    \item $\ruleTalVar$, $\ruleTalConst$ and $\ruleTalComp$ are immediate.
    
    \item $\ruleTalAbs$: then $A = \functype{\maxuniontype{i \in
    1..n}{A_i}}{\maxuniontype{i \in 1..n}{B_i}}$ and, by hypothesis of the
    rule, we have $\lista{\sequC{\theta_i}{p_i : A_i}}_{i \in 1..n}$
    compatible, $\sequP{\theta_i}{p_i : A_i}$, $\dom{\theta_i} = \fm{p_i}$ and
    $\sequSD{\Gamma, \theta_i}{s_i : B_i}$ for every $i \in 1..n$. Then, by
    inductive hypothesis, $\sequT{\Gamma, \theta_i}{s_i : B_i}$ and, by
    $\ruleTSubs$, $\sequT{\Gamma, \theta_i}{s_i : \maxuniontype{i \in
    1..n}{B_i}}$ for each $i \in 1..n$. Finally we conclude with $\ruleTAbs$.
    
    \item $\ruleTalApp$: then $A = \maxuniontype{i\in 1..n}{B_i}$ and we have
    $\sequSD{\Gamma}{r : A'}$ and $\sequSD{\Gamma}{u : C}$ with $A' \eqtypemu
    \maxuniontype{i \in 1..n}{(\functype{A_i}{B_i})}$, $A_i \neq \iuniontype$
    and $C \subtypemu A_i$ for every $i \in i..n$. From $\sequSD{\Gamma}{r :
    A'}$, by inductive hypothesis, we have $\sequT{\Gamma}{r : A'}$. Moreover,
    from $A' \eqtypemu \maxuniontype{i \in 1..n}{(\functype{A_i}{B_i})}$ with
    $\ruleSubmuEq$ and $\ruleTSubs$, we get $\sequT{\Gamma}{r : \maxuniontype{i
    \in 1..n}{(\functype{A_i}{B_i})}}$. By $\ruleTSubs$ once again, we deduce
    \begin{equation}
      \sequT{\Gamma}{r : \maxuniontype{i \in
      1..n}{(\functype{A_i}{\maxuniontype{j \in 1..n}{B_j}})}}
      \label{al-app:i}
    \end{equation}
    Now, from $C \subtypemu A_i$ we get $\functype{A_i}{\maxuniontype{j \in
    1..n}{B_j}} \subtypemu \functype{C}{\maxuniontype{j \in 1..n}{B_j}}$ for
    every $i \in 1..n$, and therefore $$\maxuniontype{i \in
    1..n}{(\functype{A_i}{\maxuniontype{j \in 1..n}{B_j}})} \subtypemu
    \functype{C}{\maxuniontype{j \in 1..n}{B_j}}$$ From this and
    (\ref{al-app:i}) we deduce $\sequT{\Gamma}{r : \functype{C}{\maxuniontype{i
    \in 1..n}{B_i}}}$ using $\ruleTSubs$. Finally, from $\sequSD{\Gamma}{u :
    C}$ and the inductive hypothesis we have $\sequT{\Gamma}{u : C}$, thus we
    conclude by applying $\ruleTApp$, $\sequT{\Gamma}{\appterm{r}{u} :
    \maxuniontype{i \in 1..n}{B_i}}$.
  \end{itemize}
    
  \item By induction on $\sequT{\Gamma}{s : A}$ analyzing the last rule applied.
  \begin{itemize}
    \item $\ruleTVar$, $\ruleTConst$ and $\ruleTComp$ are immediate.
    
    \item $\ruleTAbs$: then $A = \functype{\maxuniontype{i \in 1..n}{A_i}}{B}$
    and, by hypothesis of the rule, we have $\lista{\sequC{\theta_i}{p_i :
    A_i}}_{i \in 1..n}$ compatible, $\sequP{\theta_i}{p_i : A_i}$,
    $\dom{\theta_i} = \fm{p_i}$, $\sequT{\Gamma, \theta_i}{s_i : B}$ for every
    $i \in 1..n$. Then, by inductive hypothesis, $\exists B_i$ such that $B_i
    \subtypemu B$ and $\sequSD{\Gamma, \theta_i}{s_i : B_i}$ for each $i \in
    1..n$. Finally we conclude by applying $\ruleTalAbs$, assigning $s$ the
    type $\functype{\maxuniontype{i \in 1..n}{A_i}}{\maxuniontype{i \in
    1..n}{B_i}}$. Note that $\maxuniontype{i \in 1..n}{B_i}\subtypemu B$ and
    thus we have $\functype{\maxuniontype{i \in 1..n}{A_i}}{\maxuniontype{i \in
    1..n}{B_i}} \subtypemu \functype{\maxuniontype{i \in 1..n}{A_i}}{B} = A$.
    
    \item $\ruleTApp$: then $\sequT{\Gamma}{r : \functype{\maxuniontype{i \in
    1..n}{A_i}}{A}}$ and $\sequT{\Gamma}{u : A_k}$ for some $k \in 1..n$. By
    inductive hypothesis we have $\sequSD{\Gamma}{r : B}$ with $B \subtypemu
    \functype{\maxuniontype{i \in 1..n}{A_i}}{A}$. Without loss of generality
    we can assume $B = \maxuniontype{j \in 1..m}{B_j}$ with $B_j \neq
    \iuniontype$. Moreover, by contractiveness, we can further assume that $B
    \eqtypemu \maxuniontype{j \in 1..m}{B_j}$ and $B_j \neq
    \irectype,\iuniontype$ for every $j \in 1..m$.
    
    By Prop.~\ref{prop:subtypealSoundnessAndCompleteness}, from
    $\ruleSubalUnionL$ we get $B_j \subtypemu \functype{\maxuniontype{i \in
    1..n}{A_i}}{A}$ for every $j \in 1..m$. Moreover, since $B_j \neq
    \irectype,\iuniontype$ it can only be the case of $\ruleSubalFunc$ thus, by
    Prop.~\ref{prop:subtypealSoundnessAndCompleteness} once again, we have $B_j
    = \functype{B'_j}{B''_j}$ with
    \begin{equation}
      \maxuniontype{i \in 1..n}{A_i} \subtypemu B'_j \qquad\text{and}\qquad
      B''_j \subtypemu A
      \label{eq:i}
    \end{equation}
    for every $j \in 1..m$. Once again, by contractiveness we can assume $B'_j
    \neq \irectype$. Resorting to the maximal union type notation once again,
    we can assume w.l.o.g that $\maxuniontype{i \in 1..n}{A_i} =
    \maxuniontype{i \in 1..n'}{A'_i}$ such that $A_k = \maxuniontype{i \in
    I_k}{A'_i}$ with $I_k \subseteq 1..n'$ and $A'_i \neq \iuniontype$ for
    every $i \in 1..n'$. Then, by
    Prop.~\ref{prop:subtypealSoundnessAndCompleteness} and $\ruleSubalUnionL$,
    from (\ref{eq:i}-left) we get $A'_i \subtypemu B'_j$ for every $i \in
    1..n'$. Hence, from $I_k \subseteq 1..n'$ we deduce
    \begin{equation}
      A_k \subtypemu B'_j
      \label{eq:ii}
    \end{equation}
    for every $j \in 1..m$. By inductive hypothesis from $\sequT{\Gamma}{u :
    A_k}$ we get $\exists C$ such that $C \subtypemu A_k$ and
    $\sequSD{\Gamma}{u : C}$. Then, from (\ref{eq:ii}) we deduce $C \subtypemu
    B'_j$ for every $j \in 1..m$ by transitivity of subtyping. Finally, from
    (\ref{eq:i}-right) we get $\maxuniontype{j \in 1..m}{B''_j} \subtypemu A$
    and we conclude with $\ruleTalApp$ since $B \eqtypemu \maxuniontype{j \in
    1..m}{B_j} = \maxuniontype{j \in 1..m}{(\functype{B'_j}{B''_j})}$.
    
    \item $\ruleTSubs$: here we have $\sequT{\Gamma}{s : B}$ with $B \subtypemu
    A$. By inductive hypothesis we get $\sequSD{\Gamma}{s : A'}$ with $A'
    \subtype B$ and we conclude by transitivity of subtyping.
  \end{itemize}
\end{enumerate}
\end{proof}

From this we may obtain a simple type-checking function
$\fTypeCheck{\Gamma}{s}$ (Fig.~\ref{fig:typeChecking}-top) such that
$\fTypeCheck{\Gamma}{s} = A$ iff $\sequSD{\Gamma}{s : A'}$, for some $A' \eqtypemu
A$. The interesting clause is that of application, 
where the decision of whether $\ruleTalComp$ or $\ruleTalApp$ may be applied
depends on the result of the recursive call. If the term $r$ is
assigned a datatype, then a new compound datatype is built; if its type
can be rewritten as a union of functional types, then a proper type is
constructed with each of the codomains of the latter, as established in rule
$\ruleTalApp$.

\begin{figure} $$
\begin{array}{rcl}
\fTypeCheck{\Gamma}{x} & \eqdef & \Gamma(x) \\
 
\fTypeCheck{\Gamma}{\constterm{c}} & \eqdef & \consttype{c} \\
 
\fTypeCheck{\Gamma}{(p_i \ifuncterm_{\theta_i} s_i)_{i \in 1..n}} & \eqdef & \ttLet A_i = \fTypeCheckP{\theta_i}{p_i}, B_i = \fTypeCheck{\Gamma,\theta_i}{s_i} \ttIn \\
& & \quad \ttIf \forall i \in 1..n.\forall j \in i+1..n.\fComp{{p_i : A_i}}{p_j : A_j} \\
& & \qquad \ttThen \functype{\maxuniontype{i\in 1..n}{A_i}}{\maxuniontype{i\in 1..n}{B_i}} \\
& & \qquad \ttElse \ttFail \\

\fTypeCheck{\Gamma}{\appterm{r}{u}} & \eqdef & \ttLet A = \fTypeCheck{\Gamma}{r}, C = \fTypeCheck{\Gamma}{u} \ttIn \\
& & \quad\ttIf A\mbox{ is a datatype} \\
& & \qquad \ttThen \datatype{A}{C} \\
& & \qquad \ttElse\ \ttLet \maxuniontype{i\in 1..n}{(\functype{A_i}{B_i})} = \fUnfold{A} \ttIn \\
& & \qqquad \ttIf \forall i \in 1..n.\fSubtype{\varnothing}{C}{A_i} \\
& & \qqquad \ttThen \maxuniontype{i\in 1..n}{B_i}\\
& & \qqquad \ttElse \ttFail\\
& & \\

\fTypeCheckP{\Gamma}{x} & \eqdef & \Gamma(x) \\

\fTypeCheckP{\Gamma}{\constterm{c}} & \eqdef & \consttype{c} \\

\fTypeCheckP{\Gamma}{\dataterm{p}{q}} & \eqdef & \ttLet A = \fTypeCheckP{\Gamma}{p}, B = \fTypeCheckP{\Gamma}{q} \ttIn \\
& & \quad\ttIf A\mbox{ is a datatype} \\
& & \qquad \ttThen \datatype{A}{B} \\
& & \qquad \ttElse\ \ttFail
\end{array} $$
\caption{Type-checking $\capp$.}
\label{fig:typeChecking}
\end{figure}

Compatibility between branches is verified by checking if $\Psicomp{{p : A}}{{q
: B}}$ holds:
\begin{center}
$\begin{array}{rcl}
\fComp{{p : A}}{{q : B}} & \eqdef & (\ttNot \fPComp{{p : A}}{{q : B}}) \ttOr \fSubtype{\varnothing}{B}{A}
\end{array}$
\end{center} 
In $\fPCompName$ we may assume that it has already been checked that $p$ has
type $A$ and $q$ has type $B$. Therefore, if these are compound patterns they
can only be assigned application types, and union types may only appear at leaf
positions of a pattern. We use this correspondence to traverse both pattern and
type simultaneously in linear time, which means the worst-case execution time
of the compatibility check is governed by the complexity of subtyping. $$
\begin{array}{rcl}
\fPComp{{p : A}}{{q : B}} & \eqdef & \ttIf p = \dataterm{p_1}{p_2} \ttAnd q = \dataterm{q_1}{q_2} \ttThen \\
& & \quad \ttLet A = \datatype{A_1}{A_2}, B = \datatype{B_1}{B_2} \ttIn \\
& & \qquad \fPComp{{p_1 : A_1}}{{q_1 : B_1}} \ttAnd \fPComp{{p_2 : A_2}}{{q_2 : B_2}} \\
& & \ttElse \\
& & \qquad (p = \matchable{x}) \ttOr (p = q = \constterm{c}) \ttOr (\lookup{A}{\epsilon} \cap \lookup{B}{\epsilon} \neq \varnothing)
\end{array} $$

The expression $\fUnfold{A}$, in the clause defining
$\fTypeCheck{\Gamma}{\appterm{r}{u}}$, is the result of unfolding type $A$
using rules $\ruleEqalRecL$ and $\ruleEqalRecR$ until the result is an
equivalent type $A' = \maxuniontype{i \in 1..n}{A'_i}$ with $A'_i \neq
\irectype,\iuniontype$, and then simply verifying that $A'_i =
\mathbin{\ifunctype}$ for all $i \in 1..n$. $$
\begin{array}{rcl}
\fUnfold{A} & \eqdef & \ttIf A = \functype{A'}{A''} \ttThen A \\
& & \ttElse\ \ttIf A = \maxuniontype{i \in 1..n}{A_i} \ttAnd n>1 \ttAnd A_i \neq \iuniontype \ttThen \\
& & \quad \ttLet \maxuniontype{j\in 1..m_i}{(\functype{A_{ij}}{B_{ij}})} = \fUnfold{A_i} \ttForeach i \in 1..n \ttIn \\
& & \qquad \maxuniontype{i\in 1..n\\ j\in 1..m_i}{(\functype{A_{ij}}{B_{ij}})} \\

& & \ttElse\ \ttIf A = \rectype{V}{A'} \ttThen \fUnfold{\substitute{V}{\rectype{V}{A}}{A}} \\
& & \ttElse \ttFail
\end{array} $$
Termination is guaranteed by contractiveness of $\mu$-types. In the worst case
it requires exponential time due to the need to unfold types until the desired
equivalent form is obtained (\eg 
$\rectype{X_1}{\ldots\rectype{X_n}{\functype{X_1}{\ldots\functype{X_n}{\consttype{c}}}}}$).


\section{Towards Efficient Type-Checking}
\label{sec:efficientTypeChecking}

The algorithms presented so far are clear but inefficient. The number of
recursive calls in $\fEqtypeName$ and $\fSubtypeName$ is not bounded (it
depends on the size of the type) and unfolding recursive types may increment
their size exponentially. This section draws from ideas
in~\cite{Jim97typeinference,DBLP:journals/iandc/PalsbergZ01,DBLP:conf/tlca/CosmoPR05}
and adopts a dag-representation of recursive types which are encoded as
\emph{term automata} (described below). Associativity is handled by resorting to $n$-ary
unions, commutativity and idempotence of $\iuniontype$ is handled by how types
are decomposed in their automaton representation (\cf $\fCheckName$ in
Fig.~\ref{fig:pseudocode:subtyping2:phase2}). The algorithm itself is
$\fTypeCheckName$ of Fig.~\ref{fig:typeChecking} except that:
\begin{enumerate}
  \item The representation of $\mu$-types are now term automata. This renders
  $\fUnfoldName$ linear.
\item The subtyping algorithm is optimized, based on the new representation
  and following ideas
  from~\cite{DBLP:journals/iandc/PalsbergZ01,DBLP:conf/tlca/CosmoPR05}.
\end{enumerate}
The end product is an algorithm with complexity $\O(n^7 d)$ where $n$ is the
size of the input (\ie that of $\Gamma$ plus $t$) and $d$ is the maximum arity
of the $n$-unions occurring in $\Gamma$ and $t$. Note that all the information
needed to type $t$ is either in the context or annotated within the term
itself. Thus, a linear relation can be established between the size of the
input and the size of the resulting type; and we can think of $n$ as the size
of the latter.

\subsection{Term Automata}

$\mu$-types may be
understood as finite dags since their infinite
unfolding produce a regular (infinite) trees. We further simplify the types
whose dags we consider by flattening the union type constructor and switching
to an alphabet where unions are $n$-ary: $\mathfrak{L^n} \eqdef  \set{a^0
\mathrel| a \in \TypeVariable \cup \TypeConstant} \cup \set{\idatatype^2,
\ifunctype^2} \cup \set{\iuniontype^n \mathrel| n > 1}$ and we let $\TreeN$
stand for possibly infinite trees whose nodes are symbols in $\mathfrak{L^n}$.
$\mu$-types may be interpreted in $\TreeN$ simply by unfolding and then
considering maximal union types as their underlying $n$-ary union types. We
do so by resorting to our previous interpretation of $\mu$-types over $\Tree$
and the following translation.

\begin{definition}
\label{def:toNTree}
The function $\toNTree{\bullet} : \Tree \to \TreeN$ is defined inductively as
follows: $$
\begin{array}{r@{\quad\eqdef\quad}l@{\quad}l}
\toNTree{a}(\epsilon)                                & a \\
\toNTree{\tA_1 \star \tA_2}(\epsilon)                & \star                & \text{for $\star \in \set{\idatatype, \ifunctype}$} \\
\toNTree{\tA_1 \star \tA_2}(i\pi)                    & \toNTree{\tA_i}(\pi) & \text{for $\star \in \set{\idatatype, \ifunctype}$, $i \in \set{1,2}$} \\
\toNTree{\maxuniontype{i \in 1..n}{\tA_i}}(\epsilon) & \nuniontype{}{n}{}   & \text{$\tA_i \neq \iuniontype$}\\
\toNTree{\maxuniontype{i \in 1..n}{\tA_i}}(i\pi)     & \toNTree{\tA_i}(\pi) & \text{$\tA_i \neq \iuniontype$}
\end{array} $$
\end{definition}

Types in $\TreeN$ may be represented as \emph{term
automata}~\cite{DBLP:journals/toplas/AmadioC93}.

\begin{definition}  
A \emphdef{term automaton} is a tuple $\M = \langle Q, \Sigma, q_0, \delta,
l\rangle$ where:
\begin{enumerate}
  \itemsep0em
  \item $Q$ is a finite set of states.
  \item $\Sigma$ is an alphabet where each symbol has an associated arity.
  \item $q_0$ is the initial state.
  \item $\delta: Q \times \mathbb{N} \to Q$ is a partial transition function
  between states, defined over $1..k$, where $k$ is the arity of the symbol
  associated by $\ell$ to the origin state.
  \item $\ell: Q \to \Sigma$ is a total function that associates a symbol in
  $\Sigma$ to each state.
\end{enumerate}
\end{definition}

We write $\M_A$ for the automaton associated to type $A$. $\M_A$ recognizes all
paths from the root of $A$ to any of its sub-expressions. 
Fig.~\ref{fig:listTypeTreeAndAutomaton} illustrates an example type, namely
$\listType =
\rectype{\alpha}{\uniontype{\nil}{(\datatype{\datatype{\cons}{A}}{\alpha})}}$,
represented as an infinite tree and as a term automaton $\M_{\listType}$. If
$q_0$ is the initial state of $\M_{\listType}$ and $\widehat\delta$ denotes the
natural extension of $\delta$ to sequences of symbols, then
$\ell(\widehat\delta(q_0,211)) = \cons$. As mentioned, the regular
structure of trees arising from types yields automata with a finite number of states. 

\begin{figure}
\begin{center}
\begin{minipage}{.4\textwidth}
\begin{tikzpicture}
\tikzset{level 4/.style={distance from root=125pt}}
\qTree [.{$\iuniontype^2$} {\nil}
                           [.$@$ [.$@$ {\cons}
                                       $\tA$ ]
                                 [.{$\iuniontype^2$} {\nil}
                                                     [.$@$ [.$@$ {\cons}
                                                                 $\tA$ ]
                                                           {$\vdots$} ] ] ] ]
\end{tikzpicture}
\end{minipage}
\begin{minipage}{.4\textwidth}
\begin{tikzpicture}[shorten >=1pt,node distance=1.8cm,on grid,auto, every node/.style={scale=0.9},initial text=]
   \node[state,initial]     (union) {$\nuniontype{}{2}{}$};
   \node[state]             (nil)  [below left=of union]  {$\nil$};
   \node[state]             (app1) [below right=of union] {$\idatatype$};
   \node[state]             (app2) [below left=of app1]   {$\idatatype$};
   \node[state]             (cons) [below left=of app2]   {$\cons$};
   \node[state,draw=none]   (a)    [below right=of app2]  {$\M_A$};
   \path[->]
    (union) edge              node [swap] {\footnotesize{1}} (nil)
            edge [bend right] node [swap] {\footnotesize{2}} (app1)
    (app1)  edge              node        {\footnotesize{1}} (app2)
            edge [bend right] node [swap] {\footnotesize{2}} (union)
    (app2)  edge              node        {\footnotesize{1}} (cons)
            edge              node [swap] {\footnotesize{2}} (a);
\end{tikzpicture}
\end{minipage}
\end{center}
\caption{The type $\listType$ represented as an infinite tree and as a term automaton.}
\label{fig:listTypeTreeAndAutomaton}
\end{figure}


\subsection{Subtyping and Subtype Checking}

We next present a coinductive notion of subtyping over $\TreeN$. It is a binary
relation $\subtypeup{\R}$ \emph{up-to} a set of hypothesis $\R$
(Fig.~\ref{fig:subtypingSchemesNary}). For $\R = \varnothing$, $\subtypeup{\R}$
coincides with $\subtypemu$, modulo the proper translation.

\begin{figure} $$
\begin{array}{c}
\RuleCo{}{a \subtypeup{\R} a}{\ruleSubupRefl}
\\
\\
\RuleCo{\tD \mathrel{(\subtypeup{\R} \cup \mathrel{\R})} \tD'
        \quad
        \tA \mathrel{(\subtypeup{\R} \cup \mathrel{\R})} \tA'
       }
       {\datatype{\tD}{\tA} \subtypeup{\R} \datatype{\tD'}{\tA'}}
       {\ruleSubupComp}
\\
\\
\RuleCo{\tA' \mathrel{(\subtypeup{\R} \cup \mathrel{\R})} \tA
        \quad
        \tB \mathrel{(\subtypeup{\R} \cup \mathrel{\R})} \tB'
       }
       {\functype{\tA}{\tB} \subtypeup{\R} \functype{\tA'}{\tB'}}
       {\ruleSubupFunc}
\\
\\
\RuleCo{\tA_i \mathrel{(\subtypeup{\R} \cup \mathrel{\R})} \tB_{f(i)}
        \quad
        f : 1..n \to 1..m
        \quad
        \tA_i, \tB_j\neq \iuniontype}
       {\nuniontype{i}{n}{\tA_i} \subtypeup{\R} \nuniontype{j}{m}{\tB_j}}
       {\ruleSubupUnion}
\\
\\
\RuleCo{\tA_i \mathrel{(\subtypeup{\R} \cup \mathrel{\R})} \tB
        \text{ for all $i \in 1..n$}
        \quad
        \tA_i \neq \iuniontype
        \quad
        \tB \neq \iuniontype}
       {\nuniontype{i}{n}{\tA_i} \subtypeup{\R} \tB}
       {\ruleSubupUnionL}
\\
\\
\RuleCo{\tA \mathrel{(\subtypeup{\R} \cup \mathrel{\R})} \tB_{k}
        \text{ for some $k \in 1..m$}
        \quad
        \tA\neq \iuniontype
        \quad
        \tB_j\neq \iuniontype
        \quad}
       {\tA \subtypeup{\R} \nuniontype{j}{m}{\tB_j}}
       {\ruleSubupUnionR}
\end{array} $$
\caption{Subtyping relation \emph{up-to} $\R$ over $\TreeN$.}
\label{fig:subtypingSchemesNary}
\end{figure}

\begin{proposition}
\label{prop:subtypeaciSoundnessAndCompleteness}
$A \subtypemu B$ iff $\toNTree{\toBTree{A}} \subtypeaci \toNTree{\toBTree{B}}$.
\end{proposition}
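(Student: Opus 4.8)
The plan is to factor the statement through the tree semantics already available. By Proposition~\ref{prop:subtypeSoundnessAndCompleteness} we have $A \subtypemu B$ iff $\toBTree{A} \subtypeco \toBTree{B}$, so it suffices to establish, for arbitrary regular trees $\tA, \tB \in \Tree$,
\[ \tA \subtypeco \tB \quad\Longleftrightarrow\quad \toNTree{\tA} \subtypeaci \toNTree{\tB}. \]
This isolates the only genuinely new content: relating the coinductive subtyping $\subtypeco$ over $\Tree$, whose union rules operate on maximal (binary-associated) unions, with the coinductive subtyping $\subtypeaci = \subtypeup{\varnothing}$ over $\TreeN$, whose union rules operate on native $n$-ary union nodes, the bridge being the flattening map $\toNTree{\bullet}$ of Definition~\ref{def:toNTree}.

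First I would record the structural behaviour of $\toNTree{\bullet}$ that makes the two rule systems correspond. From Definition~\ref{def:toNTree}, $\toNTree{\bullet}$ commutes with the constructors, i.e.\ $\toNTree{\datatype{\tD}{\tA'}} = \datatype{\toNTree{\tD}}{\toNTree{\tA'}}$ and $\toNTree{\functype{\tA'}{\tA''}} = \functype{\toNTree{\tA'}}{\toNTree{\tA''}}$, and it sends a maximal union $\maxuniontype{i \in 1..n}{\tA_i}$ with each $\tA_i \neq \iuniontype$ (so $n \geq 2$) to the $n$-ary node $\nuniontype{i}{n}{\toNTree{\tA_i}}$, with the components in the same left-to-right order and each $\toNTree{\tA_i} \neq \iuniontype$. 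Conversely this map reflects shape: $\toNTree{\tA}$ has root $\idatatype$, $\ifunctype$, or $\inuniontype{n}$ exactly when $\tA$ is, respectively, a compound, an abstraction, or a maximal union of $n$ components, and it preserves regularity. Consequently the "$\neq \iuniontype$" side conditions appearing in the premises of the $\subtypeaci$-rules are matched precisely by the corresponding conditions in the $\subtypeco$-rules, and each generating rule on one side corresponds to the identically-shaped rule on the other.

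With this in hand I would prove the biconditional by coinduction in both directions. For $(\Rightarrow)$ I would show that $\S \eqdef \set{\pair{\toNTree{\tA}}{\toNTree{\tB}} \mathrel| \tA \subtypeco \tB}$ is $\Phisubtypeaci$-dense: given a pair in $\S$, since $\subtypeco = \Phisubtypeco(\subtypeco)$ the witnessing derivation ends in one of $\ruleSubcoRefl$, $\ruleSubcoComp$, $\ruleSubcoFunc$, $\ruleSubcoUnion$, $\ruleSubcoUnionL$, $\ruleSubcoUnionR$, and the commutation facts above let me replay the identically-shaped rule on the $\TreeN$ side ($\ruleSubupRefl$, $\ruleSubupComp$, and so on) with premises again lying in $\S$. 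For $(\Leftarrow)$ I would symmetrically show that $\R \eqdef \set{\pair{\tA}{\tB} \mathrel| \toNTree{\tA} \subtypeaci \toNTree{\tB}}$ is $\Phisubtypeco$-dense, now using the shape-reflection property to recover, from an $n$-ary union on the $\TreeN$ side, the maximal union on the $\Tree$ side that it flattens. Both denseness claims then conclude by the coinductive principle, giving $\S \subseteq \subtypeaci$ and $\R \subseteq \subtypeco$, i.e.\ the displayed equivalence; composing with Proposition~\ref{prop:subtypeSoundnessAndCompleteness} yields the proposition.

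I expect the union rules to be the only delicate point. The substance is checking that $\toNTree{\bullet}$ matches maximal unions with $n$-ary union nodes component-for-component and transports the $\neq \iuniontype$ side conditions, so that $\ruleSubcoUnion$, $\ruleSubcoUnionL$, $\ruleSubcoUnionR$ are in exact correspondence with $\ruleSubupUnion$, $\ruleSubupUnionL$, $\ruleSubupUnionR$; in particular one must verify for $(\Leftarrow)$ that an $n$-ary union shape on the $\TreeN$ side can always be pulled back to a genuine maximal union on the $\Tree$ side, so that the matching $\subtypeco$-rule applies. Should $\subtypeco$ instead be presented without a dedicated union--union rule, the same correspondence still holds but the union--union case must be bridged by composing $\ruleSubcoUnionL$ with $\ruleSubcoUnionR$ (and, conversely, by extracting the selecting function $f$ of $\ruleSubupUnion$ from such a composite), which is routine once the flattening is understood.
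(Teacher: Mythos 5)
Your proposal matches the paper's proof: the paper likewise composes Proposition~\ref{prop:subtypeSoundnessAndCompleteness} with an intermediate result (Proposition~\ref{prop:subtypecoIffSubstypeaci}, stating $\tA \subtypeco \tB$ iff $\toNTree{\tA} \subtypeaci \toNTree{\tB}$), which it proves exactly as you propose, by showing $\set{\pair{\toNTree{\tA}}{\toNTree{\tB}} \mathrel| \tA \subtypeco \tB}$ is $\Phisubtypeaci$-dense and $\set{\pair{\tA}{\tB} \mathrel| \toNTree{\tA} \subtypeaci \toNTree{\tB}}$ is $\Phisubtypeco$-dense, with the union cases handled by the same shape-preservation/reflection analysis of $\toNTree{\bullet}$. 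The proposal is correct and takes essentially the same route.
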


So we can use $\subtypeaci$ to determine whether types are related via
$\subtypemu$: take two types, construct their automaton representation and
check whether these are related via $\subtypeaci$. Moreover, our formulation
of $\subtypeup{\R}$ will prove convenient for proving correctness of our
subtyping algorithm.

The proof of Prop.~\ref{prop:subtypeaciSoundnessAndCompleteness} resorts to
Prop.~\ref{prop:subtypeSoundnessAndCompleteness}
and~\ref{prop:subtypecoIffSubstypeaci} below.

\begin{proposition}
\label{prop:subtypecoIffSubstypeaci}
$\tA \subtypeco \tB$ iff $\toNTree{\tA} \subtypeaci \toNTree{\tB}$.
\end{proposition}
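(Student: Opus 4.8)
The plan is to prove the two implications separately, each by the coinduction principle for the appropriate generating function, exactly in the style already used for Lemma~\ref{lem:subtypealSoundnessAndCompleteness}. Everything rests on one structural observation about $\toNTree{\bullet}$, which I would record first (a routine induction following Definition~\ref{def:toNTree}): the translation fixes atoms and commutes with $\idatatype$ and $\ifunctype$, i.e.\ $\toNTree{\datatype{\tD}{\tA}} = \datatype{\toNTree{\tD}}{\toNTree{\tA}}$ and $\toNTree{\functype{\tA}{\tB}} = \functype{\toNTree{\tA}}{\toNTree{\tB}}$, while it sends the maximal-union decomposition $\maxuniontype{i \in 1..n}{\tA_i}$ of a tree (each $\tA_i \neq \iuniontype$) to the $n$-ary node $\nuniontype{i}{n}{\toNTree{\tA_i}}$, whose immediate children are precisely the $\toNTree{\tA_i}$, each again satisfying $\toNTree{\tA_i} \neq \iuniontype$. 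In particular the root of $\toNTree{\tA}$ is $\iuniontype^n$ exactly when $\tA$ is a maximal union of $n>1$ non-union components, and otherwise coincides with the root of $\tA$; and $\toNTree{\bullet}$ preserves regularity, so both sides lie in the intended tree spaces.

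For ($\Rightarrow$) I would show that $\S \eqdef \set{\pair{\toNTree{\tA}}{\toNTree{\tB}} \mathrel| \tA \subtypeco \tB}$ is $\Phisubtypeaci$-dense. Take $\pair{\toNTree{\tA}}{\toNTree{\tB}} \in \S$; since $\subtypeco$ is a fixed point, $\pair{\tA}{\tB}$ is produced by one of $\ruleSubcoRefl$, $\ruleSubcoComp$, $\ruleSubcoFunc$, $\ruleSubcoUnion$, $\ruleSubcoUnionL$, $\ruleSubcoUnionR$ with premises already in $\subtypeco$. By the correspondence above each such instance maps to the homonymous $\TreeN$-rule ($\ruleSubupRefl$, $\ruleSubupComp$, $\ruleSubupFunc$, $\ruleSubupUnion$, $\ruleSubupUnionL$, $\ruleSubupUnionR$, taken with $\R = \varnothing$): the reflexive, application and arrow cases are immediate by commutation, and for a union step the witnessing function $f$ is reused verbatim, the premises $\tA_i \subtypeco \tB_{f(i)}$ translating to members $\pair{\toNTree{\tA_i}}{\toNTree{\tB_{f(i)}}} \in \S$, while the side conditions $\tA_i, \tB_j \neq \iuniontype$ transfer to $\toNTree{\tA_i}, \toNTree{\tB_j} \neq \iuniontype$. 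Hence $\pair{\toNTree{\tA}}{\toNTree{\tB}} \in \Phisubtypeaci(\S)$ and coinduction yields $\toNTree{\tA} \subtypeaci \toNTree{\tB}$.

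For ($\Leftarrow$) I would symmetrically show that $\R \eqdef \set{\pair{\tA}{\tB} \mathrel| \toNTree{\tA} \subtypeaci \toNTree{\tB}}$ is $\Phisubtypeco$-dense. Here the correspondence is read backwards: from $\pair{\toNTree{\tA}}{\toNTree{\tB}} \in \Phisubtypeaci(\subtypeaci)$ the root symbols of $\toNTree{\tA}$ and $\toNTree{\tB}$ determine which $\TreeN$-rule fired, and the structural lemma lets me recover the matching shape of $\tA$ and $\tB$ — whether each is a union and, if so, the exact list of non-union components of its maximal-union decomposition — so that the corresponding $\Tree$-rule applies with premises pulled back into $\R$. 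The delicate point is the union machinery: I must verify that the three mutually exclusive situations on the $\TreeN$ side (union/union, union/non-union, non-union/union) are matched by the three union rules over $\Tree$, using that $\toNTree{\bullet}$ neither creates nor destroys a top-level union node, so the ``$\neq \iuniontype$'' guards and the function $f$ pass through unchanged.

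The main obstacle, and the only genuinely non-bookkeeping point, is precisely this alignment: one must be certain that flattening binary unions into a single $n$-ary node preserves the maximal-union decomposition up to nothing more than re-association, so that $f$ and all the ``$\neq \iuniontype$'' side conditions transfer in both directions, and that associativity, commutativity and idempotence are absorbed entirely by the matching functions in the union rules rather than by the translation itself. Once the structural lemma is established, both density checks reduce to routine case analyses mirroring those already carried out for $\subtypeal$ against $\subtypeco$.
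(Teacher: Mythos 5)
Your proposal is correct and follows essentially the same route as the paper: both directions are handled by exhibiting the obvious transported relation ($\set{\pair{\toNTree{\tA}}{\toNTree{\tB}} \mid \tA \subtypeco \tB}$ and its converse-image analogue) and checking $\Phisubtypeaci$- resp.\ $\Phisubtypeco$-density via a case analysis on root symbols, using exactly the structural observation that $\toNTree{\bullet}$ commutes with $\idatatype$ and $\ifunctype$ and turns a maximal union of $n>1$ non-union components into a single $\iuniontype^n$ node (and nothing else into one). The paper organizes the case split by the shape of the (translated) pair rather than by which rule fired, but this is the same argument.
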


\begin{proof}
$\Rightarrow)$ We prove this part by showing that $\R \eqdef
\set{\pair{\toNTree{\tA}}{\toNTree{\tB}} \mathrel| \tA \subtypeco \tB}$ is
$\Phisubtypeaci$-dense. We proceed by analyzing the shape of any possible
element of $\R$.
\begin{itemize}
  \item $\pair{\toNTree{\tA}}{\toNTree{\tB}} = \pair{a}{a}$. Then
  $\pair{\toNTree{\tA}}{\toNTree{\tB}} \in \Phisubtypeaci(\R)$.
  
  \item $\pair{\toNTree{\tA}}{\toNTree{\tB}} =
  \pair{\datatype{\tT'}{\tT''}}{\datatype{\tS'}{\tS''}}$. Note that the
  translation of a type in $\Tree$ can only have symbol $\idatatype$ at the
  root if the original type already does. Then, we know that $\tA$ must be of
  the form $\datatype{\tD}{\tA'}$, where $\toNTree{\tD} = \tT'$ and
  $\toNTree{\tA'} = \tT''$. Similarly, we have $\tB = \datatype{\tD'}{\tB'}$
  with $\toNTree{\tD'} = \tS'$ and $\toNTree{\tB'} = \tS''$.

  By definition of $\R$ we have $\tA \subtypeco \tB$, and by rule
  $\ruleSubcoComp$ both $\tD \subtypeco \tD'$ and $\tA' \subtypeco \tB'$ hold.
  Then $\pair{\toNTree{\tD}}{\toNTree{\tD'}},
  \pair{\toNTree{\tA'}}{\toNTree{\tB'}} \in \R$. Thus,
  $\pair{\datatype{\toNTree{\tD}}{\toNTree{\tA'}}}{\datatype{\toNTree{\tD'}}{\toNTree{\tB'}}}
  \in \Phisubtypeaci(\R)$. Finally, by Def.~\ref{def:toNTree}, we have $$
\begin{array}{r@{\quad=\quad}l}
\pair{\datatype{\toNTree{\tD}}{\toNTree{\tA'}}}{\datatype{\toNTree{\tD'}}{\toNTree{\tB'}}}
 & \pair{\toNTree{\datatype{\tD}{\tA'}}}{\toNTree{\datatype{\tD'}{\tB'}}} \\
 & \pair{\toNTree{\tA}}{\toNTree{\tB}} \in \Phisubtypeaci(\R)
\end{array} $$
  
  \item $\pair{\toNTree{\tA}}{\toNTree{\tB}} =
  \pair{\functype{\tT'}{\tT''}}{\functype{\tS'}{\tS''}}$. Similarly to the
  previous case, by Def.~\ref{def:toNTree}, we have $\pair{\tA}{\tB} =
  \pair{\functype{\tA'}{\tA''}}{\functype{\tB'}{\tB''}}$ with $\toNTree{\tA'} =
  \tT'$, $\toNTree{\tA''} = \tT''$, $\toNTree{\tB'} = \tS'$ and
  $\toNTree{\tB''} = \tS''$. Then, from $\tA \subtypeco \tB$ we get
  $\pair{\toNTree{\tB'}}{\toNTree{\tA'}},
  \pair{\toNTree{\tA''}}{\toNTree{\tB''}} \in \R$ and thus $$
\begin{array}{r@{\quad=\quad}l}
\pair{\functype{\toNTree{\tA'}}{\toNTree{\tA''}}}{\functype{\toNTree{\tB'}}{\toNTree{\tB''}}}
 & \pair{\toNTree{\functype{\tA'}{\tA''}}}{\toNTree{\functype{\tB'}{\tB''}}} \\
 & \pair{\toNTree{\tA}}{\toNTree{\tB}} \in \Phisubtypeaci(\R)
\end{array} $$
  
  \item $\pair{\toNTree{\tA}}{\toNTree{\tB}} =
  \pair{\nuniontype{i}{n}{\tT_i}}{\tS}$ with $\tS \neq \iuniontype$. A tree of
  the form $\nuniontype{i}{n}{\tT_i}$ can only result from translating a
  maximal union of $n > 1$ elements, thus $\tA = \maxuniontype{i \in
  1..n}{\tA_i}$ where $\toNTree{\tA_i} = \tT_i$. On the other hand, $\tS \neq
  \iuniontype$ implies $\tB \neq \iuniontype$. From $\tA \subtypeco \tB$, by
  $\ruleSubcoUnion$, we know $\tA_i \subtypeco \tB$ for every $i \in 1..n$.
  Thus, $\pair{\toNTree{\tA_i}}{\toNTree{\tB}} \in \R$ for every $i \in 1..n$
  and we conclude $$
\begin{array}{r@{\quad=\quad}l}
\pair{\nuniontype{i}{n}{\toNTree{\tA_i}}}{\toNTree{\tB}}
 & \pair{\toNTree{\maxuniontype{i \in 1..n}{\tA_i}}}{\toNTree{\tB}} \\
 & \pair{\toNTree{\tA}}{\toNTree{\tB}} \in \Phisubtypeaci(\R)
\end{array} $$
  
  \item $\pair{\toNTree{\tA}}{\toNTree{\tB}} =
  \pair{\tT}{\nuniontype{j}{m}{\tS_j}}$ with $\tT \neq \iuniontype$. This case
  is similar to the previous one. We have $\tA \neq \iuniontype$ and $\tB =
  \maxuniontype{j \in 1..m}{\tB_j}$ with $m > 1$ and $\toNTree{\tB_j} = \tS_j$.
  From $\tA \subtypeco \tB$ we know there exists $k \in 1..m$ such that $\tA
  \subtypeco \tB_k$ by rule $\ruleSubcoUnion$. Thus,
  $\pair{\toNTree{\tA}}{\toNTree{\tB_j}} \in \R$ and we conclude $$
\begin{array}{r@{\quad=\quad}l}
\pair{\toNTree{\tA}}{\nuniontype{j}{m}{\toNTree{\tB_j}}}
 & \pair{\toNTree{\tA}}{\toNTree{\maxuniontype{j \in 1..m}{\tB_j}}} \\
 & \pair{\toNTree{\tA}}{\toNTree{\tB}} \in \Phisubtypeaci(\R)
\end{array} $$
  
  \item $\pair{\toNTree{\tA}}{\toNTree{\tB}} =
  \pair{\nuniontype{i}{n}{\tT_i}}{\nuniontype{j}{m}{\tS_j}}$. With a similar
  argument to the two previous cases we have maximal union types $\tA =
  \maxuniontype{i \in 1..n}{\tA_i}$ and $\tB = \maxuniontype{j \in
  1..m}{\tB_j}$ with $\toNTree{\tA_i} = \tT_i$ and $\toNTree{\tB_j} = \tS_j$.
  By $\ruleSubcoUnion$ once again, from $\tA \subtypeco \tB$ we know that there
  exists a function $f : 1..n \to 1..m$ such that $\tA_i \subtypeco \tB_{f(i)}$
  for every $i \in 1..n$. Thus, $\pair{\toNTree{\tA_i}}{\toNTree{\tB_{f(i)}}}
  \in \R$ and we resort to $\Phisubtypeaci$ and Def.~\ref{def:toNTree} to
  conclude $$
\begin{array}{r@{\quad=\quad}l}
\pair{\nuniontype{i}{n}{\toNTree{\tA_i}}}{\nuniontype{j}{m}{\toNTree{\tB_j}}}
 & \pair{\toNTree{\maxuniontype{i \in 1..n}{\tA_i}}}{\toNTree{\maxuniontype{j \in 1..m}{\tB_j}}} \\
 & \pair{\toNTree{\tA}}{\toNTree{\tB}} \in \Phisubtypeaci(\R)
\end{array} $$
\end{itemize}

$\Leftarrow)$ Similarly, we define $\R \eqdef \set{\pair{\tA}{\tB} \mathrel|
\toNTree{\tA} \subtypeaci \toNTree{\tB}}$ and show that it is
$\Phisubtypeco$-dense. We proceed my analyzing the shape of $\pair{\tA}{\tB}
\in R$:
\begin{itemize}
  \item $\pair{\tA}{\tB} = \pair{a}{a}$. Then $\pair{\tA}{\tB} \in
  \Phisubtypeco(\R)$ trivially.
  
  \item $\pair{\tA}{\tB} = \pair{\datatype{\tD}{\tA'}}{\datatype{\tD'}{\tB'}}$.
  By Def.~\ref{def:toNTree}, $\pair{\toNTree{\tA}}{\toNTree{\tB}} =
  \pair{\datatype{\toNTree{\tD}}{\toNTree{\tA'}}}{\datatype{\toNTree{\tD'}}{\toNTree{\tB'}}}$.
  Moreover, since $\toNTree{\tA} \subtypeaci \toNTree{\tB}$, we have
  $\toNTree{\tD} \subtypeaci \toNTree{\tD'}$ and $\toNTree{\tA'} \subtypeaci
  \toNTree{\tB'}$. Then, $\pair{\tD}{\tD'}, \pair{\tA'}{\tB'} \in \R$ and $$
\begin{array}{r@{\quad=\quad}l}
\pair{\tA}{\tB} & \pair{\datatype{\tD'}{\tA'}}{\datatype{\tD'}{\tB'}} \in \Phisubtypeco(\R)
\end{array} $$
  
  \item $\pair{\tA}{\tB} =
  \pair{\functype{\tA'}{\tA''}}{\functype{\tB'}{\tB''}}$. As in the previous
  case we resort to Def.~\ref{def:toNTree} to get
  $\pair{\toNTree{\tA}}{\toNTree{\tB}} =
  \pair{\functype{\toNTree{\tA'}}{\toNTree{\tA''}}}{\functype{\toNTree{\tB'}}{\toNTree{\tB''}}}$.
  Then, from $\toNTree{\tA} \subtypeaci \toNTree{\tB}$ we get $\toNTree{\tB'}
  \subtypeaci \toNTree{\tA'}$ and $\toNTree{\tA''} \subtypeaci \toNTree{\tB''}$
  by $\ruleSubupFunc$. Thus, $\pair{\tB'}{\tA'}, \pair{\tA''}{\tB''} \in \R$
  and $$
\begin{array}{r@{\quad=\quad}l}
\pair{\tA}{\tB} & \pair{\functype{\tA'}{\tA''}}{\functype{\tB'}{\tB''}} \in \Phisubtypeco(\R)
\end{array} $$

  \item $\pair{\tA}{\tB} = \pair{\maxuniontype{i \in
  1..n}{\tA_i}}{\maxuniontype{j \in 1..m}{\tB_j}}$ with $n + m > 2$ and $\tA_i,
  \tB_i \neq \iuniontype$. Then we need to distinguish three cases:
  \begin{itemize}
    \item If $m = 1$, then $n > 1$ and $\toNTree{\tA} =
    \nuniontype{i}{n}{\toNTree{\tA_i}}$. From $\toNTree{\tA} \subtypeaci
    \toNTree{\tB}$ we have $\toNTree{\tA_i} \subtypeaci \toNTree{\tB}$ for
    every $i \in 1..n$, by $\ruleSubupUnionL$. Then, $\pair{\tA_i}{\tB} \in \R$
    and by definition of $\Phisubtypeco$ we get $$
\begin{array}{r@{\quad=\quad}l}
\pair{\tA}{\tB} & \pair{\maxuniontype{i \in 1..n}{\tA_i}}{\tB} \in \Phisubtypeco(\R)
\end{array} $$
    
    \item If $n = 1$, then $m > 1$ and $\toNTree{\tB} =
    \nuniontype{j}{m}{\toNTree{\tB_j}}$. Here, by $\ruleSubupUnionR$, we have
    $\toNTree{\tA} \subtypeaci \toNTree{\tB_k}$ for some $k \in 1..m$, and thus
    $\pair{\tA}{\tB_k} \in \R$. Finally we conclude $$
\begin{array}{r@{\quad=\quad}l}
\pair{\tA}{\tB} & \pair{\tA}{\maxuniontype{j \in 1..m}{\tB_j}} \in \Phisubtypeco(\R)
\end{array} $$
    
    \item If $n > 1$ and $m > 1$, then both $\tA$ and $\tB$ are maximal union
    types with at least two elements each. Here it applies $\ruleSubupUnion$
    and we have $\toNTree{\tA} = \nuniontype{i}{n}{\toNTree{\tA_i}}$,
    $\toNTree{\tB} = \nuniontype{j}{m}{\toNTree{\tB_j}}$, and there exists $f :
    1..n \to 1..m$ such that $\toNTree{\tA_i} \subtypeaci
    \toNTree{\tB_{f(i)}}$. Then, $\pair{\tA_i}{\tB_{f(i)}} \in \R$ and we
    conclude $$
\begin{array}{r@{\quad=\quad}l}
\pair{\tA}{\tB} & \pair{\maxuniontype{i \in 1..n}{\tA_i}}{\maxuniontype{j \in 1..m}{\tB_j}} \in \Phisubtypeco(\R)
\end{array} $$
  \end{itemize}
\end{itemize}
\end{proof}

\subsubsection{Algorithm Description}

The algorithm that checks whether types are related by the new subtyping
relation builds on ideas from~\cite{DBLP:conf/tlca/CosmoPR05}. 
Our presentation is more general than required for subtyping; this general
scheme will also be applicable to type equivalence, as we shall later see.
Call $p \in \TreeN \times \TreeN$ \emph{valid} if $p \in
\mathbin{\subtypeaci}$. The algorithm consists of two phases. The aim of the
first one is to construct a set $U \subseteq \TreeN \times \TreeN$ that
delimits the universe of pairs of types that will later be refined to obtain a
set of only valid pairs. It starts off with an initial pair (\cf
Fig.~\ref{fig:pseudocode:subtyping2:phase1}, $\fBuildUniverseName$) and
then explores pairs of sub-terms of both types in this pair by decomposing the
type constructors (\cf Fig.~\ref{fig:pseudocode:subtyping2:phase1},
$\fChildrenName$). Note that, given $p$, the algorithm may add invalid
pairs in order to prove the validity of $p$. The second phase shall be in
charge of eliminating these invalid pairs. Note that the first phase can easily be
adapted to other relations by simply redefining function $\fChildrenName$.

$U$ may be interpreted as a directed graph where an edge from pair $p$ to
$q$ means that $q$ might belong to the support set of $p$ in the final relation
$\subtypeaci$. In this case we say that $p$ is a \emphdef{parent} of $q$. Since
types could have cycles, a pair may be added to $U$ more than once and hence
have more than one parent. Set $u(p)$ to be the \emphdef{incoming degree} of
$p$, \ie the number of parents.

\begin{figure} $$
\begin{array}{cc}
\begin{array}{rcl}
\multicolumn{3}{l}{\fBuildUniverse{p_0}:} \\
 & & U = \emptyset \\
 & & W = \set{p_0} \\
 & & \ttWhile W \neq \emptyset: \\
 & & \quad p := \fTakeOne{W} \\
 & & \quad \ttIf p \notin U \\
 & & \qquad \setinsert{p}{U} \\
 & & \qquad \ttForeach q \in \fChildren{p} \\
 & & \qqquad \setinsert{q}{W} \\
 & & \ttReturn\ U
\end{array}
&
\begin{array}{rcl}
\multicolumn{3}{l}{\fChildren{p}:}\\ 
 & & \ttCase\ p\ \ttOf\\
 & & \quad \pair{\datatype{\tD}{\tA}}{\datatype{\tD'}{\tB}} \rightarrow \\
 & & \qqquad \set{\pair{\tD}{\tD'},\pair{\tA}{\tB}} \\
 & & \quad \pair{\functype{\tA'}{\tA''}}{\functype{\tB'}{\tB''}} \rightarrow \\
 & & \qqquad \set{\pair{\tB'}{\tA'},\pair{\tA''}{\tB''}} \\
 & & \quad \pair{\nuniontype{i}{n}{\tA_i}}{\nuniontype{j}{m}{\tB_j}} \rightarrow \\
 & & \qqquad \set{\pair{\tA_i}{\tB_j} \ |\ i\in1..n,\ j\in1..m} \\
 & & \quad \pair{\nuniontype{i}{n}{\tA_i}}{\tB},\ \tB \neq \iuniontype \rightarrow \\
 & & \qqquad \set{\pair{\tA_i}{\tB} \ |\ i\in1..n} \\
 & & \quad \pair{\tA}{\nuniontype{j}{m}{\tB_j}},\ \tA \neq \iuniontype \rightarrow \\
 & & \qqquad \set{\pair{\tA}{\tB_j} \ |\ j\in1..m} \\
 & & \quad \mathtt{otherwise} \rightarrow \\
 & & \qqquad \emptyset
\end{array}
\end{array} $$
\caption{Pseudo-code of the first phase of the algorithm (construction of the universe $U$).}
\label{fig:pseudocode:subtyping2:phase1}
\end{figure}

During the second phase (Fig.~\ref{fig:pseudocode:subtyping2:phase2},
$\fGfpName$) the following sets are maintained, all of which conform a
partition of $U$:
\begin{itemize}
  \itemsep0em
  \item $W$: pairs whose validity has yet to be determined
  \item $S$: pairs considered conditionally valid
  \item $F$: invalid pairs
\end{itemize}

The algorithm repeatedly takes elements in $W$ and, in each iteration,
transfers to $S$ the selected pair $p$ if its validity can be proved
assuming valid only those pairs which have not been discarded up until now
(\ie those in $W \cup S$). Otherwise, $p$ is transferred to $F$ and all of its
parents in $S$ need to be reconsidered (their validity up-to $W$ may have
changed). Thus these parents are moved back to $W$ (\cf
Fig.~\ref{fig:pseudocode:subtyping2:phase2}, $\fInvalidateName$).
Intuitively, $S$ contains elements in $\subtypeup{W}$. The process ends when
$W$ is empty. The only aspect of this second phase specific to $\subtypeup{W}$
is function $\fCheckName$, which may be redefined to be other
suitable \emph{up-to} relations.

\begin{figure} $$
\begin{array}{c@{\qquad}c}
\begin{array}{rcl}
\multicolumn{3}{l}{\fGfp{\tA}{\tB}:} \\
 & & W = \fBuildUniverse{\pair{\tA}{\tB}} \\
 & & S = \emptyset \\
 & & F = \emptyset \\
 & & \ttWhile W \neq \emptyset: \\
 & & \quad p := \fTakeOne{W} \\
 & & \quad \ttIf \fCheck{p}{F} \\
 & & \qquad  \ttThen \setinsert{p}{S} \\
 & & \qquad  \ttElse \fInvalidate{p}{S}{F}{W} \\
 & & \ttReturn\ p \in S \\
\\
\multicolumn{3}{l}{\fInvalidate{p}{S}{F}{W}:} \\
 & & \setinsert{p}{F} \\
 & & \ttForeach q \in \parents{p} \cap S \\
 & & \quad \setmove{q}{S}{W}
\end{array}
&
\begin{array}{rcl}
\multicolumn{3}{l}{\fCheck{p}{F}:} \\ 
  & & \ttCase\ p\ \ttOf \\
  & & \quad \pair{a}{a} \rightarrow \\
  & & \qqquad \ttTrue \\
  & & \quad \pair{\datatype{\tD}{\tA}}{\datatype{\tD'}{\tB}} \rightarrow \\
  & & \qqquad \pair{\tD}{\tD'} \notin F \ttAnd \pair{\tA}{\tB} \notin F \\
  & & \quad \pair{\functype{\tA'}{\tA''}}{\functype{\tB'}{\tB''}} \rightarrow \\
  & & \qqquad \pair{\tB'}{\tA'} \notin F \ttAnd \pair{\tA''}{\tB''} \notin F \\
  & & \quad \pair{\nuniontype{i}{n}{\tA_i}}{\nuniontype{j}{m}{\tB_j}} \rightarrow \\
  & & \qqquad \forall i. \exists j.\ \pair{\tA_i}{\tB_j} \notin F \\
  & & \quad \pair{\nuniontype{i}{n}{\tA_i}}{\tB},\ \tB \neq \iuniontype \rightarrow \\
  & & \qqquad \forall i.\ \pair{\tA_i}{\tB} \notin F \\
  & & \quad \pair{\tA}{\nuniontype{j}{m}{\tB_j}},\ \tA \neq \iuniontype \rightarrow \\
  & & \qqquad \exists m.\ \pair{\tA}{\tB_m} \notin F
\end{array}
\end{array} $$
\caption{Pseudo-code of the second phase (relation refinement).}
\label{fig:pseudocode:subtyping2:phase2}
\end{figure}

\subsubsection{Correctness}

It is based on the fact that $S$ may be considered a set of valid pairs
\emph{assuming the validity of those in  $W$}. More generally, the following
holds:

\begin{proposition}
The algorithm preserves the following invariant:
\begin{itemize}
  \itemsep0em
  \item \triple{W}{S}{F} is a partition of $U$
  \item $F$ is composed solely of invalid pairs
  \item $S \subseteq \Phisubtypeup{W}(S)$
\end{itemize}
\end{proposition}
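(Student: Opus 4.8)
The plan is to prove this as a loop invariant for $\fGfpName$, by induction on the number of iterations of the \texttt{while} loop. The base case is immediate: after initialization $W = U$, $S = \varnothing$ and $F = \varnothing$, so $\triple{W}{S}{F}$ trivially partitions $U$, $F$ contains no invalid pairs vacuously, and $S = \varnothing \subseteq \Phisubtypeup{W}(\varnothing)$. For the inductive step I would first isolate the connection between $\fCheckName$ and the generating function: inspecting $\fChildrenName$ (Fig.~\ref{fig:pseudocode:subtyping2:phase1}) and $\fCheckName$ (Fig.~\ref{fig:pseudocode:subtyping2:phase2}) clause by clause against the rules of Fig.~\ref{fig:subtypingSchemesNary}, one sees that each pair $p$ matches exactly one rule (determined by whether each side is a union or not), that $\fChildren{p}$ collects precisely the candidate premises of that rule, and that $\fCheck{p}{F} = \ttTrue$ holds iff that rule derives $p$ from premises lying in $U \setminus F$. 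Since the partition clause gives $U \setminus F = W \cup S$, this means $\fCheck{p}{F} = \ttTrue$ iff $p \in \Phisubtypeup{W}(S)$.

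Now consider one iteration, with $p \coloneq \fTakeOne{W}$ removed from $W$. In the case $\fCheck{p}{F} = \ttTrue$, $p$ is added to $S$, giving $W' = W \setminus \set{p}$, $S' = S \cup \set{p}$, $F' = F$; the partition and ``$F$ invalid'' clauses are immediate. The key observation for density is that $S' \cup W' = S \cup W$, so every $q \in S$ keeps the same justifying premises (now drawn from $S' \cup W'$), hence $q \in \Phisubtypeup{W'}(S')$; and the freshly added $p$ is justified by the very success of $\fCheck{p}{F}$.

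In the case $\fCheck{p}{F} = \ttFalse$, $\fInvalidateName$ puts $p$ in $F$ and moves every $q \in \parents{p} \cap S$ back to $W$, yielding $F' = F \cup \set{p}$, $S' = S \setminus \parents{p}$ and $W' = (W \setminus \set{p}) \cup (\parents{p} \cap S)$; the partition clause is again routine. For ``$F$ invalid'' I would argue contrapositively using the coinductive characterization $\subtypeaci = \Phisubtypeaci(\subtypeaci)$: were $p$ valid, it would be derivable by its matching rule from valid premises, which by the induction hypothesis lie outside $F$, so $\fCheck{p}{F}$ would have returned $\ttTrue$ --- a contradiction. For density, the point is that $S' \cup W' = (S \cup W) \setminus \set{p}$, so the only premise that could have been lost is $p$ itself; but for $q \in S'$ we have $q \notin \parents{p}$, i.e. $p \notin \fChildren{q}$, so $p$ is never one of the premises justifying $q$, and the old justification survives, giving $q \in \Phisubtypeup{W'}(S')$.

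The main obstacle is the density clause in the invalidation case: it is exactly here that one must track the parent relation and show that demoting precisely $\parents{p} \cap S$ back to $W$ is both sufficient (no surviving element of $S$ depended on $p$) and the reason the algorithm is correct. A secondary subtlety is the ``$F$ invalid'' clause, which hinges on each pair matching a unique rule so that failure of $\fCheckName$ genuinely rules out every possible derivation of $p$; the $n$-ary union encoding together with the disjoint side-conditions on $\ruleSubupUnion$, $\ruleSubupUnionL$ and $\ruleSubupUnionR$ are what make this uniqueness hold.
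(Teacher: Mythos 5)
Your proof is correct and follows essentially the same route as the paper: a clause-by-clause check that $\fCheckName$ mirrors the rules of Fig.~\ref{fig:subtypingSchemesNary} over $S \cup W = U \setminus F$, followed by tracking the set updates in each branch of the iteration, with the crux in the invalidation case being that every surviving element of $S$ is not a parent of $p$ and hence keeps its justification (the paper phrases this last point contrapositively, you phrase it directly, but it is the same argument). Your explicit appeal to $\mathbin{\subtypeaci} = \Phisubtypeaci(\subtypeaci)$ for the ``$F$ invalid'' clause makes precise what the paper leaves somewhat implicit, but does not change the substance.
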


\begin{proof}
The tree conditions clearly hold at the beginning of the second phase, where $W
= U$, and both $S$ and $F$ are empty. Moreover, during the process, pairs are
transferred between the three sets preserving the first condition.

Notice that, in each iteration, the decision of whether to invalidate a pair
$p$ or move it to $S$ is made analyzing if there are enough elements in $S \cup
W$ to prove its validity (\cf Fig.~\ref{fig:pseudocode:subtyping2:phase2},
$\fCheckName$). We show this by analyzing the structure of $p$:

\begin{itemize}
  \item $p = \pair{a}{a}$. As we can see in $\ruleSubupRefl$, the validity of
  $p$ does not depend on any other pair (\ie~its support set is empty). As
  expected, the algorithm transfers $p$ to $S$ directly, without further
  checks.
  
  \item $p = \pair{\datatype{\tD}{\tA}}{\datatype{\tD'}{\tA'}}$. By
  $\ruleSubupComp$, $p$ is only valid if $\pair{\tD}{\tD'}$ and
  $\pair{\tA}{\tA'}$ also are. Those are exactly the pairs that are evaluated
  to decide whether to invalidate $p$ or not.
  
  \item $p = \pair{\functype{\tA}{\tB}}{\functype{\tA'}{\tB'}}$. This case
  corresponds to $\ruleSubupFunc$, $p$ will be moved to $S$ only if
  $\pair{\tA'}{\tA}$ and $\pair{\tB}{\tB'}$ has not been discarded yet.

  \item $p = \pair{\nuniontype{i}{n}{\tA_i}}{\nuniontype{j}{m}{\tB_j}}$.
  Here the algorithm checks if for each $\tA_i$ there exists $\tB_j$ not yet
  invalidated, to continue considering $p$ a valid pair. This is exactly the
  hypothesis of rule $\ruleSubupUnion$.
  
  \item $p = \pair{\nuniontype{i}{n}{\tA_i}}{\tB}$. Similarly, as indicated in
  $\ruleSubupUnionL$, the process checks that no pair $\pair{\tA_i}{\tB}$ was
  already invalidated.
  
  \item $p = \pair{\tA}{\nuniontype{j}{m}{\tB_j}}$. Here, by rule
  $\ruleSubupUnionR$, it is enough to find a non-invalid pair
  $\pair{\tA}{\tB_j}$ to keep $p$ in the set of conditionally valid pairs.
\end{itemize}

If there is not enough elements in $S \cup W$ to consider $p$ a valid pair,
then it is considered invalid and moved to $F$, preserving the second condition
of the invariant.

We now check the last condition. We use indexes 0 and 1 to distinguish between
the state of the sets at the beginning and end of each iteration respectively.
We consider two cases depending on the decision made on $p$:
\begin{itemize}
  \item If $p$ is moved to $S$ (\ie all the necessary elements to prove its
  validity are in $W_0 \cup S_0$) we have: $$
\begin{array}{r@{\quad=\quad}l}
W_1 & W_0 \setminus \set{p} \\
S_1 & S_0 \cup \set{p} \\
F_1 & F_0
\end{array} $$
  By definition we have $S_0 \subseteq \Phisubtypeup{W_0}(S_0)$, hence $$S_0
  \cup \set{p} \subseteq \Phisubtypeup{W_0}(S_0) \cup \set{p}$$ Notice that
  $S_1 \cup W_1 = S_0 \cup W_0$. Then, by definition of $\Phisubtypeup{\R}$ we
  get $\Phisubtypeup{W_0}(S_0) = \Phisubtypeup{W_1}(S_1)$. Thus, $$S_0 \cup
  \set{p} \subseteq \Phisubtypeup{W_1}(S_1) \cup \set{p} $$ The set on the
  left-hand side is exactly $S_1$. Moreover, since the necessary elements to
  consider $p$ a valid pair are in $S_1 \cup W_1$, we have $p \in
  \Phisubtypeup{W_1}(S_1)$. Finally, we conclude $$S_1 \subseteq
  \Phisubtypeup{W_1}(S_1)$$
  
  \item If $p$ is invalidated we have: $$
\begin{array}{r@{\quad=\quad}l}
W_1 & W_0 \setminus \set{p} \cup Q\\
S_1 & S_0 \setminus Q \\
F_1 & F_0 \cup \set{p}
\end{array} $$
  where $Q$ contains the parents of $p$ that belong to $S_0$. We proceed to
  show that $S_1 \subseteq \Phisubtypeup{W_1}(S_1)$. Assume toward a
  contradiction that there exists $s \in S_1$ such that $s \notin
  \Phisubtypeup{W_1}(S_1)$. Then $s$ cannot have an empty support set.
  Moreover, from $S_1 \subseteq S_0$ we know $s \in S_0$.
  
  On the other hand, from $s \notin \Phisubtypeup{W_1}(S_1)$ we deduce that
  there exists an element in the support set of $s$ that is not in $S_1 \cup
  W_1$ but belongs to $S_0 \cup W_0$, since $S_0$ is
  $\Phisubtypeup{W_0}(S_0)$-dense. By the equalities above, the only possible
  such element is $p$. Thus, $s$ is a parent of $p$. Given that the algorithm
  moves to $W_1$ all the parents of $p$ that belong to $S_0$, we should have $s
  \in W_1$. This leads to a contradiction since $s \in S_1$ and the sets $W_1,
  S_1, F_1$ form a partition of $U$.
  
  The contradiction comes from assuming the existence of such $s$, hence we
  conclude $$S_1 \subseteq \Phisubtypeup{W_1}(S_1)$$
\end{itemize}
\end{proof}

When the main cycle ends we know that $W$ is empty, and therefore that $S
\subseteq \Phisubtypeup{\varnothing}(S)$. The coinduction principle then
yields $S \subseteq \mathbin{\subtypeaci}$ (\ie~every pair in $S$ is valid) and
therefore we are left to  verify whether the original pair of types is in $S$
or $F$.

\subsubsection{Complexity}
\label{subtyping2:complexity}

The first phase consists of identifying relevant pairs of sub-terms in both
types being evaluated. If we call $N$ and $N'$ the size of such types
(considering nodes and edges in their representations) we have that the size
and cost of building the universe $U$ can be bounded by $\O(N N')$. As we shall
see, the total cost of the algorithm is governed by the amount of operations in
the second phase.

As stated in~\cite{DBLP:conf/tlca/CosmoPR05}, since any pair can be
invalidated at most once (in which case $u(p)$ nodes are transferred back to
$W$ for reconsideration) the amount of iterations in the refinement stage can
be bounded by $$
\begin{array}{rcccl}
  \sum_{p \in U} 1 + \sum_{p \in U} u(p) & = & \sum_{p \in U} (1 + u(p)) & = & size(U)
\end{array}  
$$

Assuming that set operations can be performed in constant time, the cost of
evaluating each node in the main loop is that of deciding whether to suspend or
invalidate the pair and, in the later case, the cost of the invalidation
process. The decision of where to transfer the node is computed in the function
$\fCheckName$, which always performs a constant amount of operations for pairs
of non-union types. The worst case involves checking pairs of the form
$\pair{\nuniontype{i}{n}A_i}{\nuniontype{j}{m}B_i}$, which can be resolved by
maintaining in each node a table indicating, for every $A_i$, the amount of
pairs $\pair{A_i}{B_j}$ that have not yet been invalidated. Using this approach,
this check can be performed in $\O(d)$ operations, where $d$ is a bound on the
size of both unions. Whenever a pair is invalidated, all tables present in its
immediate parents are updated as necessary.

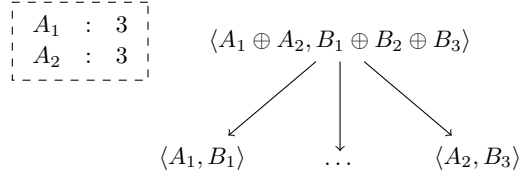
\begin{figure}[h]
\begin{center}
\begin{tikzpicture}[shorten >=1pt,node distance=1.8cm,on grid,auto, every node/.style={scale=0.9},initial text=]
  \node[] (p1)                      {$\pair{\uniontype{A_1}{A_2}}{\uniontype{B_1}{\uniontype{B_2}{B_3}}}$};
  \node[] (p2) [below left=of p1,yshift=-0.3cm,xshift=-0.6cm]   {$\pair{A_1}{B_1}$};
  \node[] (p3) [below of=p1]        {\ldots};
  \node[] (p4) [below right=of p1,yshift=-0.3cm,xshift=0.6cm]  {$\pair{A_2}{B_3}$};

  \node[state,rectangle,dashed] (t1) [left=of p1,xshift=-1.8cm] {$
    \begin{array}{rcl}
    A_1 & : & 3 \\
    A_2 & : & 3
    \end{array}
  $};
  \node[state,rectangle,draw=none] (t2) [right=of p1,xshift=1.8cm] {
  \hphantom{$
    \begin{array}{rcl}
    A_1 & : & 3 \\
    A_2 & : & 3
    \end{array}
  $}};
   
  \draw [->](p1) -- (p2);
  \draw [->](p1) -- (p3);
  \draw [->](p1) -- (p4);
\end{tikzpicture}
\end{center}
\caption{Verification of invalidated descendants.}
\label{fig:subtyping2:failureTableOptimization}
\end{figure}

Finally we resort to an argument introduced
in~\cite{DBLP:conf/tlca/CosmoPR05} to argue that the cost of
invalidating an element can be seen as $\O(1)$: a new iteration will be
performed for each of the $u(p)$ pairs added to $W$ when invalidating $p$.
Because of this, a more precise bound for the cost of the complete execution of
the algorithm can be obtained if we consider the cost of adding each of these
elements to $W$ as part of the iteration itself, yielding an amortized cost of
$\O(d)$ operations for each iteration. This leaves a total cost of $\O(size(U)
d)$ for the subtyping check, expressed as $\O(N N' d)$ in terms of the size of
the input automata.

Let us call $n$ and $n'$ the amount of constructors in types $A$ and $B$,
respectively. $N$ and $N'$ are the sizes of automata representing these types,
and can consequently be bounded by $\O(n^2)$ and $\O(n'^2)$. Therefore, the
complexity of the algorithm can be expressed as $\O(n^2 n'^2 d)$.



\subsection{Equivalence Checking}

In this section we adapt the previous algorithm to obtain one proper of
equivalence checking with the same complexity cost.
Fig.~\ref{fig:equalitySchemesNary} introduces an equivalence relation
\emph{up-to} $\R$ over $\TreeN$ which can be used to compute $\eqtypemu$ via
the translation $\toNTree{\bullet}$.

\begin{figure} $$
\begin{array}{c}
\RuleCo{}{a \eqtypeup{\R} a}{\ruleEqupRefl}
\\
\\
\RuleCo{\tD \mathrel{(\eqtypeup{\R} \cup \mathrel{\R})} \tD'
        \quad
        \tA \mathrel{(\eqtypeup{\R} \cup \mathrel{\R})} \tA'
       }
       {\datatype{\tD}{\tA} \eqtypeup{\R} \datatype{\tD'}{\tA'}}
       {\ruleEqupComp}
\\
\\
\RuleCo{\tA' \mathrel{(\eqtypeup{\R} \cup \mathrel{\R})} \tA
        \quad
        \tB \mathrel{(\eqtypeup{\R} \cup \mathrel{\R})} \tB'
       }
       {\functype{\tA}{\tB} \eqtypeup{\R} \functype{\tA'}{\tB'}}
       {\ruleEqupFunc}
\\
\\
\RuleCo{\begin{array}{ll}
          \tA_i \mathrel{(\eqtypeup{\R} \cup \mathrel{\R})} \tB_{f(i)} & \quad f : 1..n \to 1..m \\
          \tA_{g(j)} \mathrel{(\eqtypeup{\R} \cup \mathrel{\R})} \tB_j & \quad g : 1..m \to 1..n
        \end{array}
        \quad
        \tA_i, \tB_j\neq \iuniontype}
       {\nuniontype{i}{n}{\tA_i} \eqtypeup{\R} \nuniontype{j}{m}{\tB_j}}
       {\ruleEqupUnion}
\\
\\
\RuleCo{\tA_i \mathrel{(\eqtypeup{\R} \cup \mathrel{\R})} \tB
        \text{ for all $i \in 1..n$}
        \quad
        \tA_i \neq \iuniontype
        \quad
        \tB \neq \iuniontype}
       {\nuniontype{i}{n}{\tA_i} \eqtypeup{\R} \tB}
       {\ruleEqupUnionL}
\\
\\
\RuleCo{\tA \mathrel{(\eqtypeup{\R} \cup \mathrel{\R})} \tB_{j}
        \text{ for all $j \in 1..m$}
        \quad
        \tA\neq \iuniontype
        \quad
        \tB_j\neq \iuniontype}
       {\tA \eqtypeup{\R} \nuniontype{j}{m}{\tB_j}}
       {\ruleEqupUnionR}
\end{array} $$
\caption{Equivalence relation \emph{up-to} $\R$ over $\TreeN$.}
\label{fig:equalitySchemesNary}
\end{figure}

\begin{proposition}
\label{prop:eqtypeaciSoundnessAndCompleteness}
$A \eqtypemu B$ iff $\toNTree{\toBTree{A}} \eqtypeaci \toNTree{\toBTree{B}}$.
\end{proposition}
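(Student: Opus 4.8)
The plan is to mirror the proof of Proposition~\ref{prop:subtypeaciSoundnessAndCompleteness}, factoring the two-step translation $A \mapsto \toBTree{A} \mapsto \toNTree{\toBTree{A}}$ through the tree semantics. I would first invoke Proposition~\ref{prop:eqtypeSoundnessAndCompleteness}, which already gives $A \eqtypemu B$ iff $\toBTree{A} \eqtypeco \toBTree{B}$. It then suffices to establish the missing bridge, namely that for regular trees $\tA, \tB \in \Tree$,
\[
\tA \eqtypeco \tB \quad\text{iff}\quad \toNTree{\tA} \eqtypeaci \toNTree{\tB},
\]
which is the equivalence analogue of Proposition~\ref{prop:subtypecoIffSubstypeaci}. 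Composing these two biconditionals (instantiating $\tA \eqdef \toBTree{A}$ and $\tB \eqdef \toBTree{B}$) yields the statement. No new proposition for this bridge is available in the development, so proving it is the only genuine work.

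I would prove the bridge by coinduction in each direction, exactly as for subtyping. For $(\Rightarrow)$ I would show that $\R \eqdef \set{\pair{\toNTree{\tA}}{\toNTree{\tB}} \mathrel| \tA \eqtypeco \tB}$ is $\Phieqtypeaci$-dense, and for $(\Leftarrow)$ that $\R \eqdef \set{\pair{\tA}{\tB} \mathrel| \toNTree{\tA} \eqtypeaci \toNTree{\tB}}$ is $\Phieqtypeco$-dense. In both directions the argument is a case analysis on the shape of the pair in $\R$: the reflexive case $\pair{a}{a}$ is immediate; the $\idatatype$ and $\ifunctype$ cases use Def.~\ref{def:toNTree} to peel off the top constructor, read off the corresponding premises from $\ruleEqcoComp$/$\ruleEqcoFunc$ (resp.\ $\ruleEqupComp$/$\ruleEqupFunc$) and repackage them in $\R$; and the union cases use the fact that a node labelled $\nuniontype{}{n}{}$ in $\TreeN$ can only arise from translating a maximal union of $n > 1$ elements, so that $\toNTree{\maxuniontype{i \in 1..n}{\tA_i}} = \nuniontype{i}{n}{\toNTree{\tA_i}}$ with each $\tA_i \neq \iuniontype$.

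The main point of care, and the only substantive departure from the subtyping proof, is the union cases. For subtyping the relevant rules ($\ruleSubcoUnion$ and $\ruleSubupUnion$, $\ruleSubupUnionL$, $\ruleSubupUnionR$) are asymmetric, involving a single function or an existential on one side; for equivalence the rules are symmetric, so $\ruleEqcoUnion$ supplies both maps $f : 1..n \to 1..m$ and $g : 1..m \to 1..n$ with $\tA_i \eqtypeco \tB_{f(i)}$ and $\tA_{g(j)} \eqtypeco \tB_j$, and dually $\ruleEqupUnion$ demands both. Thus whenever I handle $\pair{\nuniontype{i}{n}{\tT_i}}{\nuniontype{j}{m}{\tS_j}}$ I must transport both witnesses through the translation and verify the two-sided matching conditions, and in the degenerate cases (one side a single non-union tree) I must check the universal hypotheses of $\ruleEqupUnionL$/$\ruleEqupUnionR$ in place of the existential of $\ruleSubupUnionR$. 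None of this introduces new difficulties, since the side conditions $\tA_i, \tB_j \neq \iuniontype$ transfer verbatim and the $n$-ary translation is compatible with maximal unions in both representations; the remainder is the same routine bookkeeping as in the proof of Proposition~\ref{prop:subtypecoIffSubstypeaci}.
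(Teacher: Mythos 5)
Your proposal matches the paper's proof exactly: the paper also factors the statement through Prop.~\ref{prop:eqtypeSoundnessAndCompleteness} and a bridge result (stated there as Prop.~\ref{prop:eqtypecoIffEqtypeaci}, $\tA \eqtypeco \tB$ iff $\toNTree{\tA} \eqtypeaci \toNTree{\tB}$), which it proves by showing the same two relations you define are $\Phieqtypeaci$-dense and $\Phieqtypeco$-dense respectively, with the same case analysis and the same symmetric treatment of the union rules via the two functions $f$ and $g$. The argument is correct and no gap remains.
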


As done before for subtyping, we prove this by resorting to
Prop.~\ref{prop:eqtypeSoundnessAndCompleteness}
and~\ref{prop:eqtypecoIffEqtypeaci} below:

\begin{proposition}
\label{prop:eqtypecoIffEqtypeaci}
$\tA \eqtypeco \tB$ iff $\toNTree{\tA} \eqtypeaci \toNTree{\tB}$.
\end{proposition}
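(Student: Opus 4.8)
The plan is to follow the proof of Prop.~\ref{prop:subtypecoIffSubstypeaci} almost verbatim, adapting only the treatment of the union rules to the symmetric equivalence setting. As there, I would prove each direction coinductively, exhibiting in each case a relation that is dense for the relevant generating function and contains the pair of interest; no induction is needed since we already work at the tree level.

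For the $(\Rightarrow)$ direction I would set $\R \eqdef \set{\pair{\toNTree{\tA}}{\toNTree{\tB}} \mathrel| \tA \eqtypeco \tB}$ and show $\R \subseteq \Phieqtypeaci(\R)$ by inspecting the root symbol of a generic pair in $\R$. The atom case is immediate via $\ruleEqupRefl$. For $\idatatype$ and $\ifunctype$: since $\toNTree{\bullet}$ preserves these root constructors (Def.~\ref{def:toNTree}), $\tA$ and $\tB$ must share the corresponding root; the matching premise of $\ruleEqcoComp$ (resp.\ $\ruleEqcoFunc$) supplies equivalence of the immediate subtrees, closing these under $\toNTree{\bullet}$ returns the children to $\R$, and $\ruleEqupComp$ (resp.\ $\ruleEqupFunc$) closes the case. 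The genuinely interesting case is a union root: a $\TreeN$-tree carries an $n$-ary union at the root precisely when its source is a maximal union of more than one non-union subtree, so $\tA = \maxuniontype{i \in 1..n}{\tA_i}$ and $\tB = \maxuniontype{j \in 1..m}{\tB_j}$ with $\tA_i, \tB_j \neq \iuniontype$. From $\tA \eqtypeco \tB$ and $\ruleEqcoUnion$ I recover both witnessing maps $f : 1..n \to 1..m$ and $g : 1..m \to 1..n$; these transfer unchanged to the $\toNTree{\bullet}$-images, and I conclude with $\ruleEqupUnion$ when $n,m>1$, with $\ruleEqupUnionL$ when $m=1$ (where $f$ collapses, forcing every $\tA_i \eqtypeaci \tB_1$), and with $\ruleEqupUnionR$ when $n=1$ (where $g$ collapses, forcing $\tA_1 \eqtypeaci \tB_j$ for all $j$).

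For the $(\Leftarrow)$ direction I would dually take $\R \eqdef \set{\pair{\tA}{\tB} \mathrel| \toNTree{\tA} \eqtypeaci \toNTree{\tB}}$ and show it is $\Phieqtypeco$-dense by the shape of $\pair{\tA}{\tB}$. The non-union cases are again immediate from Def.~\ref{def:toNTree} together with the premises of $\ruleEqupComp$ and $\ruleEqupFunc$. In the union case, $\tA = \maxuniontype{i \in 1..n}{\tA_i}$ and $\tB = \maxuniontype{j \in 1..m}{\tB_j}$ with $n+m>2$, and I split exactly as in the subtyping proof according to the arities of the translated roots: $n,m>1$ is handled by $\ruleEqupUnion$ (yielding $f$ and $g$ directly), $m=1$ by $\ruleEqupUnionL$ (yielding every $\tA_i \eqtypeaci \tB$, from which I build the constant $f$ and an arbitrary $g$), and $n=1$ by $\ruleEqupUnionR$ (yielding every $\tA \eqtypeaci \tB_j$, from which I build the constant $g$ and an arbitrary $f$). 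In each subcase the side condition $n+m>2$ of $\ruleEqcoUnion$ is met, and I conclude $\pair{\tA}{\tB} \in \Phieqtypeco(\R)$.

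The only delicate point — bookkeeping rather than conceptual — is the union bureaucracy: the single tree-level rule $\ruleEqcoUnion$, whose side condition $n+m>2$ already absorbs the cases where one side is not a union, must be played against the three separate $n$-ary rules $\ruleEqupUnion$, $\ruleEqupUnionL$ and $\ruleEqupUnionR$. I must check in particular that the ``for all $j$'' premise of $\ruleEqupUnionR$ (the symmetric counterpart of subtyping's ``for some $k$'') is exactly what the map $g$ delivers when $n=1$, and dually for $\ruleEqupUnionL$. Everything else is a direct transcription of the proof of Prop.~\ref{prop:subtypecoIffSubstypeaci}, so no new idea is required.
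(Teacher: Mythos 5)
Your proposal is correct and follows essentially the same route as the paper: both directions exhibit the same two relations $\R$ and show them dense for $\Phieqtypeaci$ and $\Phieqtypeco$ respectively, with the only real work being the matching of the single tree-level rule $\ruleEqcoUnion$ against the three $n$-ary rules $\ruleEqupUnion$, $\ruleEqupUnionL$, $\ruleEqupUnionR$, handled by the same case split on $n$ and $m$. The paper's proof is just a transcription of this plan, so nothing further is needed.
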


\begin{proof}
$\Rightarrow)$
We prove this part by showing that $\R \eqdef
\set{\pair{\toNTree{\tA}}{\toNTree{\tB}} \mathrel| \tA \eqtypeco \tB}$ is
$\Phieqtypeaci$-dense. We proceed by analyzing the shape of any possible
element of $\R$.
\begin{itemize}
  \item $\pair{\toNTree{\tA}}{\toNTree{\tB}} = \pair{a}{a}$. Then
  $\pair{\toNTree{\tA}}{\toNTree{\tB}} \in \Phieqtypeaci(\R)$.
  
  \item $\pair{\toNTree{\tA}}{\toNTree{\tB}} =
  \pair{\datatype{\tT'}{\tT''}}{\datatype{\tS'}{\tS''}}$. Note that the
  translation of a type in $\Tree$ can only have symbol $\idatatype$ at the
  root if the original type already does. Then, we know that $\tA$ must be of
  the form $\datatype{\tD}{\tA'}$, where $\toNTree{\tD} = \tT'$ and
  $\toNTree{\tA'} = \tT''$. Similarly, we have $\tB = \datatype{\tD'}{\tB'}$
  with $\toNTree{\tD'} = \tS'$ and $\toNTree{\tB'} = \tS''$.

  By definition of $\R$ we have $\tA \eqtypeco \tB$, and by rule
  $\ruleEqcoComp$ both $\tD \eqtypeco \tD'$ and $\tA' \eqtypeco \tB'$ hold.
  Then $\pair{\toNTree{\tD}}{\toNTree{\tD'}},
  \pair{\toNTree{\tA'}}{\toNTree{\tB'}} \in \R$. Thus,
  $\pair{\datatype{\toNTree{\tD}}{\toNTree{\tA'}}}{\datatype{\toNTree{\tD'}}{\toNTree{\tB'}}}
  \in \Phieqtypeaci(\R)$. Finally, by Def.~\ref{def:toNTree}, we have $$
\begin{array}{r@{\quad=\quad}l}
\pair{\datatype{\toNTree{\tD}}{\toNTree{\tA'}}}{\datatype{\toNTree{\tD'}}{\toNTree{\tB'}}}
 & \pair{\toNTree{\datatype{\tD}{\tA'}}}{\toNTree{\datatype{\tD'}{\tB'}}} \\
 & \pair{\toNTree{\tA}}{\toNTree{\tB}} \in \Phieqtypeaci(\R)
\end{array} $$
  
  \item $\pair{\toNTree{\tA}}{\toNTree{\tB}} =
  \pair{\functype{\tT'}{\tT''}}{\functype{\tS'}{\tS''}}$. Similarly to the
  previous case, by Def.~\ref{def:toNTree}, we have $\pair{\tA}{\tB} =
  \pair{\functype{\tA'}{\tA''}}{\functype{\tB'}{\tB''}}$ with $\toNTree{\tA'} =
  \tT'$, $\toNTree{\tA''} = \tT''$, $\toNTree{\tB'} = \tS'$ and
  $\toNTree{\tB''} = \tS''$. Then, from $\tA \eqtypeco \tB$ we get
  $\pair{\toNTree{\tA'}}{\toNTree{\tB'}},
  \pair{\toNTree{\tA''}}{\toNTree{\tB''}} \in \R$ and thus $$
\begin{array}{r@{\quad=\quad}l}
\pair{\functype{\toNTree{\tA'}}{\toNTree{\tA''}}}{\functype{\toNTree{\tB'}}{\toNTree{\tB''}}}
 & \pair{\toNTree{\functype{\tA'}{\tA''}}}{\toNTree{\functype{\tB'}{\tB''}}} \\
 & \pair{\toNTree{\tA}}{\toNTree{\tB}} \in \Phieqtypeaci(\R)
\end{array} $$
  
  \item $\pair{\toNTree{\tA}}{\toNTree{\tB}} =
  \pair{\nuniontype{i}{n}{\tT_i}}{\tS}$ with $\tS \neq \iuniontype$. A tree of
  the form $\nuniontype{i}{n}{\tT_i}$ can only result from translating a
  maximal union of $n > 1$ elements, thus $\tA = \maxuniontype{i \in
  1..n}{\tA_i}$ where $\toNTree{\tA_i} = \tT_i$. On the other hand, $\tS \neq
  \iuniontype$ implies $\tB \neq \iuniontype$. From $\tA \eqtypeco \tB$, by
  $\ruleEqcoUnion$, we know $\tA_i \eqtypeco \tB$ for every $i \in 1..n$.
  Thus, $\pair{\toNTree{\tA_i}}{\toNTree{\tB}} \in \R$ for every $i \in 1..n$
  and we conclude $$
\begin{array}{r@{\quad=\quad}l}
\pair{\nuniontype{i}{n}{\toNTree{\tA_i}}}{\toNTree{\tB}}
 & \pair{\toNTree{\maxuniontype{i \in 1..n}{\tA_i}}}{\toNTree{\tB}} \\
 & \pair{\toNTree{\tA}}{\toNTree{\tB}} \in \Phieqtypeaci(\R)
\end{array} $$
  
  \item $\pair{\toNTree{\tA}}{\toNTree{\tB}} =
  \pair{\tT}{\nuniontype{j}{m}{\tS_j}}$ with $\tT \neq \iuniontype$. This case
  is similar to the previous one. We have $\tA \neq \iuniontype$ and $\tB =
  \maxuniontype{j \in 1..m}{\tB_j}$ with $m > 1$ and $\toNTree{\tB_j} = \tS_j$.
  From $\tA \eqtypeco \tB$ we know, by $\ruleEqcoUnion$, that $\tA \eqtypeco
  \tB_j$ for every $j \in 1..m$. Thus, $\pair{\toNTree{\tA}}{\toNTree{\tB_j}}
  \in \R$ and we conclude $$
\begin{array}{r@{\quad=\quad}l}
\pair{\toNTree{\tA}}{\nuniontype{j}{m}{\toNTree{\tB_j}}}
 & \pair{\toNTree{\tA}}{\toNTree{\maxuniontype{j \in 1..m}{\tB_j}}} \\
 & \pair{\toNTree{\tA}}{\toNTree{\tB}} \in \Phieqtypeaci(\R)
\end{array} $$
  
  \item $\pair{\toNTree{\tA}}{\toNTree{\tB}} =
  \pair{\nuniontype{i}{n}{\tT_i}}{\nuniontype{j}{m}{\tS_j}}$. With a similar
  argument to the two previous cases we have maximal union types $\tA =
  \maxuniontype{i \in 1..n}{\tA_i}$ and $\tB = \maxuniontype{j \in
  1..m}{\tB_j}$ with $\toNTree{\tA_i} = \tT_i$ and $\toNTree{\tB_j} = \tS_j$.
  By $\ruleEqcoUnion$ once again, from $\tA \eqtypeco \tB$ we know that there
  exists a function $$
\begin{array}{r@{\quad\text{such that}\quad}l}
f : 1..n \to 1..m & \tA_i \eqtypeco \tB_{f(i)} \\
g : 1..m \to 1..n & \tA_{g(j)} \eqtypeco \tB_j
\end{array} $$
  Thus, $\pair{\toNTree{\tA_i}}{\toNTree{\tB_{f(i)}}},
  \pair{\toNTree{\tA_{g(j)}}}{\toNTree{\tB_j}} \in \R$. and we resort to 
  $\Phieqtypeaci$ and Def.~\ref{def:toNTree} to
  conclude $$
\begin{array}{r@{\quad=\quad}l}
\pair{\nuniontype{i}{n}{\toNTree{\tA_i}}}{\nuniontype{j}{m}{\toNTree{\tB_j}}}
 & \pair{\toNTree{\maxuniontype{i \in 1..n}{\tA_i}}}{\toNTree{\maxuniontype{j \in 1..m}{\tB_j}}} \\
 & \pair{\toNTree{\tA}}{\toNTree{\tB}} \in \Phieqtypeaci(\R)
\end{array} $$
\end{itemize}

$\Leftarrow)$ Similarly, we define $\R \eqdef \set{\pair{\tA}{\tB} \mathrel|
\toNTree{\tA} \eqtypeaci \toNTree{\tB}}$ and show that it is
$\Phieqtypeco$-dense. We proceed my analyzing the shape of $\pair{\tA}{\tB}
\in R$:
\begin{itemize}
  \item $\pair{\tA}{\tB} = \pair{a}{a}$. Then $\pair{\tA}{\tB} \in
  \Phieqtypeco(\R)$ trivially.
  
  \item $\pair{\tA}{\tB} = \pair{\datatype{\tD}{\tA'}}{\datatype{\tD'}{\tB'}}$.
  By Def~\ref{def:toNTree}, $\pair{\toNTree{\tA}}{\toNTree{\tB}} =
  \pair{\datatype{\toNTree{\tD}}{\toNTree{\tA'}}}{\datatype{\toNTree{\tD'}}{\toNTree{\tB'}}}$.
  Moreover, since $\toNTree{\tA} \eqtypeaci \toNTree{\tB}$, we have
  $\toNTree{\tD} \eqtypeaci \toNTree{\tD'}$ and $\toNTree{\tA'} \eqtypeaci
  \toNTree{\tB'}$. Then $\pair{\tD}{\tD'}, \pair{\tA'}{\tB'} \in \R$ and $$
\begin{array}{r@{\quad=\quad}l}
\pair{\tA}{\tB} & \pair{\datatype{\tD'}{\tA'}}{\datatype{\tD'}{\tB'}} \in \Phieqtypeco(\R)
\end{array} $$
  
  \item $\pair{\tA}{\tB} =
  \pair{\functype{\tA'}{\tA''}}{\functype{\tB'}{\tB''}}$.  As in the previous
  case we resort to Def.~\ref{def:toNTree} to get
  $\pair{\toNTree{\tA}}{\toNTree{\tB}} =
  \pair{\functype{\toNTree{\tA'}}{\toNTree{\tA''}}}{\functype{\toNTree{\tB'}}{\toNTree{\tB''}}}$.
  Then, from $\toNTree{\tA} \eqtypeaci \toNTree{\tB}$ we get $\toNTree{\tA'}
  \eqtypeaci \toNTree{\tB'}$ and $\toNTree{\tA''} \eqtypeaci \toNTree{\tB''}$,
  by $\ruleEqupFunc$. Thus, $\pair{\tA'}{\tB'}, \pair{\tA''}{\tB''} \in \R$
  and $$
\begin{array}{r@{\quad=\quad}l}
\pair{\tA}{\tB} & \pair{\functype{\tA'}{\tA''}}{\functype{\tB'}{\tB''}} \in \Phieqtypeco(\R)
\end{array} $$
  
  \item $\pair{\tA}{\tB} = \pair{\maxuniontype{i \in
  1..n}{\tA_i}}{\maxuniontype{j \in 1..m}{\tB_j}}$ with $n + m > 2$ and $\tA_i,
  \tB_i \neq \iuniontype$. Then we need to distinguish three cases:
  \begin{itemize}
    \item If $m = 1$, then $n > 1$ and $\toNTree{\tA} =
    \nuniontype{i}{n}{\toNTree{\tA_i}}$. From $\toNTree{\tA} \eqtypeaci
    \toNTree{\tB}$ we have $\toNTree{\tA_i} \eqtypeaci \toNTree{\tB}$ for every
    $i \in 1..n$, by $\ruleEqupUnionL$. Then, $\pair{\tA_i}{\tB} \in \R$. and
    by definition of $\Phieqtypeco$ we get $$
\begin{array}{r@{\quad=\quad}l}
\pair{\tA}{\tB} & \pair{\maxuniontype{i \in 1..n}{\tA_i}}{\tB} \in \Phieqtypeco(\R)
\end{array} $$
    
    \item If $n = 1$, then $m > 1$ and $\toNTree{\tB} =
    \nuniontype{j}{m}{\toNTree{\tB_j}}$. Here, by $\ruleEqupUnionR$, we have
    $\toNTree{\tA} \eqtypeaci \toNTree{\tB_j}$ for every $j \in 1..m$, and thus
    $\pair{\tA}{\tB_j} \in \R$. Finally we conclude $$
\begin{array}{r@{\quad=\quad}l}
\pair{\tA}{\tB} & \pair{\tA}{\maxuniontype{j \in 1..m}{\tB_j}} \in \Phieqtypeco(\R)
\end{array} $$
    
    \item If $n > 1$ and $m > 1$, then both $\tA$ and $\tB$ are maximal union
    types with at least two elements each. Here it applies $\ruleEqupUnion$ and
    we have $\toNTree{\tA} = \nuniontype{i}{n}{\toNTree{\tA_i}}$,
    $\toNTree{\tB} = \nuniontype{j}{m}{\toNTree{\tB_j}}$, and there exists $$
\begin{array}{r@{\quad=\quad}l}
f : 1..n \to 1..m & \toNTree{\tA_i} \eqtypeco \toNTree{\tB_{f(i)}} \\
g : 1..m \to 1..n & \toNTree{\tA_{g(j)}} \eqtypeco \toNTree{\tB_j}
\end{array} $$
    Then, $\pair{\tA_i}{\tB_{f(i)}}, \pair{\tA_{g(j)}}{\tB_j} \in \R$ and we
    conclude $$
\begin{array}{r@{\quad=\quad}l}
\pair{\tA}{\tB} & \pair{\maxuniontype{i \in 1..n}{\tA_i}}{\maxuniontype{j \in 1..m}{\tB_j}} \in \Phieqtypeco(\R)
\end{array} $$
  \end{itemize}
\end{itemize}
\end{proof}

The algorithm is the result of adapting the scheme presented for
subtyping to the new relation $\subtypeal{\R}$. This is done by redefining
functions $\fChildrenName$ and $\fCheckName$ from the first and second
phase respectively (\cf Fig.~\ref{fig:pseudocode:equality2}). For the former
the only difference is on rule $\ruleEqupFunc$, where we need to add pair
$\pair{\tA'}{\tB'}$ instead of $\pair{\tB'}{\tA'}$, added for subtyping. We
could have omitted this by using the same rule for functional types as before
and resorting to the symmetry of the resulting relation (which does not depend
on this rule), but we wanted to emphasize the fact that phase one can easily be
adapted if needed. For the refinement phase we need to properly check the
premises of rules $\ruleEqupUnion$ and $\ruleEqupUnionR$, while the others
remain the same.

\begin{figure}
{\small $$
\begin{array}{c@{\kern-1em}c}
\begin{array}{rcl}
\multicolumn{3}{l}{\fChildren{p}:} \\ 
 & & \ttCase\ p\ \ttOf \\
 & & \quad \pair{\datatype{\tD}{\tA}}{\datatype{\tD'}{\tB}} \rightarrow \\
 & & \qqquad \set{\pair{\tD}{\tD'},\pair{\tA}{\tB}} \\
 & & \quad \pair{\functype{\tA'}{\tA''}}{\functype{\tB'}{\tB''}} \rightarrow \\
 & & \qqquad \set{\pair{\tA'}{\tB'},\pair{\tA''}{\tB''}} \\
 & & \quad \pair{\nuniontype{i}{n}{\tA_i}}{\nuniontype{j}{m}{\tB_j}} \rightarrow \\
 & & \qqquad \set{\pair{\tA_i}{\tB_j} \ |\ i\in1..n,\ j\in1..m} \\
 & & \quad \pair{\nuniontype{i}{n}{\tA_i}}{\tB},\ \tB \neq \iuniontype \rightarrow \\
 & & \qqquad \set{\pair{\tA_i}{\tB} \ |\ i\in1..n} \\
 & & \quad \pair{\tA}{\nuniontype{j}{m}{\tB_j}},\ \tA \neq \iuniontype \rightarrow \\
 & & \qqquad \set{\pair{\tA}{\tB_j} \ |\ j\in1..m} \\
 & & \quad \mathtt{otherwise} \rightarrow \\
 & & \qqquad \emptyset
\end{array}
&
\begin{array}{rcl}
\multicolumn{3}{l}{\fCheck{p}{F}:} \\
  & & \ttCase\ p\ \ttOf \\
  & & \quad \pair{a}{a} \rightarrow \\
  & & \qqquad \ttTrue \\
  & & \quad \pair{\datatype{\tD}{\tA}}{\datatype{\tD'}{\tB}} \rightarrow \\
  & & \qqquad \pair{\tD}{\tD'} \notin F \ttAnd \pair{\tA}{\tB} \notin F \\
  & & \quad \pair{\functype{\tA'}{\tA''}}{\functype{\tB'}{\tB''}} \rightarrow \\
  & & \qqquad \pair{\tA'}{\tB'} \notin F \ttAnd \pair{\tA''}{\tB''} \notin F \\
  & & \quad \pair{\nuniontype{i}{n}{\tA_i}}{\nuniontype{j}{m}{\tB_j}} \rightarrow \\
  & & \qqquad \forall i. \exists j.\ \pair{\tA_i}{\tB_j} \notin F \ttAnd \forall j. \exists i.\ \pair{\tA_i}{\tB_j} \notin F \\
  & & \quad \pair{\nuniontype{i}{n}{\tA_i}}{\tB},\ \tB \neq \iuniontype \rightarrow \\
  & & \qqquad \forall i.\ \pair{\tA_i}{\tB} \notin F \\
  & & \quad \pair{\tA}{\nuniontype{j}{m}{\tB_j}},\ \tA \neq \iuniontype \rightarrow \\
  & & \qqquad \forall j.\ \pair{\tA}{\tB_j} \notin F
\end{array}
\end{array} $$
} 
\caption{Pseudo-code of first (left) and second (right) phase for equivalence checking.}
\label{fig:pseudocode:equality2}
\end{figure}

With these considerations is easy to see that, in each iteration, $S$ consists
of pairs in the relation $\eqtypeup{W}$, getting $S \subseteq
\mathbin{\eqtypeaci}$ at the end of the process.

\begin{proposition}
The algorithm preserves the following invariant:
\begin{itemize}
  \itemsep0em
  \item \triple{W}{S}{F} is a partition of $U$
  \item $F$ is composed solely of invalid pairs
  \item $S \subseteq \Phieqtypeup{W}(S)$
\end{itemize}
\end{proposition}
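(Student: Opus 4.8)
The plan is to follow, essentially verbatim, the proof of the corresponding invariant for subtyping given earlier, adjusting only those steps sensitive to the differences between $\eqtypeup{\R}$ and $\subtypeup{\R}$ (namely the redefined $\fCheckName$ and $\fChildrenName$ of Fig.~\ref{fig:pseudocode:equality2}). First I would observe that all three conditions hold trivially at the start of the second phase, where $W = U$ and $S = F = \varnothing$: the partition is $\triple{U}{\varnothing}{\varnothing}$, there are no invalid pairs in $F$, and $\varnothing \subseteq \Phieqtypeup{U}(\varnothing)$. Preservation of the first condition (that $\triple{W}{S}{F}$ is a partition of $U$) is immediate from inspecting $\fGfpName$ and $\fInvalidateName$: each iteration only moves elements between $W$, $S$ and $F$, never creating or destroying pairs, so disjointness and coverage of $U$ are maintained.

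For the second condition I would argue, exactly as in the subtyping case, that $\fCheckName$ decides whether to suspend $p$ into $S$ or invalidate it by testing precisely the premises of the rule of Fig.~\ref{fig:equalitySchemesNary} whose conclusion matches the shape of $p$. A case analysis on $p$ covers $\ruleEqupRefl$ (empty support, moved to $S$ unconditionally), $\ruleEqupComp$, $\ruleEqupFunc$, $\ruleEqupUnionL$, and the two cases that differ from subtyping: for $p = \pair{\nuniontype{i}{n}{\tA_i}}{\nuniontype{j}{m}{\tB_j}}$ the check now verifies both $\forall i.\exists j.\,\pair{\tA_i}{\tB_j}\notin F$ and $\forall j.\exists i.\,\pair{\tA_i}{\tB_j}\notin F$, matching the two functions $f,g$ of $\ruleEqupUnion$; and for $p=\pair{\tA}{\nuniontype{j}{m}{\tB_j}}$ the check verifies $\forall j.\,\pair{\tA}{\tB_j}\notin F$, matching $\ruleEqupUnionR$. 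Hence whenever $p$ is sent to $F$, its support cannot be satisfied from $S\cup W$, and since $F$ already held only invalid pairs by induction, the coinductive reading guarantees $p$ is genuinely invalid, preserving the condition.

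The third condition, $S \subseteq \Phieqtypeup{W}(S)$, is the crux and I would handle it with the same $0/1$-indexing as the subtyping proof. If $p$ is moved to $S$, then $W_1 = W_0\setminus\set{p}$, $S_1 = S_0\cup\set{p}$, $F_1 = F_0$, and since $S_1\cup W_1 = S_0\cup W_0$ we have $\Phieqtypeup{W_0}(S_0) = \Phieqtypeup{W_1}(S_1)$; combining $S_0\subseteq\Phieqtypeup{W_0}(S_0)$ with the fact that the support of $p$ lies in $S_1\cup W_1$ yields $S_1\subseteq\Phieqtypeup{W_1}(S_1)$. If $p$ is invalidated, then $W_1 = (W_0\setminus\set{p})\cup Q$, $S_1 = S_0\setminus Q$, $F_1 = F_0\cup\set{p}$ with $Q$ the parents of $p$ lying in $S_0$; I would argue by contradiction that any hypothetical $s\in S_1$ with $s\notin\Phieqtypeup{W_1}(S_1)$ must have $p$ in its support (since $S_0$ was $\Phieqtypeup{W_0}(S_0)$-dense and only $p$ left $S_0\cup W_0$), making $s$ a parent of $p$; but all such parents are moved into $W_1$ by $\fInvalidateName$, contradicting $s\in S_1$ together with the partition property.

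The main obstacle, such as it is, lies in the union cases of the second and third conditions: the equivalence relation's $\ruleEqupUnion$ and $\ruleEqupUnionR$ impose \emph{two-sided} matching (the pair of functions $f$ and $g$, and the universal quantifier over all $\tB_j$), whereas their subtyping counterparts required only a single function $f$ or an existential. I must therefore verify that the redefined $\fCheckName$ of Fig.~\ref{fig:pseudocode:equality2} tests exactly these strengthened premises, so that the correspondence between ``$p$ survives the check'' and ``the support of $p$ under $\Phieqtypeup{W}$ meets $S\cup W$'' remains exact; once this correspondence is established, the density and contradiction arguments transfer unchanged, and the concluding step ($W=\varnothing$ at termination gives $S\subseteq\Phieqtypeup{\varnothing}(S)$, hence $S\subseteq\mathbin{\eqtypeaci}$ by coinduction) is identical to the subtyping development.
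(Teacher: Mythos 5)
Your proposal is correct and follows essentially the same route as the paper: the paper's own proof simply states that the analysis is identical to the subtyping case and then spells out the clauses of $\fCheckName$ (and $\fChildrenName$) that change for $\eqtypeup{W}$, which is exactly the structure you adopt, including the two-sided matching for the union cases and the unchanged density/contradiction argument for the third condition.
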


\begin{proof}
Analysis here is exactly the same as the case for subtyping. The only
difference is when proving the second condition. For that, we just need to
make sure that pairs are considered valid according to the rules of
Fig.\ref{fig:equalitySchemesNary}. We present next the cases that differ
between $\subtypeup{W}$ and $\eqtypeup{W}$:
\begin{itemize}
  \item $p = \pair{\functype{\tA}{\tB}}{\functype{\tA'}{\tB'}}$. This case
  corresponds to $\ruleEqupFunc$. Here the algorithm checks for
  $\pair{\tA}{\tA'}$ instead of $\pair{\tA'}{\tA}$. That is the reason why
  $\fChildrenName$ was redefined (\cf Fig.~\ref{fig:pseudocode:equality2}).
  
  \item $p = \pair{\nuniontype{i}{n}{\tA_i}}{\nuniontype{j}{m}{\tB_j}}$. In the
  case of subtyping, the algorithm checks if for each $\tA_i$ there exists
  $\tB_j$ not yet invalidated, to assure the existence of $f$ (\cf
  $\ruleSubupUnion$). Here it extends the check to verify that for each $\tB_j$
  there is a $\tA_i$ not yet invalidating, thus assuring the existence of $g$
  too (\cf $\ruleEqupUnion$).
  
  \item $p = \pair{\tA}{\nuniontype{j}{m}{\tB_j}}$. Here, by rule
  $\ruleEqupUnionR$, there must be a pair $\pair{\tA}{\tB_j}$ in $S \cup W$ for
  each $j \in 1..m$, to keep $p$ in the set of conditionally valid pairs.
\end{itemize}
\end{proof}

For the complexity analysis, notice that the size of the built universe is the
same as before and phase one is governed by phase two, which has at most
$\O(N N')$ iterations. For the cost of each iteration it is enough to analyze
the complexity of $\fCheckName$, since the rest of the scheme remains the same.
As we remarked before, the only difference in $\fCheckName$ between subtyping
and equality is in the cases involving unions. Here the worst case is when
checking rule $\ruleEqupUnion$ that requires the existence of two functions $f$
and $g$ relating elements of each type. This can be done in linear time by
maintaining tables with the count of non-invalidated pairs of descendants, as
indicated in Sec.~\ref{subtyping2:complexity}. Thus, the cost of an iteration
is $\O(d)$, resulting in an overall cost of $\O(N N' d)$ as before.



\subsection{Type Checking}


Let us revisit type-checking ($\fTypeCheckName$). As already discussed,
it linearly traverses the input term, the most costly operations
being those that deal with application and abstraction. These cases involve
calling $\fSubtypeName$. Notice that these calls do not depend directly on
the input to $\fTypeCheckName$. However, a linear correspondence can be
established between the size of the types being considered in $\fSubtypeName$
and the input to the algorithm, since such expressions are built from elements
of $\Gamma$ (the input context) or from annotations in the input term itself.
Consider for instance $\fSubtype{\varnothing}{A}{B}$ with $a$ and $b$ the size
of each type resp. This has complexity $\O(a^2 b^2 d)$ and, from the
discussion above, we can refer to it as $\O(n^4 d)$, where $n$ is the size of the input
to $\fTypeCheckName$ (\ie that of $\Gamma$ plus $t$). Similarly, we may say
that $\fUnfoldName$ is linear in $n$.

We now analyze the application and abstraction cases of the algorithm in
detail:

\begin{description}
  \item[Application]
  First it performs a linear check on the type to verify if it is a datatype.
  If so it returns. Otherwise, a second linear check is required
  ($\fUnfoldName$) in order to then perform as many calls to $\fSubtypeName$ as
  elements there are in the union of the functional types. This yields a local
  complexity of $\O(n^4 d^2)$.

  \item[Abstraction]
  First there are as many calls to $\fTypeCheckPName$ (the algorithm for
  type-checking patterns) as branches the abstraction has. Note that
  $\fTypeCheckPName$ has linear complexity in the size of its input and this
  call is instantiated with arguments $p_i$ and $\theta_i$ which occur in the
  original term. All these calls, taken together, may thus be considered to
  have linear time complexity with respect to the input of $\fTypeCheckPName$.
  Then it is necessary to perform a quadratic number (in the number of
  branches) of checks on compatibility. We have already analyzed that
  compatibility in the worst case involves checking subtyping. If we assume a
  linear number of branches w.r.t. the input, we obtain a total complexity of
  $\O(n^6 d)$ for this case.

\end{description}

Finally, the total complexity of $\fTypeCheckName$ is governed by the case of
the abstraction, and is therefore $\O(n^7 d)$.

\subsection{Prototype implementation}

A prototype in Scala is available~\cite{EV:2015:Prototipo}. It implements
$\fTypeCheckName$ but resorts to the efficient algorithm for subtyping and type
equivalence described above. It also includes further optimizations. For
example, following a suggestion in~\cite{DBLP:conf/tlca/CosmoPR05}, the
order in which elements in $W$ are selected for evaluation relies on detecting
strongly connected components, using
Tarjan's~\cite{DBLP:journals/jacm/DowneyST80} algorithm of linear cost and
topologically sorting them in reverse order. In the absence of cycles this
results in evaluating every pair only after all its children have already been
considered. For cyclic types pairs for which no order can be determined are
encapsulated within the same strongly connected component.



\section{Conclusions}
\label{sec:conclusions}

We address efficient type-checking for path polymorphism. We start off with the
type system of~\cite{DBLP:journals/entcs/VisoBA16} which includes singleton types, union types,
type application and recursive types. The union type constructor is assumed
associative, commutative and idempotent.  First we formulate a syntax-directed
presentation. Then we devise invertible coinductive presentations of
type-equivalence and subtyping. This yields a na\"ive but correct type-checking
algorithm. However, it proves to be inefficient (exponential in the size of the
type). This prompts us to change the representation of type expressions and use
automata techniques to considerably improve the efficiency. Indeed, the final
algorithm has complexity $\O(n^7 d)$ where $n$ is the size of the input and $d$
is the maximum arity of the unions occurring in it.

Regarding future work an outline of possible avenues follows. These are aimed
at enhancing the expressiveness of $\capp$ itself and then adapting the
techniques presented here to obtain efficient type checking algorithms.

\begin{itemize}  
  \item Addition of parametric polymorphism (presumably in the style of
  F$_{<:}$~\cite{DBLP:conf/tacs/CardelliMMS91,DBLP:books/daglib/0005958,DBLP:journals/iandc/ColazzoG05}). We
  believe this should not present major difficulties. 
  
  \item Strong normalization requires devising a notion of positive/negative
  occurrence in the presence of strong $\mu$-type equality, which is known not
  to be obvious~\cite[page 515]{DBLP:books/daglib/0032840}.
  
  \item A more ambitious extension is that of \emph{dynamic patterns}, namely
  patterns that may be computed at run-time, \ppc\ being the prime example of a
  calculus supporting this feature.
\end{itemize}




\bibliography{biblio}


\end{document}
